\documentclass[a4paper,UKenglish,cleveref, autoref]{lipics-v2019}
\usepackage{amsmath,hyperref,lineno,amssymb,amsthm,mathtools,url,verbatim}
\usepackage{nccmath}
\usepackage{pgfplots}
\pgfplotsset{grid style={red}}
\usepackage{filecontents}
\usepackage{microtype}
\usepackage{float}
\usepackage{tikz}
\usetikzlibrary{matrix}
\usepackage{tikz-cd}
\usepackage{algorithmic, algorithm2e}
\usepackage[inline]{enumitem}
\usetikzlibrary{decorations.markings}

\title{ Paired compressed cover trees guarantee a near linear parametrized complexity for all $k$-nearest neighbors search in an arbitrary metric space }
\titlerunning{Paired trees and near linear time algorithms for $k$-nearest neighbor search}

\author{Yury Elkin}{Materials Innovation Factory and Computer Science department, University of Liverpool, UK}{yura.elkin@gmail.com}{}{}

\author{Vitaliy Kurlin}{Materials Innovation Factory and Computer Science department, University of Liverpool, UK}{vitaliy.kurlin@gmail.com}{}{}

\authorrunning{Y.Elkin et. al.}

\Copyright{Yury Elkin, Vitaliy Kurlin}

\ccsdesc[100]{Theory of computation~Computational geometry}

\keywords{nearest neighbor search, parameterized complexity, cover tree, metric space}

\category{}

\relatedversion{}
\nolinenumbers
\supplement{}

\funding{The authors were supported by the £3.5M EPSRC grant EP/R018472/1 (2018-2023) 
}


\EventEditors{}
\EventNoEds{0}
\EventLongTitle{}
\EventShortTitle{}
\EventAcronym{}
\EventYear{}
\EventDate{}
\EventLocation{}
\EventLogo{}
\SeriesVolume{}
\ArticleNo{}
\nolinenumbers 

\usepackage{verbatim,amsfonts,amsmath,amssymb}
\usepackage{hyperref}
\usepackage{algorithmic}



\usepackage{tikz}
\usetikzlibrary{matrix}
\usetikzlibrary{arrows,positioning,shapes.geometric}
\usetikzlibrary{calc,trees,positioning,arrows,chains,shapes.geometric,%
    decorations.pathreplacing,decorations.pathmorphing,shapes,%
   shapes.symbols, trees,backgrounds}
      \tikzstyle{blockzm1} = [rectangle, draw, fill=white, 
    text width=1.5em, text centered, rounded corners, minimum height=1.5em, minimum width = 1.5em]
    
      \tikzstyle{blockzm2} = [rectangle, draw, fill=white, 
    text width=1.0em, text centered, rounded corners, minimum height=1.0em, minimum width = 1.0em]
    
      \tikzstyle{blockzm1r} = [rectangle, draw, fill=red!70, 
    text width=1.5em, text centered, rounded corners, minimum height=1.5em, minimum width = 1.5em]
      \tikzstyle{blockzm1o} = [rectangle, draw, fill=orange!80, 
    text width=1.5em, text centered, rounded corners, minimum height=1.5em, minimum width = 1.5em]
      \tikzstyle{blockzm1y} = [rectangle, draw, fill=yellow!50, 
    text width=1.5em, text centered, rounded corners, minimum height=1.5em, minimum width = 1.5em]
      \tikzstyle{blockzm1g} = [rectangle, draw, fill=green!60, 
    text width=1.5em, text centered, rounded corners, minimum height=1.5em, minimum width = 1.5em]
        \tikzstyle{blockzm1p} = [rectangle, draw, fill=blue!50, 
    text width=1.5em, text centered, rounded corners, minimum height=1.5em, minimum width = 1.5em]
    
   \tikzstyle{blockz} = [rectangle, draw, fill=white, 
    text width=1.75em, text centered, rounded corners, minimum height=1.75em, minimum width = 1.75em]
     \tikzstyle{blockzflexi} = [rectangle, draw, fill=white, 
    text centered, rounded corners]
    \tikzstyle{blockz2} = [rectangle, draw, fill=white, 
    text width=2em, text centered, rounded corners, minimum height=2em, minimum width = 2em]
        \tikzstyle{blockzL} = [rectangle, draw, fill=white, 
    text width=4em, text centered, rounded corners, minimum height=3em, minimum width = 4em]
    
       \tikzstyle{vertex}=[circle,fill=black!25,minimum size=20pt,inner sep=0pt]
\tikzstyle{selected vertex} = [vertex, fill=red!24]
\tikzstyle{edge} = [draw,thick,-]
\tikzstyle{weight} = [font=\small]
\tikzstyle{selected edge} = [draw,line width=5pt,-,red!50]
\tikzstyle{ignored edge} = [draw,line width=5pt,-,black!20]
    
\tikzset{
  treenode/.style = {align=center, inner sep=0pt, text centered,
    font=\sffamily},
  arn_n/.style = {treenode, circle, white, font=\sffamily\bfseries, draw=black,
    fill=black, text width=1.5em},
  arn_r/.style = {treenode, circle, red, draw=red, 
    text width=1.5em, very thick},
  arn_x/.style = {treenode, rectangle, draw=black,
    minimum width=0.5em, minimum height=0.5em}
}

\usepackage{thmtools}
\usepackage{thm-restate}

\usepackage{tikz-cd}
\usepackage{verbatim}
\usepackage{subcaption}
\numberwithin{algocf}{section}

\theoremstyle{definition}
\newtheorem{dfn}[algocf]{Definition}
\newtheorem{rem}[algocf]{Remark}
\newtheorem{pro}[algocf]{Problem}
\newtheorem{thmm}[algocf]{Theorem}

\theoremstyle{definition}
\newtheorem{exa}[algocf]{Example}
\theoremstyle{definition}

\newtheorem{cexa}[algocf]{Counterexample}
\theoremstyle{plain}

\newtheorem{cor}[algocf]{Corollary}
\newtheorem{lem}[algocf]{Lemma}

\usepackage{mathtools}
\DeclarePairedDelimiter\ceil{\lceil}{\rceil}

\newcommand{\Es}{\mathcal{E}}

\newcommand{\Child}{\mathrm{Children}}
\newcommand{\Desc}{\mathrm{Descendants}}
\newcommand{\nxt}{\mathrm{Next}}

\newcommand{\R}{\mathbb{R}}
\newcommand{\Z}{\mathbb{Z}}

\newcommand{\C}{\mathcal{C}}

\newcommand{\T}{\mathcal{T}}
\newcommand{\Sd}{\mathcal{S}}

\newcommand{\be}{\beta}

\newcommand{\De}{\Delta}
\newcommand{\ep}{\epsilon}

\newcommand{\la}{\lambda}

\newcommand{\NN}{\mathrm{NN}}

\newcommand{\rad}{\mathrm{diam}}

\newcommand{\bs}{\hfill $\blacksquare$}

\begin{document}
\maketitle

\begin{abstract}
This paper studies the important problem of finding all $k$-nearest neighbors to points of a query set $Q$ in another reference set $R$ within any metric space. 
Our previous work defined compressed cover trees and corrected the key arguments in several past papers for challenging datasets. 
In 2009 Ram, Lee, March, and Gray attempted to improve the time complexity by using pairs of cover trees on the query and reference sets.  
In 2015 Curtin with the above co-authors used extra parameters to finally prove a time complexity for $k=1$.
The current work fills all previous gaps and improves the nearest neighbor search based on pairs of new compressed cover trees. 
The novel imbalance parameter of paired trees allowed us to prove a better time complexity for any number of neighbors $k\geq 1$. 
\end{abstract}

\section{All $k$-nearest neighbors problem and overview of new results}
\label{sec:intro}

The nearest neighbor search was one of the first data-driven problems \cite{cover1967nearest}.
Briefly, the problem is to find all $k\geq 1$ nearest neighbors in a reference set $R$ for all points from a query set $Q$.
Both sets live in an ambient space $X$ with a distance $d$ satisfying all metric axioms.
For example, $X=\R^m$ with the Euclidean metric, a query set $Q$ can be a single point.
\medskip

The naive approach to find first nearest neighbors of points from $Q$ within $R$ is proportional to the product $|Q|\cdot|R|$ of the sizes of $Q, R$.
The aim in nearest neighbor search is to reduce the brute-force complexity $O(|Q|\cdot|R|)$ to a near linear time in the maximum size of $Q,R$.
\medskip
 
The \emph{exact} $k$-nearest neighbor problem asks for exact (true) $k$-nearest neighbors of every query point $q$. 
The \emph{probabilistic} neighbor problem \cite{manocha2007empirical} aims to find exact $k$-nearest neighbors with a given probability. 
The \emph{approximate} neighbor problem \cite{arya1993approximate}, \cite{krauthgamer2004navigating},\cite{andoni2018approximate},\cite{wang2021comprehensive} aims to find for every query point $q \in Q$ its approximate neighbor $r\in R$ satisfying $d(q,r) \leq (1+\epsilon)d(q,\NN(q))$, where $\ep>0$ is fixed and $\NN(q)$ is the exact first nearest neighbor of $q$.
\medskip

Arguably the first approach to the neighbor search was developed in 1974 by using a \emph{quadtree} \cite{finkel1974quad}, which hierarchically indexed a reference set $R\subset\R^2$.
In higher dimensions, a KD-tree \cite{bentley1975multidimensional} smarter subdivided a subset of $R\subset\R^m$ at every level into two subsets instead of $2^m$ subsets. 
\cite[Section~1]{elkin2021new} described further developments since 1970s in detail.
Section~\ref{sec:review} will technically compare the most recent results after formalizing the key concepts below.
\medskip

We formally define the \emph{$k$-nearest neighbor set} $\NN_k(q)$, which might contain several points in a singular case when these points are at the exactly same distance from a query point $q$.

\begin{dfn}[$k$-nearest neighbor set $\NN_k$]
\label{dfn:kNearestNeighbor}
Let $Q,R$ be finite subsets of a metric space. 
For any points $q \in Q$ and $r\in R$, the \emph{neighborhood} 
$N(q; u) = \{p \in R \mid d(q,p) \leq d(q,u)\}$ consists of all points that are non-strictly closer to $q$ than $u$ is close to $q$. 
For any integer $k\geq 1$, the \emph{$k$-nearest neighbor set} $\NN_k(q)$ consists of all points $u\in R$ such that $|N(q;u)|\geq k$ and any other $v\in R$ with $d(v,q) > d(u,q)$ has a larger neighborhood of size $|N(q;v)| > k$. 
\bs
\end{dfn}

For $Q = R = \{0,1,2,3\}$, the nearest neighbor sets of $0$ are
$\NN_1(0) = \{1\}$, $\NN_2(0) = \{2\}$, $\NN_3(0) = \{3\}$.
Due to the neighborhoods $N(1;0)=\{0,1,2\}=N(1;2)$, both sets $\NN_1(1) = \{0,2\}=\NN_2(1)$ consist of two points $0,2$ at equal distance to $1$.
The 3-nearest neighbor set $\NN_3(1) = \{3\}$ is a single point.
Because of a potential ambiguity of an exact $k$-nearest neighbor, Problem \ref{pro:knn} below allows any neighbors within a set $\NN_k(q)$.
In the above example, $0$ can be chosen as a 1st neighbor of $1$, then $2$ as a 2nd neighbor of $1$, or vice versa.

\begin{pro}[all $k$-nearest neighbors search]
\label{pro:knn}
For any finite sets $Q,R$ in a metric space and for any integer $k\geq 1$, design an algorithm to exactly find a distinct points $p_i\in\NN_{i}(q) \subseteq R$ for all $i = 1,..., k$ and all points $q\in Q$ such that the total complexity is near linear in $n=\max\{|Q|,|R|\}$ with hidden constants depending on structures of the sets $Q,R$. 
\bs
\end{pro}

We solve Problem \ref{pro:knn} by using new compressed cover trees on both sets $Q,R$.
\cite[Section~2]{beygelzimer2006cover} introduced a first version of a cover tree, which implicitly repeats every data point at infinitely many levels, see a comparison of two trees on the same $R$ in
Fig.~\ref{fig:implicitcompressed}.
The new tree $\T(R)$ in the right hand side of Fig.~\ref{fig:implicitcompressed} has nodes at levels $-1,0,1,2$, so its height is $|H(\T(R))|=4$.
\medskip

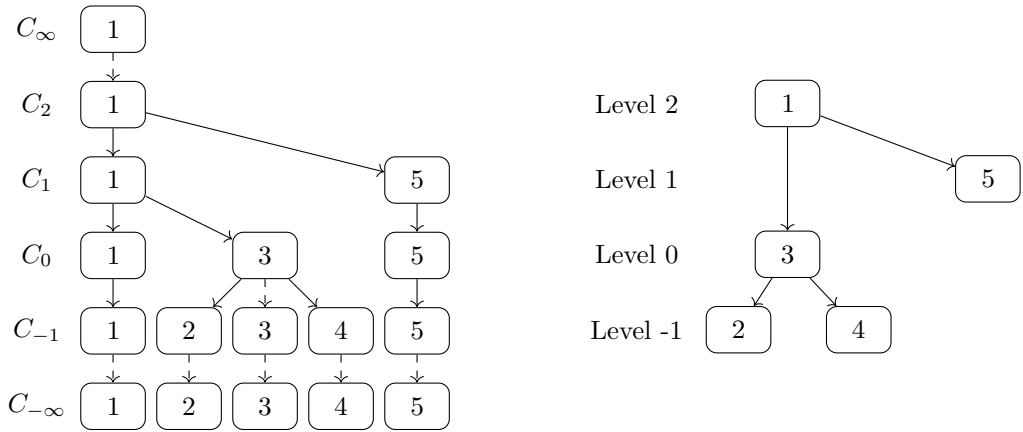
\begin{figure}
\centering
\begin{tikzpicture}[align=center, node distance = 1.0cm, scale = 0.45]
    
	\node (scaleminftext) {$C_{-\infty}$};
	\node [blockz, right of =scaleminftext] (scaleminf1) {1};
	\node [blockz, right of =scaleminf1] (scaleminf2) {2};
	\node [blockz, right of =scaleminf2] (scaleminf3) {3};
	\node [blockz, right of =scaleminf3] (scaleminf4) {4};
	\node [blockz, right of =scaleminf4] (scaleminf5) {5};

	\node [above of =scaleminftext](scalem1text) {$C_{-1}$};
	\node [blockz, above of =scaleminf1] (scalem11) {1};
	\node [blockz, above of =scaleminf2] (scalem12) {2};
	\node [blockz, above of =scaleminf3] (scalem13) {3};
	\node [blockz, above of =scaleminf4] (scalem14) {4};
	\node [blockz, above of =scaleminf5] (scalem15) {5};
	
	\node [above of =scalem1text](scale0text) {$C_{0}$};
	\node [blockz, above of =scalem11] (scale01) {1};
	\node [blockz, above of =scalem13] (scale03) {3};
	\node [blockz, above of =scalem15] (scale05) {5};	
	
	\node [above of =scale0text](scale1text) {$C_{1}$};
	\node [blockz, above of =scale01] (scale11) {1};
	\node [blockz, above of =scale05] (scale15) {5};
	
	\node [above of =scale1text](scale2text) {$C_{2}$};
	\node [blockz, above of =scale11] (scale21) {1};

	\node [above of =scale2text](scaleinftext) {$C_{\infty}$};
	\node [blockz, above of =scale21] (scaleinf1) {1};
    
      \draw[dashed,->] (scalem11) -> (scaleminf1);
      \draw[dashed,->] (scalem12) -> (scaleminf2);
      \draw[dashed,->] (scalem13) -> (scaleminf3);
      \draw[dashed,->] (scalem14) -> (scaleminf4);
      \draw[dashed,->] (scalem15) -> (scaleminf5);

	  \draw[->] (scale01) -> (scalem11);

	  \draw[->] (scale03) -> (scalem12); 
     \draw[dashed, ->] (scale03) -> (scalem13);

	  \draw[->] (scale05) -> (scalem15);

	  \draw[->] (scale03) -> (scalem14);
	  
	  \draw[dashed,->] (scaleinf1) -> (scale21);
	  \draw[->] (scale21) -> (scale11);
	  \draw[->] (scale15) -> (scale05);
	  
	  \draw[->] (scale11) -> (scale01);
	  \draw[->] (scale11) -> (scale03);
	  \draw[->] (scale21) -> (scale15);
      
\end{tikzpicture} 
\hspace{1.5cm}
\begin{tikzpicture}[align=center, node distance = 1.0cm, scale = 0.45 ,baseline={(0,-4.33)}]
    

	\node (scalem2text) {Level 2};
	\node[below of =scalem2text] (scalem1text) {Level 1};
	\node[below of =scalem1text] (scalem0text) {Level 0};
	\node[below of =scalem0text] (scalemm1text) {Level -1};
	\node [blockz,  right=25pt of scalem2text ] (node1) {1};
	\node [blockz,  right=100pt of scalem1text] (node5) {5};
	\node [blockz,  right=25pt of scalem0text] (node3) {3};
	\node [blockz,  right=5pt of scalemm1text] (node2) {2};
	\node [blockz,  right=50pt of scalemm1text] (node4) {4};

	  \draw[->] (node1) -> (node5);
	  \draw[->] (node1) -> (node3);
	  \draw[->] (node3) -> (node2);
	  \draw[->] (node3) -> (node4);

	
	
	

    

	  
	  
	  
      
\end{tikzpicture} 
  \caption{\textbf{Left:} an implicit cover tree for the finite set of reference points $R = \{1,2,3,4,5\}$ with the Euclidean distance $d(x,y) = |x-y|$, initially introduced in \cite[Section~2]{beygelzimer2006cover}. 
  \textbf{Right:} a new compressed cover tree in Definition~\ref{dfn:cover_tree_compressed} corrects the past complexity results for neighbor search.}
  \label{fig:implicitcompressed}
\end{figure}

The height parameter $|H(\T(R))|$ will play an important role to correctly estimate the time complexity, see Definition~\ref{dfn:depth}.
The new concept of the imbalance $I(\T(R),\T(Q))$ in Definition~\ref{dfn:imbalance} is the final parameter needed to express the search time complexities.
\medskip

Section~\ref{sec:review} discusses challenging cases that were overlooked in the past.
Sections~\ref{sec:cover_tree} and~\ref{sec:distinctive_descendant_set} introduce key concepts and prove auxiliary results.
Section~\ref{sec:generaldualtreetheory} estimates the time complexity for a traversal of paired trees in Theorem~\ref{thm:paired_cover_tree_traversal_bound}.
Section~\ref{sec:CorrectDualTreeKNN} describes Algorithm~\ref{alg:paired_tree_knn} whose correctness and complexity are proved in Theorems~\ref{thm:paired_tree_knn_correctness},~\ref{thm:paired_tree_knn_time} solving Problem~\ref{pro:knn} for $k\geq 1$.

\section{Review of past work and challenges in the $k$-nearest neighbor search}
\label{sec:review}

First we review the past searches by using a single tree on a reference set $R$.
Then we consider the faster searchers by using a paired trees (a dual tree approach) on both $Q,R$. 
\medskip

\textbf{Search on a single tree}.
The highly-cited paper \cite{beygelzimer2006cover} by Beygelzimer, Kakade, Langford in 2006 introduced the first concept of a cover tree to prove a worst-case time complexity for a 1st nearest neighbor search ($k=1$).
This tree was motivated by the earlier navigating nets \cite{krauthgamer2004navigating} and is called an \emph{implicit cover tree} to distinguish other versions in Fig.~\ref{fig:implicitcompressed} and~\ref{fig:tripleexample}.
\medskip

\cite[Theorem~6]{beygelzimer2006cover} implied that an implicit cover tree $T$ on a reference set $R$ can be built in time $O(c^6|R|\log|R|)$, where $|R|$ is the size (number of points) of $R$, $c$ is an expansion constant depending a structure of $R$, see Definition~\ref{dfn:expansion_constant}. 
Assuming that $T$ is already built, \cite[Theorem~5]{beygelzimer2006cover} expressed the time complexity to find a 1st nearest neighbor in $R$ for all points $q\in Q$ as $O(c^{12}|Q|\log|R|)$. 
However, both statements relied on \cite[Lemma~4.3]{beygelzimer2006cover} showing that a depth of a single point in a tree is $O(c^2\log|R|)$.
In 2015 \cite[Section~5.3]{curtin2015improving} pointed out \cite[Lemma~4.3]{beygelzimer2006cover} cannot be used to estimate the number of internal recursions in \cite[Theorem~5]{beygelzimer2006cover}. In \cite[section~3]{elkin2021new} it was shown that the original time complexity theorem for the cover tree construction algorithm \cite[Theorem~6]{beygelzimer2006cover} had similar issues. 
Otherwise, the past argument guarantees the total complexity $O(c^{12}|R|^2)$ for $Q=R$ as for a brute-force search for $k=1$.
In 2021 \cite[section~3]{elkin2021new} described typical sets $R$ that indeed require more careful estimates. 
\medskip

The above facts motivated a new concept of a compressed cover tree to clarify the worst-case complexity for all reference sets $R$, see Definition~\ref{dfn:cover_tree_compressed}, which 
\cite[Theorem~6.5]{elkin2021new} showed that, after constructing a compressed cover tree $\T(R)$ on $R$, we can find all $k$-nearest neighbors of $q$ in $R$ in time $O(c^{10}k\log (k) \cdot |H(\T(R))|)$  for any $k\geq 1$, where $|H(\T(R))|$ is the height of $\T(R)$, see Definition \ref{dfn:depth}.
The new height parameter $|H(\T(R))|$ has the upper bound $\log_2(\Delta)$, where the aspect ratio $\Delta$ is the diameter of $R$ divided by the minimum distance between points in $R$. 
Another approach uses paired trees on both sets $Q,R$.
\medskip

\textbf{Search on paired trees}. 
In 2001 Gray and Moore \cite{gray2001n} considered two KD-trees on the reference and query sets $Q,R$.
The approach was often called a dual-tree search.
Since the trees on $Q,R$ are independent, there is no real duality between them, so we call such trees \emph{paired}.
In 2006 \cite{beygelzimer2006coverExtend} proposed to consider paired cover trees.
\cite[Theorem~3.7]{beygelzimer2006coverExtend} claimed that \cite[Algorithm~5]{beygelzimer2006coverExtend} finds all 1st nearest-neighbors for a query set $Q$ in time $O(c^{16}\max\{|Q|,|R|\})$.
In the case $Q=R$, \cite[Algorithm~5]{beygelzimer2006coverExtend}
seems to return only the self-neighbor $q$ for any point $q\in R$.
A sketched proof of \cite[Theorem~3.7]{beygelzimer2006coverExtend} claimed similarities with the argument in \cite[Theorem~5]{beygelzimer2006cover} without details that could clarify the concerns in the single tree case.
\medskip

In 2009 \cite[Theorem~3.1]{ram2009linear} revisited the time complexity for all 1st nearest neighbors and claimed the upper bound $O(c(R)^{12}c(Q)^{4\kappa}\max\{|Q|,|R|\})$, where $c(Q),c(R)$ are expansion constants of the query set $Q$ and reference set $R$.
The degree of bichromaticity $\kappa$ is a parameter of both sets $Q,R$, see \cite[Definition~3.1]{ram2009linear}.
We have found the following issues.
\medskip

First, Counterexample~\ref{cexa:dualtreecode} shows that \cite[Algorithm~1]{ram2009linear} for $Q=R$ returns for any query point $q\in Q$ the same point $q$ as its first neighbor. 
Second, Remark~\ref{rem:kappa} explains several possible interpretations of \cite[Definition~3.1]{ram2009linear} for the parameter $\kappa$. 
Third, \cite[Theorem~3.1]{ram2009linear} similarly to \cite[Theorem~5]{beygelzimer2006cover} relied on the same estimate of recursions in the proof of \cite[Lemma~4.3]{beygelzimer2006cover}.
Counterexample~\ref{cexa:dualtreeproof} explains step-by-step why the time complexity of \cite[Algorithm~1]{ram2009linear} requires a clearer definition of $\kappa$ and more detailed justifications.
\medskip

In 2015 Curtin with the authors above  \cite{curtin2015plug} introduced other parameters:
the imbalance $I_t$ in \cite[Definition~3]{curtin2015plug} and $\theta$ in \cite[Definition~4]{curtin2015plug}.
These parameters measured extra recursions that occurred due to possible imbalances in trees built on $Q,R$, which was missed in the past. 
\cite[Theorem~2]{curtin2015plug} shows that, for constructed cover trees on a query set $Q$ and a reference set $R$, Problem~\ref{pro:knn} for $k=1$ (only 1st nearest neighbors) can be solved in time 
$$O\Big(c^{O(1)}\big(|R| + |Q| + I_t + \theta\big)\Big).\eqno{(*)}$$ 

\textbf{The main novelty} of this work is Theorem~\ref{thm:paired_tree_knn_time} proving a better time complexity in Problem~\ref{pro:knn} for any $k\geq 1$ (all $k$-nearest neighbors), which is a bit simplified here as
$$O\Big( c^{O(1)}k\log(k) \cdot\big( |H(\T(R))| + |Q| + I(\T(R),\T(Q)) \big)\Big).\eqno{(**)}$$
The new single imbalance $I$ of both compressed cover trees on $Q,R$, roughly combines the above parameters $I_t$ and $\theta$ from \cite{curtin2015plug}, see Definition~\ref{dfn:imbalance}.
The height $|H(\T(R))|$ is bounded above by $\log_2(\Delta)$, where $\De$ is the aspect ratio of the reference set $R$, see Definition~\ref{dfn:radius+d_min}. 
\medskip

In practice, we often need to quickly find nearest neighbors for a small number of data objects within a much larger dataset.
So the reference set $R$ is usually larger than a query set $Q$, which might be a single point.
Hence $(**)$ improves the previous time complexity in (*) 
from \cite[Theorem~2]{curtin2015plug} by replacing the large term $|R|$ by the height $|H(\T(R))|$.
\medskip

Applying the upper bound $I(\T(R),\T(Q))\leq |Q| \cdot |H(\T(R))|$ from
Lemma~\ref{lem:imbalanceproperties}, the complexity in $(**)$ reduces to \cite[Theorem~6.5]{elkin2021new}.
Hence the new imbalance $I(\T(R),\T(Q))$ for paired trees improves the recent approach \cite{elkin2021new} to Problem~\ref{pro:knn} based on a single tree.
\medskip

Finally, in the case of a full batch search for $Q=R$, Lemma \ref{lem:perfect_tree_imbalance} shows that if $\T(R)$ is a balanced tree, then $I(\T(R),\T(R)) = O(|R|)=O(|Q|)$.
Hence in all cases the new parameterized complexity is near linear in the maximum size $\max\{|R|,|Q|\}$ and any $k\geq 1$.

\begin{figure}[h]
\centering
   \begin{subfigure}{.30\textwidth}
  \centering
  \begin{tikzpicture}[align=center, node distance = 1.0cm, scale = 0.45]

	\node (scalem2text) {Level $i$};
	\node[below of =scalem2text] (scalem1text) {Level $i-1$};
	\node[below of =scalem1text] (scalem0text) {Level -1};
	\node [blockz,  right=30pt of scalem2text ] (node1) {$2^{i}$};
    \node [blockz,  right=25pt of scalem1text ] (node2) {1};
	\node [blockz,  right=32pt of scalem0text ] (node3) {0};
	

	  \draw[->] (node1) -> (node2);
	   \draw[->] (node2) -> (node3);

\end{tikzpicture}
  \label{fig:cover_tree_variant_one}
\end{subfigure}
\begin{subfigure}{.30\textwidth}
  \centering
  \begin{tikzpicture}[align=center, node distance = 1.0cm, scale = 0.45]

	\node (scalem2text) {Level $i$};
	\node[below of =scalem2text] (scalem1text) {Level $i-1$};
	\node[below of =scalem1text] (scalem0text) {Level -1};
	\node [blockz,  right=30pt of scalem2text ] (node1) {$2^{i}$};
    \node [blockz,  right=25pt of scalem1text ] (node2) {0};
	\node [blockz,  right=32pt of scalem0text ] (node3) {1};
	

	  \draw[->] (node1) -> (node2);
	   \draw[->] (node2) -> (node3);

\end{tikzpicture}
  \label{fig:cover_tree_variant_two}
\end{subfigure}
\begin{subfigure}{.30\textwidth}
  \centering
  \begin{tikzpicture}[align=center, node distance = 1.0cm, scale = 0.45, baseline={(0,-1.33)}]

	\node (scalem2text) {Level $i$};
	\node[below of =scalem2text] (scalem1text) {Level $i-1$};
	\node[below of =scalem1text] (scalem0text) {Level -1};
	\node [blockz,  right=30pt of scalem2text ] (node1) {0};
    \node [blockz,  right=50pt of scalem1text ] (node2) {$2^{i}$};
	\node [blockz,  right=32pt of scalem0text ] (node3) {1};
	

	  \draw[->] (node1) -> (node2);
	   \draw[->] (node1) -> (node3);

\end{tikzpicture}
  \label{fig:cover_tree_variant_three}
\end{subfigure}
\caption{For any integer $i\geq 2$, the reference set $R = \{0,1,2^{i}\}$ with Euclidean metric has at least three compressed cover trees $\T(R)$ satisfying all conditions of Definition~\ref{dfn:cover_tree_compressed}. }
\label{fig:cover_tree_easy_example}
\end{figure}
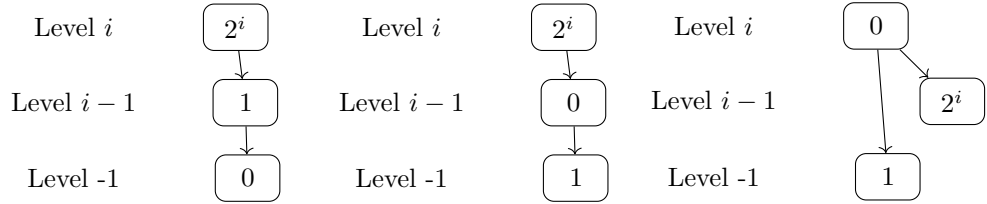

\section{A compressed cover tree and other tools for $k$-nearest neighbors}
\label{sec:cover_tree}


\begin{dfn}[a compressed cover tree $\T(R)$, {\cite[Definition~2.1]{elkin2021new}}]
\label{dfn:cover_tree_compressed}
Let $R$ be a finite set in an ambient space $X$ with a metric $d$. 
 \emph{A compressed cover tree} $\T(R)$ has the vertex set $R$ with a root $r \in R$ and a \emph{level} function $l : R \rightarrow \Z$ satisfying the conditions below.
\medskip
 
\noindent
(\ref{dfn:cover_tree_compressed}a)
\emph{Root condition} :  
the level of a root node $r$ is $l(r) \geq 1 +  \max_{p \in R-r}l(p)$.
\medskip

\noindent
(\ref{dfn:cover_tree_compressed}b)
\emph{Covering condition} : 
every non-root node $q \in R$ has a unique \emph{parent} $p$ and a level $l(q)<l(p)$ so that $d(q,p) \leq 2^{l(q)+1}$; 
then $p$ has a single link to its \emph{child} node $q$ in $\T(R)$. 
\medskip

\noindent
(\ref{dfn:cover_tree_compressed}c)
\emph{Separation condition} : 
for any $i \in \Z$, the \emph{cover set} $C_i = \{p \in R \mid l(p) \geq i\}$ has a minimum separation distance
$d_{\min}(C_i)
     = \min\limits_{p \in C_{i}}\min\limits_{q \in C_{i}-p} d(p,q) > 2^{i}$.
\medskip

\noindent
Since there is a 1-1 correspondence between all reference points in $R$ and all nodes of the tree $\T(R)$, the same notation $p$ can refer to a point of the reference set $R$ or to a node of a tree $\T(R)$.  
Set $l_{\max} = 1 +  \max\limits_{p \in R-r}l(p) $ and $l_{\min} = \min\limits_{p \in R}l(p)$.
For a node $p\in\T(R)$, let $\Child(p)$ denote the set of all children of $p$, including $p$ itself, which will be convenient later.
\medskip
 
For any node $p \in\T(R)$, define the \emph{node-to-root} path as a unique sequence of nodes $w_0,\dots,w_m$ such that $w_0 = p$, $w_m$ is the root and $w_{j+1}$ is the parent of $w_{j}$ for all $j=0,...,m-1$. 
A node $q \in\T(R)$ is called a \emph{descendant} of another node $p$ if $p$ belongs to the node-to-root path of $q$. 
A node $p$ is an \emph{ancestor} of $q$ if $q$ belongs to the node-to-root path of $p$. 
The set of all descendants of a node $p$ is denoted by $\Desc(p)$ and includes the node $p$ itself. 
\bs
\end{dfn} 

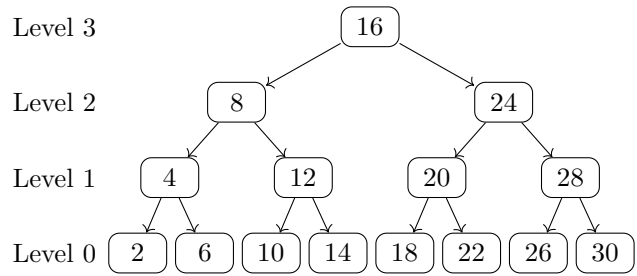
\begin{figure}[h]
\centering
\begin{tikzpicture}[align=center, node distance = 1.0cm, scale = 0.45]

	\node (scale3) {Level $3$};
	\node[below of =scale3] (scale2) {Level 2};
	\node[below of =scale2] (scale1) {Level 1};
	\node[below of =scale1] (scale0) {Level $0$};
	\node [blockzm1,  right=1pt of scale0 ] (node1) {2};
	\node [blockzm1,  right=26pt of scale0 ] (node3) {6};
	\node [blockzm1,  right=51pt of scale0 ] (node5) {10};
	\node [blockzm1,  right=76pt of scale0 ] (node7) {14};
	\node [blockzm1,  right=101pt of scale0 ] (node9) {18};
	\node [blockzm1,  right=126pt of scale0 ] (node11) {22};
	\node [blockzm1,  right=151pt of scale0 ] (node13) {26};
	\node [blockzm1,  right=176pt of scale0 ] (node15) {30};
		\node [blockzm1,  right=13pt of scale1 ] (node2) {4};
		\node [blockzm1,  right=63pt of scale1 ] (node6) {12};
		\node [blockzm1,  right=113pt of scale1 ] (node10) {20};
		\node [blockzm1,  right=163pt of scale1 ] (node14) {28};

		\node [blockzm1,  right=38pt of scale2 ] (node4) {8};
		\node [blockzm1,  right=138pt of scale2 ] (node12) {24};
		\node [blockzm1,  right=88pt of scale3 ] (node8) {16};
	

	
    \draw[->] (node2) -> (node1);
    \draw[->] (node2) -> (node3);
     \draw[->] (node6) -> (node5);
    \draw[->] (node6) -> (node7);
     \draw[->] (node10) -> (node9);
    \draw[->] (node10) -> (node11);
     \draw[->] (node14) -> (node13);
    \draw[->] (node14) -> (node15);

    \draw[->] (node8) -> (node4);
    \draw[->] (node8) -> (node12);

      \draw[->] (node4) -> (node2);
     \draw[->] (node4) -> (node6);
         
         \draw[->] (node12) -> (node10);
     \draw[->] (node12) -> (node14);

\end{tikzpicture}
\caption{Compressed cover tree $\T(R)$ built on set $R$ defined in Example \ref{exa:cover_tree_big} with root $16$. }
\label{fig:cover_tree_big}
\end{figure}

\begin{exa}[tree $\T(R)$ in Fig.~\ref{fig:cover_tree_big}]
\label{exa:cover_tree_big}
Let $\R$ be the real line with Euclidean metric $d = |x-y|$.
Set $R = \{2,4,6,...,30\}\subset\R$. 
Figure~\ref{fig:cover_tree_big} shows a compressed cover tree on the set $R$ with root $16$. 
The cover sets of $\T(R)$ are $ C_{0} = \{2,4,6,...,30\}$, $C_1 = \{4,8,12,16,20,24,28\}$, $C_{2} = \{8,16,24\}$ and $C_{3} = \{16\}$.
We check the conditions of Definition \ref{dfn:cover_tree_compressed} below.
 \begin{itemize}
\item Root condition (\ref{dfn:cover_tree_compressed}a): $\max_{p \in R \setminus \{16\}}d(p, 16) = 14$ and $\ceil{\log_2(14)} - 1= 3$, so $l(16) = 2$.
    \item Covering condition (\ref{dfn:cover_tree_compressed}b) : for any $i \in {0,1,2,3}$, let $p_i$ be any point with level $l(p_i) = i$. Then  $d(p_{0}, p_{1}) = 2 \leq 2^{1}$. $d(p_1, p_2) = 4 \leq 2^{2}$ and  $d(p_2, p_3) = 8 \leq 2^{3}$.
     \item  Condition (\ref{dfn:cover_tree_compressed}c) : $d_{\min}(C_{0}) = 2 > 1 = 2^{0}$, $d_{\min}(C_{1}) = 4 > 2 = 2^{1}, d_{\min}(C_{2}) = 8 > 2^{2}$. 
 \bs
 \end{itemize}
\end{exa}

A cover tree was defined in \cite[Section~2]{beygelzimer2006cover} as a tree version of a navigating net \cite[Section ~ 2.1]{krauthgamer2004navigating}. 
All nodes at a fixed level $i \in \Z\cup \{\pm\infty\}$ form the cover set $C_i$ from Definition~\ref{dfn:cover_tree_compressed}, which can have nodes at different levels in a compressed cover tree. 
Any point $p \in C_i$ had a single parent in the set $C_{i+1}$, which satisfied conditions (\ref{dfn:cover_tree_compressed}b,c). 
The original tree was called an implicit representation of a cover tree in
\cite[Section ~ 2]{beygelzimer2006cover}.
This tree in Figure \ref{fig:tripleexample} (left) contains infinite repetitions of every point $p\in R$ and will be called an \emph{implicit cover tree}.
\medskip

For practical implementations, the authors of \cite{beygelzimer2006cover} had to use another version of a cover tree that they named as an explicit representation of a cover tree and we call an \emph{explicit cover tree}. 
Here is the full defining quote at the end of \cite[Section~2]{beygelzimer2006cover}: "The explicit representation of the tree coalesces all nodes in which the only child is a self-child". 
In an explicit cover tree, if a subpath of every node-to-root path consists of all identical nodes without other children, all these identical nodes collapse to a single node, see Figure \ref{fig:tripleexample} (middle). 
\medskip

Since an explicit cover tree still repeats points, Definition~\ref{dfn:cover_tree_compressed} is well-motivated by including every point exactly once to save memory and simplify later algorithms, see Figure \ref{fig:tripleexample} (right).

\begin{figure}
    \centering
   \begin{tikzpicture}[align=center, node distance = 1.0cm, scale = 0.45]

    \node (scaleinf) {$l = \infty$};
    \node [below = 12.5pt of scaleinf](scalemidinf) {};
     \node[below=25pt of scaleinf] (scale6) {$l = 5$};
    \node[below of =scale6] (scale5) {$l = 4$};
    \node[below of =scale5] (scale4) {$l = 3$};
	\node[below of =scale4] (scale3) {$l = 2$};
	\node[below of =scale3] (scale2) {$l = 1$};
	\node[below=25pt of scale2] (scale1) {$l = -\infty$};

 	\node [blockzm2,  right=70pt of scaleinf ] (node1x) {$r$};
 	\node [blockzm2, below=25pt of node1x ] (node1a) {$r$};
 	\node [blockzm2, below of = node1a ] (node1b) {$r$};
 	\node [blockzm2,  below of = node1b ] (node1c) {$r$};
 	\node [blockzm2,   below of = node1c ] (node1d) {$r$};
 	\node [blockzm2,  below of = node1d ] (node1e) {$r$};
 	\node [blockzm2,   below =25pt of node1e ] (node1f) {$r$};
 	
 	\draw[dashed, ->] (node1x) -> (node1a);
 	    \draw[->] (node1a) -> (node1b);
 	     \draw[->] (node1b) -> (node1c);
 	      \draw[->] (node1c) -> (node1d);
 	       \draw[->] (node1d) -> (node1e);
 	      \draw[dashed, ->] (node1e) -> (node1f);

 		\node [blockzm2, right=10pt of node1b ] (node2b) {$p_{4}$};
 	\node [blockzm2,  below of = node2b ] (node2c) {$p_{4}$};
 	\node [blockzm2,   below of = node2c ] (node2d) {$p_{4}$};
 	\node [blockzm2,  below of = node2d ] (node2e) {$p_{4}$};
 	\node [blockzm2,   below =25pt of node2e ] (node2f) {$p_{4}$};
 	
 	    \draw[->] (node1a) -> (node2b);
 	    
 	    \draw[->] (node2b) -> (node2c);
 	      \draw[->] (node2c) -> (node2d);
 	       \draw[->] (node2d) -> (node2e);
 	      \draw[dashed, ->] (node2e) -> (node2f);

 		\node [blockzm2, right = 10pt of node2c ] (node3c) {$p_{3}$};
 	\node [blockzm2,   below of = node3c ] (node3d) {$p_{3}$};
 	\node [blockzm2,  below of = node3d ] (node3e) {$p_{3}$};
 	\node [blockzm2,   below =25pt of node3e ] (node3f) {$p_{3}$};
 	
 	        \draw[->] (node2b) -> (node3c);
 	      \draw[->] (node3c) -> (node3d);
 	       \draw[->] (node3d) -> (node3e);
 	      \draw[dashed, ->] (node3e) -> (node3f);

 		\node [blockzm2,   left = 10 pt of  node1d ] (node4d) {$p_{2}$};
 	\node [blockzm2,  below of = node4d ] (node4e) {$p_{2}$};
 	\node [blockzm2,   below =25pt of node4e ] (node4f) {$p_{2}$};

            \draw[->] (node1c)   -> (node4d);
 	       \draw[->] (node4d) -> (node4e);
 	      \draw[dashed, ->] (node4e) -> (node4f);

 	\node [blockzm2,  left = 10 pt of node4e ] (node5e) {$p_{1}$};
 	\node [blockzm2,   below =25pt of node5e ] (node5f) {$p_{1}$};
 	
 	\draw[->] (node4d) -> (node5e);
 	 \draw[dashed, ->] (node5e) -> (node5f);
 	
 	\node[rectangle, draw, fill=white, 
    text width=1.0em, text centered, rounded corners, minimum height=50pt, minimum width = 1.0em, right = 220pt of scalemidinf] (ynode1a) {$r$};
    
    	\node[rectangle, draw, fill=white, 
    text width=1.0em, text centered, rounded corners, minimum height=50pt, minimum width = 1.0em, below = 10pt of ynode1a] (ynode1b) {$r$};
    
    	\node[rectangle, draw, fill=white, 
    text width=1.0em, text centered, rounded corners, minimum height=80pt, minimum width = 1.0em, below = 10pt of ynode1b] (ynode1c) {$r$};

    \node[blockzm2, above right = -18 pt and 10 pt of ynode1b] (ynode2a) {$p_4$};
    
    	\node[rectangle, draw, fill=white, 
    text width=1.0em, text centered, rounded corners, minimum height=110pt, minimum width = 1.0em, below = 10pt of ynode2a] (ynode2b) {$p_4$};
    
    \node[rectangle, draw, fill=white, 
    text width=1.0em, text centered, rounded corners, minimum height=110pt, minimum width = 1.0em, right = 10pt of ynode2b] (ynode3a) {$p_3$};

        \node[blockzm2, above left = -15 pt and 10 pt of ynode1c] (ynode4a) {$p_2$};
    
    	\node[rectangle, draw, fill=white, 
    text width=1.0em, text centered, rounded corners, minimum height=55pt, minimum width = 1.0em, below = 10pt of ynode4a] (ynode4b) {$p_2$};
    
    \node[rectangle, draw, fill=white, 
    text width=1.0em, text centered, rounded corners, minimum height=55pt, minimum width = 1.0em, left = 10pt of ynode4b] (ynode5a) {$p_1$};

    \draw[->] (ynode1a) -> (ynode1b);
    \draw[->] (ynode1b) -> (ynode1c);
    \draw[->] (ynode1a) -> (ynode2a);
    \draw[->] (ynode2a) -| (ynode3a);
    \draw[->] (ynode2a) -> (ynode2b);
    
      \draw[->] (ynode1b) -> (ynode4a);
     \draw[->] (ynode4a) -| (ynode5a);
     \draw[->] (ynode4a) -> (ynode4b);
    

 	\node[blockzm2, right = 285pt of scale6] (xnode1) {$r$};
	\node[blockzm2, below right = 10 pt and 10 pt of xnode1] (xnode2) {$p_4$};
 	\node[blockzm2, below right = 70 pt and 10 pt of xnode1] (xnode4) {$p_2$};
 	\node[blockzm2, below right = 10 pt and 8 pt of xnode4] (xnode5) {$p_1$};
 	\node[blockzm2, below right = 12 pt and 8 pt of xnode2] (xnode3) {$p_3$};
 	
 	\draw[->]	(xnode1) -> (xnode2);
 	\draw[->]	(xnode2) -> (xnode3);
 	\draw[->]	(xnode1) -> (xnode4);
 	\draw[->]	(xnode4) -> (xnode5);

\end{tikzpicture}
   \caption{A comparison of past trees and a new compressed cover tree in Example \ref{exa:implicitexplicitexample}. \textbf{Left:} an implicit cover tree contains infinite repetitions of points. \textbf{Middle:} an explicit cover tree also with repeated points. \textbf{Right:} a compressed cover tree from Definition \ref{dfn:cover_tree_compressed} includes every point exactly once.  }
    \label{fig:tripleexample}
\end{figure}
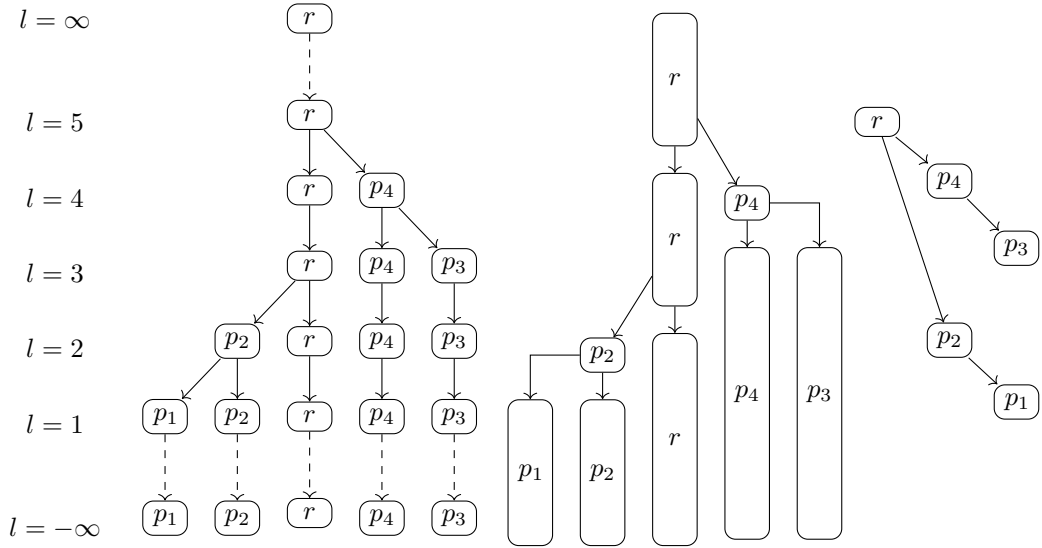

\begin{exa}[a short train line tree]
\label{exa:implicitexplicitexample}
Let $G$ be an unoriented metric graph consisting of two vertices $r,q$ connected by three different edges $e,h,g$ of lengths $|e| = 2^6$ , $|h| = 2^{3}$ , $|g| = 1$. 
Let $p_{4}$ be the middle point of the edge $e$. 
Let $p_{3}$ be the middle point of the subedge $(p_4 , q)$. 
Let $p_{2}$ be the middle point of the edge $h$.
Let $p_{1}$ be the middle point of subedge $(p_{2}, q)$. 
Set $R = \{p_1, p_2,p_3,p_4,r\}$. 
To construct a compressed cover tree $\T(R)$, set the level $l(p_i) = i$, choose the root $r$ to be the parent of both $p_2$ and $p_4$, $p_4$ to be the parent of $p_{3}$ and $p_{2}$ to be the parent of $p_{1}$. Then $\T(R)$ satisfies Definition~\ref{dfn:cover_tree_compressed}, see  three trees in Fig.~\ref{fig:tripleexample}.
\bs
\end{exa}

In any metric space $X$, let $\bar B(p,t)\subseteq X$ be the closed ball with a center $p$ and a radius $t$.
If this metric space is finite, $|\bar B(p,t)|$ denotes the number of points in $\bar B(p,t)$. 

\begin{dfn}[expansion constants $c$ \cite{beygelzimer2006cover} and $c_m$ {\cite[Definition~2.4]{elkin2021new}}]
\label{dfn:expansion_constant}
Let $R$ be a finite subset of a space $X$ with a metric $d$. 
The \emph{expansion constant} $c(R)$ is the smallest real number $c(R)\geq 2$ such that $|\bar{B}(p,2t)|\leq c(R) \cdot |\bar{B}(p,t)|$ for any $p\in R$ and radius $t\geq 0$.  
The \emph{minimized expansion constant} $c_m(R) = \inf\limits_{R\subseteq A\subseteq X}c(A)$ is minimized over all finite sets $A$ that cover $R$.  
\bs
\end{dfn}

\begin{restatable}[properties of $c_m$ {\cite[Lemma~2.5]{elkin2021new}}]{lem}{lemexpansionconstantproperty}
\label{lem:expansion_constant_property}
For any finite sets $R\subseteq U$ in a metric space, we have $c_m(R) \leq c_m(U)$, $c_m(R) \leq c(R)$.
\end{restatable}

\cite[Example~2.6]{elkin2021new} showed that the expansion constant $c(R)$ can be $O(|R|)$. Example~\ref{exa:minimized_normal_expansion_constant} shows that minimized expansion constant $c_m(R)$ can be much smaller than $c(R)$.

\begin{exa}[expansion constants]
\label{exa:minimized_normal_expansion_constant}
Let $(\R, d)$ be the Euclidean line. 
For any integer $n>10$, set $R = \{4^{i} \mid i \in [1,n]\}$ and let $Q = \{i \mid i \in [1,4^n]\}$. 
If $0<\epsilon < 10^{-9}$, then
 $\bar{B}(4^n, 2 \cdot 4^{n-1} - \epsilon) = \{2^n\} $ and $\bar{B}(2^n, 2( 2 \cdot 4^{n-1} - \epsilon)) = R$. 
 Therefore $c(R) = n$.
On the other hand, for any $q \in Q$ and any $t \in \R$, we have the balls $\bar{B}(q,t) = \mathbb{Z} \cap [q - t, q + t]$ and 
 $\bar{B}(q,t) = \mathbb{Z} \cap [q - 2t, q + 2t]$, hence $c(Q) \leq 4$.  Lemma \ref{lem:expansion_constant_property} implies that $c_m(R) \leq c_m(Q) \leq c(Q) \leq 4$.
\bs 
\end{exa}

\begin{restatable}[a distance bound on descendants {\cite[Lemma~2.8]{elkin2021new}}]{lem}{lemcompressedcovertreedescendantbound}
\label{lem:compressed_cover_tree_descendant_bound}
Let $R$ be a finite set with a metric $d$. 
In a compressed cover tree $\T(R)$, let $q$ be any descendant of a node $p$. Let the node-to-root path of $q$ have a node $u \in \Child(p) \setminus \{p\}$. Then $d(p,q) \leq 2^{l(u) + 2} \leq 2^{l(p) + 1}$. 
\bs
\end{restatable}

Lemma \ref{lem:packing} uses the idea of \cite[Lemma~1]{curtin2015plug} to show that if $S$ is a $\delta$-sparse subset of a metric space $X$, then $S$ has at most $(c_m(S))^\mu$ points in the ball $\bar{B}(p,r)$, where $c_m(S)$ is the minimized expansion constant of $S$, while $\mu$ depends on $\delta,r$. 

\begin{figure}
    \centering
    \input   \begin{tikzpicture}[scale = 1.0]




\node [circle,fill=red, red, inner sep=1pt, label = {$p$}] at (0,0) {};
\draw (0,0) circle[radius = 2];

\node [circle,fill=black, black, inner sep=1pt] at (-1,-1) {};
\draw[line width = .5pt, dash pattern=on 1pt off 2pt] (-1,-1) circle[ radius = 0.5];

\node [circle,fill=black, black, inner sep=1pt] at (1,0.75) {};
\draw[line width = .5pt, dash pattern=on 1pt off 2pt] (1,0.75) circle[ radius = 0.5];

 \draw[dashed] (1,0.75) -- (1.5,0.75);
 \node at (1.25,0.6) {\tiny $\delta / 2$}; 

\node [circle,fill=black, black, inner sep=1pt] at (-0.1,-0.4) {};
\draw[line width = .5pt, dash pattern=on 1pt off 2pt] (-0.1,-0.4) circle[ radius = 0.5];

\node [circle,fill=black, black, inner sep=1pt] at (0.75,-1.25) {};
\draw[line width = .5pt, dash pattern=on 1pt off 2pt] (0.75,-1.25) circle[ radius = 0.5];

\node [circle,fill=black, black, inner sep=1pt] at (-0.2,1.2) {};
\draw[line width = .5pt, dash pattern=on 1pt off 2pt] (-0.2,1.2) circle[ radius = 0.5];

\node [circle,fill=black, black, inner sep=1pt] at (-1.1,0.65) {};
\draw[line width = .5pt, dash pattern=on 1pt off 2pt] (-1.1,0.65) circle[ radius = 0.5];

 \draw[dashed] (0,0) -- node[below] {$t$} node[above] {} ++ (2,0);


\node at (-1.5,-0.25) {$S$};

\end{tikzpicture} 
    \caption{This volume argument proves Lemma~\ref{lem:packing}. By using an expansion constant, we can find an upper bound for the number of smaller balls of radius $\frac{\delta}{2}$ that can fit inside a larger ball $\bar{B}(p, t)$. }
    \label{fig:packingLemma}
\end{figure}
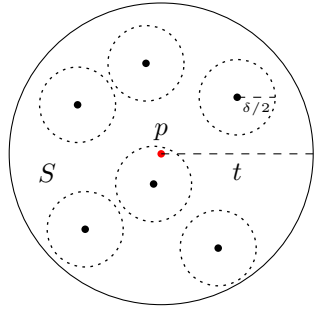

\begin{restatable}[packing {\cite[Lemma~2.9]{elkin2021new}}]{lem}{lempacking}
\label{lem:packing}
Let $S$ be a finite $\delta$-sparse set in a metric space $(X,d)$, so $d(a,b) > \delta$ for all $a,b \in S$. 
Then, for any point $ p \in X$ and any radius $t > \delta$, we have
$|\bar{B}(p, t)  \cap S | \leq (c_m(S))^{\mu}$, where $\mu = \lceil \log_2(\frac{4t}{\delta} + 1) \rceil $.
\bs
\end{restatable}

\cite[Section~1.1]{krauthgamer2004navigating} defined dim($X$) of a metric space $(X,d)$ as the minimum number $m$ such that every set $U \subseteq X$ can be covered by $2^{m}$ sets whose diameter is a half of the diameter of $U$. 
If $U$ is finite, Lemma \ref{lem:packing} for $\delta = \frac{r}{2}$ shows that
$\text{dim}(X) \leq \sup_{A \subseteq X}(c_m(A))^4 \leq \sup_{A \subseteq X}\inf_{A \subseteq B \subseteq X}(c(B))^4,$
where $A$ and $B$ are finite subsets of $X$. 
\medskip

Let $T(R)$ be an implicit cover tree \cite{beygelzimer2006cover} on a finite set $R$. 
\cite[Lemma~4.1]{beygelzimer2006cover} showed that the number of children of any node $p \in T(R)$ has the upper bound $(c(R))^4$. 
Lemma~\ref{lem:compressed_cover_tree_width_bound} generalizes \cite[Lemma~4.1]{beygelzimer2006cover} for a compressed cover tree.

\begin{restatable}[width bound {\cite[Lemma~2.10]{elkin2021new}}] {lem}{lemcompressedcovertreewidthbound}
\label{lem:compressed_cover_tree_width_bound}
Let $R$ be a finite subset of a metric space $(X,d)$. 
For any compressed cover tree $\T(R)$, any node $p$ has at most $(c_m(R))^4$ children at every level $i$, where $c_m(R)$ is the minimized expansion constant of the set $R$.
\bs
\end{restatable}

The following concept of a height was introduced and motivated in \cite[Definition~2.11]{elkin2021new}.

\begin{dfn}[the height of a compressed cover tree, {\cite[Definition~2.11]{elkin2021new}}]
\label{dfn:depth}
For a compressed cover tree $\T(R)$ on a finite set $R$,
the \emph{height set} is $H(\T(R))=\{l_{\max},l_{\min}\}\cup \{ i \mid C_{i-1} \setminus C_{i} \neq \emptyset\}$, whose size $|H(\T(R))|$ is called the \emph{height} of $\T(R)$.
\bs
\end{dfn}

By condition~(\ref{dfn:cover_tree_compressed}b), the height $|H(\T(R))|$ counts the number of levels $i$ whose cover sets $C_i$ include new points that were absent on higher levels.
Since any point can appear alone at its own level, $|H(\T)|\leq|R|$ is the worst case upper bound of the height.
The following parameters help prove an upper bound for the height $|H(\T(R))|$ in Lemma~\ref{lem:depth_bound}. 

\begin{dfn}[diameter and aspect ratio]
\label{dfn:radius+d_min}
For any finite set $R$ with a metric $d$, the \emph{diameter} is $\rad(R) = \max\limits_{p \in R}\max\limits_{q \in R}d(p,q)$.
The \emph{aspect ratio} \cite{krauthgamer2004navigating} is $\Delta(R) = \dfrac{\rad(R)}{d_{\min}(R)}$.
\bs
\end{dfn}

\begin{restatable}[an upper bound for the height $|H(\T(R))|$, see {\cite[Lemma~2.13]{elkin2021new}}]{lem}{lemdepthbound}
\label{lem:depth_bound}
Any finite set $R$ has the upper bound $|H(\T(R))|\leq 1+\log_2(\Delta(R))$.
\bs
\end{restatable}

If the aspect ratio $\Delta(R) = O(\text{Poly}(|R|))$ polynomially depends on the size $|R|$, then $|H(\T(R))| \leq O(\log(|R|))$.
All auxiliary lemmas are proved in appendix~\ref{sec:proofs}.

\section{Distinctive descendant sets and other tools for neighbor candidates}
\label{sec:distinctive_descendant_set}

This section we introduce a few more concepts and results from \cite{elkin2021new} for the new theorems in sections~\ref{sec:generaldualtreetheory} and~\ref{sec:CorrectDualTreeKNN}.
The first concept is a distinctive descendant set at a level $i$ of a node $p$ in a compressed cover tree.
Briefly, it is the set of descendants of a copy of $p$ at the level $i$ in the implicit cover tree on $R$.
Another important concept is a $\lambda$-point in Definition \ref{dfn:lambda-point}, which will be used in Algorithm~\ref{alg:UpdateCandidatesKNN} as an approximation for a $k$-nearest neighbor. 
Lemma~\ref{lem:beta_point} about a $\beta$-point will play a major role in the proof of main Theorem \ref{thm:paired_tree_knn_time}.

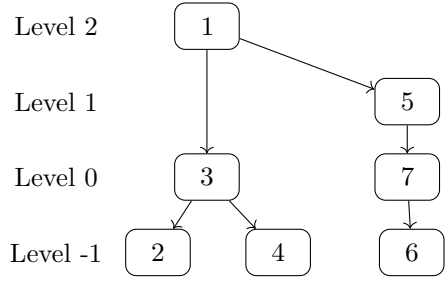
\begin{figure}
\centering
 \begin{tikzpicture}[align=center, node distance = 1.0cm, scale = 0.45]

	\node (scalem2text) {Level $2$};
	\node[below of =scalem2text] (scalem1text) {Level 1};
	\node[below of =scalem1text] (scalem0text) {Level 0};
	\node[below of =scalem0text] (scalemm1text) {Level -1};

	\node [blockz,  right=25pt of scalem2text ] (node1) {1};
	\node [blockz,  right=100pt of scalem1text] (node5) {5};
	\node [blockz,  right=25pt of scalem0text] (node3) {3};
	\node [blockz,  right=5pt of scalemm1text] (node2) {2};
	\node [blockz,  right=50pt of scalemm1text] (node4) {4};
	
	\node [blockz,  right=100pt of scalem0text] (node7) {7};
	\node [blockz,  right=100pt of scalemm1text] (node8) {6};

	  \draw[->] (node1) -> (node5);
	  \draw[->] (node1) -> (node3);
	  \draw[->] (node3) -> (node2);
	  \draw[->] (node3) -> (node4);
	  \draw[->] (node5) -> (node7);
	  \draw[->] (node7) -> (node8);

\end{tikzpicture} 
\caption{A compressed cover tree $\T(R)$ on the set $R = \{1,2,3,4,5,6,7\}$. Let $\Sd_i(p, \T(R))$ be a distinctive descendant set from Definition \ref{dfn:distinctive_descendant_set}. 
Then $V_2(1) = \emptyset, V_{1}(1) = \{5\}$, $V_{0}(1) = \{3,5,7\}$, and $\Sd_2(1, \T(R)) = \{1, 2,3,4,5,6,7\}$, $\Sd_1(1, \T(R)) = \{1,2,3,4\} $, $\Sd_{0}(1, \T(R)) = \{1\} $.}
\label{fig:uniqueDescendant}
\end{figure}

\begin{dfn}[Distinctive descendant sets]
\label{dfn:distinctive_descendant_set}
Let $R\subseteq X$ be a finite reference set with a cover tree $\T(R)$. 
For any node $p \in \T(R)$ in a compressed cove tree on a finite set $R$ and $i \leq l(p) - 1$, set
$V_{i}(p) =  \{u \in \Desc(p) \mid i \leq l(u)\leq l(p) - 1\}.$
If $i \geq l(p)$, then set $V_i(p) = \emptyset$. 
For any level $i \leq l(p) $, the \emph{distinctive descendant set} is
$\Sd_i(p, \T(R)) =   \Desc(p) \setminus \bigcup\limits_{u \in V_{i}(p)} \Desc(u)$ and has the size $|\Sd_i(p, \T(R)) |$. 
\bs
\end{dfn}

\begin{lem}[Distinctive descendant set inclusion property]
\label{lem:distinctive_descendant_inclusion}
In conditions of Definition \ref{dfn:distinctive_descendant_set} let $p \in R$ and let $i,j$ be integers satisfying
$l_{\min}(\T(R)) \leq i \leq j \leq l(p) - 1$.
Then $\Sd_i(p, \T(R)) \subseteq \Sd_j(p, \T(R))$.
\end{lem}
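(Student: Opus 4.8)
The plan is to reduce the claimed inclusion of distinctive descendant sets to a trivial monotonicity of the auxiliary index sets $V_i(p)$, and then invoke the elementary fact that passing to complements reverses inclusions.

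First I would check that both $V_i(p)$ and $V_j(p)$ are governed by the first clause of Definition~\ref{dfn:distinctive_descendant_set}: since $i\leq j\leq l(p)-1$ by hypothesis, neither index lies in the range $i\geq l(p)$ where the convention $V_i(p)=\emptyset$ applies, so
$V_i(p)=\{u\in\Desc(p)\mid i\leq l(u)\leq l(p)-1\}$ and $V_j(p)=\{u\in\Desc(p)\mid j\leq l(u)\leq l(p)-1\}$. Comparing the two defining conditions and using $i\leq j$, every $u$ satisfying $j\leq l(u)\leq l(p)-1$ also satisfies $i\leq l(u)\leq l(p)-1$; hence $V_j(p)\subseteq V_i(p)$.

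Next I would transfer this to the unions of descendant sets. From $V_j(p)\subseteq V_i(p)$ it follows immediately that $\bigcup_{u\in V_j(p)}\Desc(u)\subseteq\bigcup_{u\in V_i(p)}\Desc(u)$. Removing these (nested) unions from the common superset $\Desc(p)$ reverses the inclusion, so
\[
\Sd_i(p,\T(R))=\Desc(p)\setminus\bigcup_{u\in V_i(p)}\Desc(u)\ \subseteq\ \Desc(p)\setminus\bigcup_{u\in V_j(p)}\Desc(u)=\Sd_j(p,\T(R)),
\]
which is exactly the desired statement.

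I do not expect any genuine obstacle: the lemma is essentially a restatement of the monotonicity of $V_i(p)$ in the index $i$. The only point needing a moment's attention is confirming that both $V_i(p)$ and $V_j(p)$ fall under the non-trivial case of the definition rather than being vacuously empty, which the hypothesis $i\leq j\leq l(p)-1$ guarantees; the lower bound $l_{\min}(\T(R))\leq i$ is not needed for the argument and merely records that the level $i$ occurs in $\T(R)$.
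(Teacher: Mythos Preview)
Your argument is correct: the inclusion $V_j(p)\subseteq V_i(p)$ follows immediately from $i\leq j$, and passing to complements in $\Desc(p)$ yields the claim. The paper itself states this lemma without proof, presumably regarding it as an immediate consequence of Definition~\ref{dfn:distinctive_descendant_set}; your write-up supplies exactly the natural one-line justification.
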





\begin{dfn}[$\Child(p,i)$ and $\nxt(p,i,\T(R))$ for a compressed cover tree]
 \label{dfn:implementation_compressed_cover_tree}
In a compressed cover tree $\T(R)$ on a set $R$, for any level $i$ and a node $p \in R$, set $\Child(p,i) = \{ a \in \Child(p) \mid l(a) = i \}$. 
Let $\nxt(p,i,\T(R))$ be the maximal level $j$ satisfying $j < i$ and $\Child(p,i) \neq \emptyset$. 
For every node $p$, we store its set of children in a linked hash map so that
 \begin{enumerate}[label=(\arabic*)]
     \item any key $i$ gives access to $\Child(p,i)$,
     \item every $\Child(p,i)$ has access to $\Child(p,\nxt(p,i, \T(R)))$,
     \item we can directly access $\max \{j \mid \Child(p,j) \neq \emptyset\}$.
     \bs
 \end{enumerate}
 \end{dfn}
 
 Essential levels of a node $p \in \T(R)$ has 1-1 correspondence to the set consisting of all nodes containing $p$ in the explicit representation of cover tree in \cite{beygelzimer2006cover}, see Figure \ref{fig:tripleexample} middle. 

\begin{dfn}[Essential levels of a node]
\label{dfn:essential_levels_node}
Let $R\subseteq X$ be a finite reference set with a cover tree $\T(R)$. Let $q \in \T(R)$ be a node. Let $(t_i)$ for 
$i \in \{0,1,...,n\}$ be a sequence of $H(\T(R))$ in such a way that $t_0 = l(q)$, $t_n = l_{\min}(\T(R))$ and for all $i$ we have $t_{i+1} = \nxt(q, t_{i}, \T(R))$. Define the set of essential indices $\Es(q,\T(R)) = \{ t_{i} + 1 \mid i \in \{0,...,n\} \}$.
\bs
\end{dfn}

\begin{lem}[Number of essential levels]
\label{lem:number_of_explicit_levels}
Let $R\subseteq X$ be a finite reference set with a cover tree $\T(R)$. Then 
$\sum_{p \in R}|\mathcal{E}(p,\T(R))| \leq 2 \cdot |R|,$
where $\mathcal{E}(p,\T(R))$ appears in Definition \ref{dfn:essential_levels_node}. \bs
\end{lem}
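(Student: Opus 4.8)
The plan is to bound the total number of essential levels by a direct combinatorial argument on the tree structure, interpreting $\Es(p,\T(R))$ via the explicit cover tree and counting the new nodes created.

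First I would unpack Definition~\ref{dfn:essential_levels_node}: for a fixed node $q$, the essential indices $t_i+1$ correspond exactly to the levels $j$ such that $\Child(q,j)\neq\emptyset$ together with the "boundary" level $l(q)$ itself (the $t_0 = l(q)$ term gives the index $l(q)+1$, which can be viewed as the level at which the copy of $q$ first appears below its parent, or simply as $q$'s own essential level). So $|\Es(q,\T(R))|$ equals $1$ (for $q$ itself) plus the number of distinct levels $j<l(q)$ at which $q$ has at least one child. Concretely, $|\Es(q,\T(R))| = 1 + |\{\, j : \Child(q,j)\neq\emptyset \,\}|$, where $\Child(q,j)$ denotes the genuine children of $q$ at level $j$ (excluding the self-child $q$).

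With this reformulation the sum splits as
$$\sum_{p\in R}|\Es(p,\T(R))| = \sum_{p\in R} 1 + \sum_{p\in R}|\{\, j : \Child(p,j)\neq\emptyset \,\}| = |R| + \sum_{p\in R}|\{\, j : \Child(p,j)\neq\emptyset \,\}|.$$
So it suffices to show $\sum_{p\in R}|\{\, j : \Child(p,j)\neq\emptyset\,\}| \leq |R|$. For this I would set up an injection from the set of pairs $(p,j)$ with $\Child(p,j)\neq\emptyset$ into $R$: send such a pair $(p,j)$ to the node $a\in\Child(p,j)$ of \emph{maximal} level among the children of $p$ at level $j$ — but since all these children are at the same level $j$, instead pick any canonical representative, e.g. the one that is smallest in a fixed total order on $R$, or observe more simply that each non-root node $a$ has a unique parent $p$ and a unique level $l(a)$, hence contributes to exactly one pair $(p, l(a))$ with $\Child(p,l(a))\neq\emptyset$. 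Therefore the map $(p,j)\mapsto$ (the set $\Child(p,j)$) partitions all non-root nodes into groups indexed by pairs $(p,j)$, so the number of such pairs is at most the number of non-root nodes, which is $|R|-1 \leq |R|$. Combining, $\sum_{p\in R}|\Es(p,\T(R))| \leq |R| + (|R|-1) \leq 2|R|$.

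**The main obstacle** I anticipate is purely bookkeeping: making precise the correspondence between the sequence $(t_i)$ in Definition~\ref{dfn:essential_levels_node} and the set of levels carrying genuine children, in particular handling the endpoint conventions ($t_0 = l(q)$ and $t_n = l_{\min}(\T(R))$) so that no level is double-counted and the "$+1$" shift is tracked correctly. One must check that $\nxt(q,t_i,\T(R))$ skips precisely the empty levels, so that the $t_i$ for $i\geq 1$ enumerate exactly $\{\,j : \Child(q,j)\neq\emptyset\,\}$ possibly together with $l_{\min}$ — and whether the $l_{\min}$ boundary term can inflate the count by at most $1$ per node (which is already absorbed in the crude $|R|$ versus $|R|-1$ slack, or handled by noting $l_{\min}$ is itself a level with a genuine node). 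Once these off-by-one issues are pinned down, the counting argument above is immediate. I would also cross-check the bound against Figure~\ref{fig:uniqueDescendant} / Figure~\ref{fig:tripleexample} to confirm the constant $2$ is tight and the endpoint conventions are being read correctly.
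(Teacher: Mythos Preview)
Your proposal is correct. The paper proves the bound by induction on $|R|$: a single root has $|\Es(r,\T(\{r\}))|=1$, and inserting a new leaf $v$ with parent $p$ increases the total by at most $2$ (the leaf contributes $|\Es(v)|=1$, and $p$ gains at most one new essential level at $l(v)$). Your direct counting is the same increment unwound globally: the decomposition $|\Es(q)|=1+|\{j:\Child(q,j)\neq\emptyset\}|$ attributes one unit to each node for its own top level $t_0=l(q)$, and charges each remaining essential level of $q$ to a non-root child that witnesses it, giving the injection into $R\setminus\{\text{root}\}$. The two arguments are interchangeable; yours makes the bijection explicit, while the paper tracks the same $+2$ one insertion at a time. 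Your flagged obstacle about the $t_n=l_{\min}$ endpoint is indeed only a bookkeeping wrinkle in Definition~\ref{dfn:essential_levels_node}: the paper's own proof confirms the intended reading by asserting $|\Es(v)|=1$ for every fresh leaf, so the sequence is meant to terminate when $\nxt$ runs out of genuine children rather than forcibly appending $l_{\min}$.
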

\begin{proof}
Let us prove this claim by induction on size $|R|$. In basecase $R = \{r\}$ and therefore $|\mathcal{E}(r,\T(R))| = 1$.
Assume now that the claim holds for any tree $\T(R)$, where $|R| = m$ and let us prove that if we add any node $v \in X \setminus R$ to tree $\T(R)$, then $\sum_{p \in R}|\mathcal{E}(p,\T(R \cup \{v\}))| \leq 2 \cdot |R| + 2$. Assume that we have added $u$ to $\T(R)$, in such a way that $v$ is its new parent. Then $|\mathcal{E}(p,\T(R \cup \{v\}))| = |\mathcal{E}(p,\T(R))| + 1 $ and $|\mathcal{E}(v,\T(R \cup \{v\}))| = 1$. We have:
$$ \sum_{p \in R \cup \{u\}}|\mathcal{E}(p,\T(R))| = \sum_{p \in R}|\mathcal{E}(p,\T(R))| + 1 + |\mathcal{E}(v,\T(R \cup \{v\}))| \leq 2\cdot |R| + 2 \leq 2(|R \cup \{v\}|) $$
which completes the induction step. 
\end{proof}

\begin{algorithm}
\caption{This algorithm returns sizes of distinctive descendant set $\Sd_i(p, \T(R))$ for all essential levels $i \in \Es(p,\T(R))$}
\label{alg:cover_tree_distinctive_descendants}
\begin{algorithmic}[1]
\STATE \textbf{Function} : CountDistinctiveDescendants(Node $p$, a level $i$ of $\T(R)$)
\STATE \textbf{Output} : Number of distinctive descendants of node $p$. 

\IF{$i > l_{\min}(\T(Q))$}
\FOR {$q \in \text{Children}(p)$ having $l(p) = i-1$ or $q = p$}
\STATE Set $s = 0$
\STATE $j \leftarrow 1 + \nxt(q, i-1,\T(R))$
\STATE $s \leftarrow s + $CountDistinctiveDescendants($q$, $j$)
\ENDFOR
\ELSE
\STATE Set $s = 1$
\ENDIF
\STATE Set $|\Sd_i(p)| = s$ and \textbf{return} s
\end{algorithmic}
\end{algorithm}

\begin{restatable}[{\cite[Lemma~5.3]{elkin2021new}}]{lem}{lemdistinctivedescendantsprecompute}
\label{lem:distinctive_descendants_precompute}
Let $R$ be a finite subset of a metric space.
Let $\T(R)$ be a compressed cover tree on $R$. 
Then, Algorithm~\ref{alg:cover_tree_distinctive_descendants} computes the sizes $|\Sd_{i}(p, \T(R))|$ for all $p \in R$ and essential levels $i \in \Es(p,\T(R))$ in time $O(|R|)$.
\bs
\end{restatable}

Let $i$ be an arbitrary level. Set $j =  \nxt(p,i, \T(R)) $.   
Definition~\ref{dfn:implementation_compressed_cover_tree} of $\nxt$ implies that $l(q) \notin [j+1,i-1]$ for all $q \in \Desc(p)$. Therefore we have $V_i(p) = V_j(p)$ and $\Sd_i(p, \T(R)) = \Sd_j(p, \T(R))$.  It follows that $|\Sd_i(p, \T(R))|$ can change only for the indices $i \in H(\T(R))$.  Lemma~\ref{lem:distinctive_descendants_precompute} shows that all the essential distinctive descendants sets of compressed cover tree $\T(R)$ can be precomputed in $O((c_m(R))^4\cdot|R| \cdot |H(\T(R))|)$ time.  
\medskip

Recall the neighborhood $N(q;r) = \{p \in C \mid d(q,p) \leq d(q,r)\}$ from Definition~\ref{dfn:kNearestNeighbor}.

\begin{dfn}[$\la$-point {\cite[Definition~5.6]{elkin2021new}}]
\label{dfn:lambda-point}
Fix a query point $q$ in a metric space $(X,d)$ and fix any level $i \in \Z$. 
Let $\T(R)$ be its compressed cover tree on a finite reference set $R \subseteq X$. 
Let $C$ be a subset of a cover set $C_i$ from Definition~\ref{dfn:cover_tree_compressed} satisfying $\sum_{p \in C}|\Sd_i(p, \T(R))| \geq k$, where $\Sd_i(p, \T(R))$ is the distinctive descendant set from Definition \ref{dfn:distinctive_descendant_set}.
For any $k\geq 1$, define $\la_k(q,C)$ as a point $\la\in C$ that minimizes $d(q,\la)$ subject to $\sum\limits_{p \in N(q;\la)}|\Sd_i(p, \T(R)) |\geq k$. 
\bs
\end{dfn}

\begin{algorithm}
\caption{Computing a $\lambda$-point of Definition \ref{dfn:lambda-point} in line 3 of Algorithm \ref{alg:UpdateCandidatesKNN},
 {\cite[Lemma~5.6]{elkin2021new}}}
\label{alg:lambda}
\begin{algorithmic}[1]
\STATE \textbf{Input:} A point $q \in X$, a subset $C$ of level set $C_i$ of a compressed cover tree $\T(R)$, an integer $k \in \Z$
\STATE Initialize an empty max-binary heap $B$ and an empty array $D$.
\FOR{$p \in C$}
\STATE add $p$ to $B$ with priority $d(q,p)$
\IF{$|H| \geq k$} 
\STATE remove the point with a maximal value from $B$
\ENDIF
\ENDFOR
\STATE Transfer points from the binary heap $B$ to the array $D$ in reverse order. 
\STATE Find the smallest index $j$ such that $\sum^{j}_{t = 0}\Sd_i(D[t] , \T(R)) \geq k$.
\STATE \textbf{return} $\lambda = D[j]$. 
\end{algorithmic}
\end{algorithm}

\begin{restatable}[time complexity of a $\lambda$-point {\cite[Lemma~5.7]{elkin2021new}}]{lem}{lemtimelambdapoint}
\label{lem:time_lambdapoint}
In the notations of Definition~\ref{dfn:lambda-point}, the time complexity of Algorithm \ref{alg:lambda} is $O(|C|\log(k)\max_i |R_i|)$.  \bs
\end{restatable}

\begin{restatable}[separation lemma {\cite[Lemma~5.8]{elkin2021new}}] {lem}{lemseparation}
\label{lem:separation}
In the conditions of Definition~\ref{dfn:distinctive_descendant_set}, let $p\neq q$ be nodes of $\T(R)$ with $l(p) \geq i$, $l(q) \geq i$. Then $\Sd_i(p , \T(R)) \cap \Sd_{i}(q, \T(R)) = \emptyset$.  \bs
\end{restatable}

\begin{restatable}[sum lemma {\cite[Lemma~5.9]{elkin2021new}}] {lem}{lemsum}
\label{lem:sum}
In the notations of Definition~\ref{dfn:lambda-point} for any subset $V \subseteq C$ of a set $C$, we have $|\bigcup\limits_{p \in V}\Sd_i(p,\T(R))| = \sum\limits_{p \in V} |\Sd_i(p, \T(R))|.$
\bs
\end{restatable}

By Lemma~\ref{lem:sum} we can assume in Definition~\ref{dfn:lambda-point} that $|\bigcup_{p \in C}\Sd_i(p, \T(R)) | \geq k$.

\begin{restatable}[{\cite[Lemma~5.11]{elkin2021new}}]{lem}{lemdistinctivedescendantchildlevel}
\label{lem:distinctive_descendant_child_level}
In the notations of Definition~\ref{dfn:distinctive_descendant_set}, let $p \in \T(R)$ be any node. 
If $w \in \Sd_i(p,\T(R))$ then either $w = p$ or there exists $a \in \Child(p) \setminus \{p\}$ such that $l(a) < i$ and $w \in \Desc(a)$.\bs
\end{restatable}

\begin{lem}
\label{lem:distinctive_descendant_distance}
In the notations of Definition~\ref{dfn:distinctive_descendant_set}, let $p \in \T(R)$ be any node. 
If $w \in \Sd_i(p,\T(R))$ then $d(w,p) \leq 2^{i+1}$.\bs
\end{lem}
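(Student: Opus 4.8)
The plan is to bound $d(w,p)$ by following the unique path in $\T(R)$ from $w$ up to $p$ and applying the triangle inequality along the relevant child edge. By Lemma~\ref{lem:distinctive_descendant_child_level}, there are exactly two cases for $w \in \Sd_i(p,\T(R))$: either $w = p$, in which case $d(w,p) = 0 \leq 2^{i+1}$ trivially, or there exists a child $a \in \Child(p)\setminus\{p\}$ with $l(a) < i$ and $w \in \Desc(a)$. The second case is the one requiring work.

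\smallskip

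\noindent\textbf{Key steps.} First I would invoke Lemma~\ref{lem:compressed_cover_tree_descendant_bound} with the node $a$ playing the role of the intermediate node on the node-to-root path of $w$: since $w \in \Desc(a)$ and $a \in \Child(p)\setminus\{p\}$, that lemma gives $d(p,w) \leq 2^{l(a)+2}$. Second, I would use the inequality $l(a) < i$, i.e.\ $l(a) \leq i-1$ since levels are integers, to conclude $2^{l(a)+2} \leq 2^{(i-1)+2} = 2^{i+1}$. Combining, $d(w,p) \leq 2^{l(a)+2} \leq 2^{i+1}$, which is the claim. The only subtlety is that Lemma~\ref{lem:compressed_cover_tree_descendant_bound} is stated for ``$q$ a descendant of $p$ whose node-to-root path passes through $u \in \Child(p)\setminus\{p\}$'', so I need to check that $w$'s node-to-root path indeed passes through $a$; but $w \in \Desc(a)$ means precisely that $a$ lies on the node-to-root path of $w$, and since $a$ is a child of $p$, that path continues through $p$, so the hypothesis of Lemma~\ref{lem:compressed_cover_tree_descendant_bound} is met with $u = a$.

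\smallskip

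\noindent\textbf{Main obstacle.} There is essentially no obstacle here — the statement is a one-line corollary of the two earlier lemmas. The only thing to be careful about is the integer rounding step $l(a) < i \Rightarrow l(a) \leq i - 1$ (valid because the level function takes values in $\Z$ by Definition~\ref{dfn:cover_tree_compressed}) and the edge case $w = p$, which must be handled separately since then there is no child $a$ to apply Lemma~\ref{lem:compressed_cover_tree_descendant_bound} to. Both are immediate.
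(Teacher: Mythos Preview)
Your proposal is correct and follows essentially the same approach as the paper: invoke Lemma~\ref{lem:distinctive_descendant_child_level} to split into the trivial case $w=p$ and the case $w\in\Desc(a)$ for some $a\in\Child(p)\setminus\{p\}$ with $l(a)<i$, then apply Lemma~\ref{lem:compressed_cover_tree_descendant_bound} with $u=a$ to get $d(p,w)\le 2^{l(a)+2}\le 2^{i+1}$. Your write-up is in fact slightly more careful than the paper's (you make explicit the integer step $l(a)<i\Rightarrow l(a)\le i-1$ and verify that $a$ lies on the node-to-root path of $w$).
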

\begin{proof}
By Lemma \ref{lem:distinctive_descendant_child_level} either $w = \gamma$ or $w \in \Desc(a)$ for some $a \in \Child(\gamma) \setminus \{\gamma\}$ for which $l(a) < i$.
If $w = \gamma$, then trivially $d(\gamma, w) \leq 2^{i}$. Else $w$ is a descendant of $a$, which is a child of node $\gamma$ on level $i-1$ or below, therefore by Lemma \ref{lem:compressed_cover_tree_descendant_bound} we have  $d(\gamma, w) \leq 2^{i}$ anyway.
\end{proof}

 \begin{restatable}[{\cite[Lemma~5.12]{elkin2021new}}] {lem}{lemchildsetequivalence}
\label{lem:child_set_equivalence}
Let $R$ be a finite subset of a metric pace.
Let $\T(R)$ be a compressed cover tree on $R$. 
Let $R_i \subseteq C_i$, where $C_i$ is the $i$th cover set of $\T(R)$. Set $\C(R_i) = \{a \in \Child(p) \text{ for some }p \in R_i \mid l(a) \geq i-1 \} $.
Then 
$$\bigcup_{p \in \C(R_i)}\Sd_{i-1}(p, \T(R)) = \bigcup_{p \in R_i}\Sd_{i}(p, \T(R)).$$
\end{restatable}

 \begin{restatable}[$\be$-point {\cite[Lemma~5.13]{elkin2021new}}]{lem}{lembetapoint}
\label{lem:beta_point}
In the notations of Definition~\ref{dfn:lambda-point}, let $C\subseteq C_i$ so that $\cup_{p \in C}\Sd_i(p, \T(R))$ contains all $k$-nearest neighbors of $q$. 
Set $\la = \la_k(q,C)$. 
Then $R$ has a point $\beta$ among the first $k$ nearest neighbors of $q$ such that $d(q,\lambda) \leq  d(q,\beta) + 2^{i+1}$.\bs
\end{restatable}

\section{A parametrized complexity of a traversal algorithm for paired trees}
\label{sec:generaldualtreetheory}

To fix the problem occurring in Counterexample \ref{cexa:dualtreeproof}, Definition \ref{dfn:imbalance} will introduce below a new concept of the imbalance inspired by \cite[Definitions~2-4]{curtin2015plug}. 
Any balanced tree from Example \ref{exa:cover_tree_big} will have a near linear imbalance $I(\T(R), \T(R)) = O(|R|)$. 
\medskip

Main Theorem~\ref{thm:paired_cover_tree_traversal_bound} of this section will express the time complexity of neighbor search based on paired trees in terms of $|Q|$, the height $|H(\T(R))|$, the new imbalance $I(\T(Q), \T(R))$,  and time complexities of auxiliary functions UpdateCandidates() and FinalCandidates(). 

\begin{dfn}[the imbalance $I$ of paired compressed cover trees]
\label{dfn:imbalance}
Let $R,Q$ be finite subsets of an ambient space. 
Let $\T(Q),\T(R)$ be compressed cover trees on $Q,R$, respectively.
Fro any $i\in\Z$, set $H_i(\T(R)) = H(\T(R)) \cap [l_{\min}(\T(R)) , i]$, where the height set $H(\T(R))$ was introduced in Definition \ref{dfn:depth}. 
Then the \emph{imbalance} of the paired trees $(\T(Q), \T(R))$ is 
 $$I(\T(Q), \T(R)) = \sum_{q \in \T(Q) }|H_{l(q)}(\T(R))|.\eqno{(\ref{dfn:imbalance})}$$ 
\end{dfn}

Definition~\ref{dfn:imbalance} combines the parameter $\theta$ from \cite[Definition~4]{curtin2015plug} and the original imbalance $I_t$ from \cite[Definition~3]{curtin2015plug} into the single new imbalance $I(\T(Q), \T(R))$. 
\medskip

Lemma~\ref{lem:imbalanceproperties} below gives a rough upper bound for the imbalance of paired cover trees. 
This rough bound weakens the new time complexity from main Theorem~\ref{thm:paired_tree_knn_time} based on paired trees
to the previous time complexity from \cite[Theorem~6.4]{elkin2021new} based on a single tree. 

\begin{lem}[imbalance bounds]
\label{lem:imbalanceproperties}
Let $R,Q$ be finite subsets of some ambient metric space. 
Let $\T(Q),\T(R)$ be compressed cover trees on $Q$,$R$, respectively. Then
$$I(\T(Q),\T(R)) \leq |Q| \cdot |H(\T(R))|.\eqno{(\ref{lem:imbalanceproperties})}$$
 If $\T(Q) = \T(R)$, then $I(\T(R),\T(R)) \geq |H(\T(R))|$ and $I(\T(R),\T(R)) \geq |R|$.
\end{lem}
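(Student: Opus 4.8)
\textbf{Proof proposal for Lemma~\ref{lem:imbalanceproperties}.}

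The plan is to read off all three inequalities directly from the defining formula $(\ref{dfn:imbalance})$, namely $I(\T(Q),\T(R)) = \sum_{q\in\T(Q)}|H_{l(q)}(\T(R))|$, together with the elementary observation that $H_i(\T(R)) = H(\T(R))\cap[l_{\min}(\T(R)),i]$ is a nested family of subsets of $H(\T(R))$: it grows from $\emptyset$ (when $i<l_{\min}(\T(R))$) up to all of $H(\T(R))$ (when $i\geq l_{\max}(\T(R))$), and in particular $|H_i(\T(R))|\leq|H(\T(R))|$ for every $i$.

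For the first bound $(\ref{lem:imbalanceproperties})$, I would simply apply this pointwise estimate $|H_{l(q)}(\T(R))|\leq|H(\T(R))|$ to each of the $|Q|$ summands, since the vertex set of $\T(Q)$ is exactly $Q$. Summing over all $q\in\T(Q)$ gives $I(\T(Q),\T(R))\leq\sum_{q\in Q}|H(\T(R))| = |Q|\cdot|H(\T(R))|$, which is the claim. No structural facts about the trees beyond the vertex-set identification are needed here.

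For the two lower bounds in the case $\T(Q)=\T(R)=:\T$, write $h = l_{\max}(\T)$, so the root $r$ has $l(r)=h$ by the root condition (\ref{dfn:cover_tree_compressed}a), and hence $h\geq i$ for every level $i$ appearing in the tree. First, since $H_h(\T(R)) = H(\T(R))\cap[l_{\min}(\T(R)),h] = H(\T(R))$ (because every element of $H(\T(R))$ lies in $[l_{\min}(\T(R)),l_{\max}(\T(R))] = [l_{\min}(\T(R)),h]$ by Definition~\ref{dfn:depth}), the single summand indexed by the root $r$ already contributes $|H_{l(r)}(\T(R))| = |H(\T(R))|$, and all other summands are nonnegative, so $I(\T(R),\T(R))\geq|H(\T(R))|$. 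Second, for every node $q\in\T(R)$ the level $l(q)$ satisfies $l(q)\geq l_{\min}(\T(R))$, so the interval $[l_{\min}(\T(R)),l(q)]$ contains $l_{\min}(\T(R))$, and $l_{\min}(\T(R))\in H(\T(R))$ by Definition~\ref{dfn:depth}; hence $|H_{l(q)}(\T(R))|\geq 1$ for each of the $|R|$ nodes, giving $I(\T(R),\T(R))\geq\sum_{q\in R}1 = |R|$.

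I do not expect a genuine obstacle here: the statement is essentially a bookkeeping exercise unpacking Definition~\ref{dfn:imbalance}. The only point requiring a moment's care is making sure that the extreme values $l_{\max}$ and $l_{\min}$ really do belong to $H(\T(R))$ and really are the endpoints of the range of the level function — both of which are immediate from Definitions~\ref{dfn:cover_tree_compressed} and~\ref{dfn:depth} — so that $H_{l(r)}(\T(R))$ is the full set $H(\T(R))$ and $H_{l(q)}(\T(R))$ is nonempty for every $q$.
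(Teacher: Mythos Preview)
Your proposal is correct and follows essentially the same approach as the paper: bound each summand $|H_{l(q)}(\T(R))|$ above by $|H(\T(R))|$ for the first inequality, isolate the root's summand $|H_{l(r)}(\T(R))|=|H(\T(R))|$ for the second, and lower-bound every summand by $1$ for the third. Your write-up is in fact slightly more careful than the paper's in justifying why $H_{l(r)}(\T(R))$ equals the full height set and why each $H_{l(q)}(\T(R))$ is nonempty.
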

\begin{proof}
The bound in (\ref{lem:imbalanceproperties}) is obtained as follows:
$$I(\T(Q),\T(R)) = \sum_{q \in \T(Q) }|H_{l(q)}(\T(R))| \leq \sum_{q \in \T(Q) } |H(\T(R))| \leq |Q| \cdot |H(\T(R))|.$$
Let $r$ be the root of the tree $\T(R)$. 
The second bound is obtained as follows:
$$I(\T(R),\T(R)) = \sum_{q \in \T(Q) }|H_{l(q)}(\T(R))| \geq |H_{l(r) }(\T(R))| \geq |H(\T(R))|.$$
Finally, 
$I(\T(R),\T(R)) = \sum\limits_{q \in \T(Q) }|H_{l(q)}(\T(R))| \geq \sum\limits_{q \in \T(Q) } 1 \geq |Q| = |R|.$
\end{proof}

\begin{exa}[imbalance computation]
The sets $Q = R = \{1,2,3,4,5\}$ have compressed cover trees $\T(R) = \T(Q)$ in Figure \ref{fig:implicitcompressed}. 
Since $\T(R)$ has nodes on levels $-1,0,1,2$, the height set is $H(\T(R)) = \{-1,0,1,2\}$.
The levels of all nodes are $l(1) = 2$, 
$l(5) = 1$, $l(3) = 0$, $l(2) = -1$ and $l(4) = -1$. 
Then $H_{2}(\T(R)) = H(\T(R))$, $ H_{1}(\T(R)) = \{-1,0,1\}$,
$H_{0}(\T(R)) = \{-1,0\}$, $H_{-1}(\T(R)) = \{-1\}$. 
By Definition \ref{dfn:imbalance} the imbalance is
$I(\T(Q), \T(R))  =  |H_{2}(\T(R))| + |H_{1}(\T(R))| + |H_{0}(\T(R))| + 2\cdot|H_{-1}(\T(R))| =  4 + 3 + 2 + 2 \cdot 1  = 11$, so
 $|Q| \cdot |H(\T(R))| = 5 \cdot 4 = 20$. 
\end{exa}

\begin{lem}[imbalance of a balanced tree pair]
\label{lem:perfect_tree_imbalance}
Let $R$ be a finite set with a metric and size $|R| = \frac{t^{m+1} - 1}{t - 1}$ for some integers $m,t$. 
Let a compressed cover tree $\T(R)$ have a structure of a balanced tree so that $H(\T(R)) = \{l_0, ..., l_m\}$, $|H(\T(R))| = m+1$, and $l_i < l_{i+1}$ for all $i$. 
Then for $0 \leq i \leq m$, the tree $\T(R)$ has exactly $t^i$ nodes at the level $l_i$. 
The imbalance is 
$$I(\T(R),\T(R)) =  \left(1+\frac{1}{t-1}\right)|R| - \frac{ |H(\T(R)) | }{t - 1} = O(|R|).$$ 
\end{lem}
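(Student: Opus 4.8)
The plan is to read off the level-by-level node counts forced by the balanced structure, substitute them into the definition of $I(\T(R),\T(R))$, and simplify the resulting finite sum.

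First I would count nodes per level. Since the root $r$ has a strictly larger level than all other nodes, it is the unique node of $\T(R)$ at the top level $l_m$; the balanced branching-$t$ structure then means that every node at level $l_i$ with $1\le i\le m$ has exactly $t$ children, all lying at the next lower occupied level $l_{i-1}$, and that the nodes at level $l_0$ are leaves. Writing $n_i$ for the number of nodes at level $l_i$, this gives $n_m=1$ and $n_{i-1}=t\,n_i$, hence $n_i=t^{\,m-i}$ for $0\le i\le m$ (equivalently, $t^{\,d}$ nodes at tree-depth $d$). As a check, $\sum_{i=0}^m n_i=\sum_{d=0}^m t^{d}=\tfrac{t^{m+1}-1}{t-1}=|R|$, matching the hypothesis; in particular the nodes occupy exactly the levels $l_0<\dots<l_m$, so $l_{\min}=l_0$. (Under the stated convention $l_0<\dots<l_m$ the count grows toward the leaves, so the lemma's "$t^i$ nodes at level $l_i$" is read as $t^{\,m-i}$; the two labellings differ only superficially and produce the same imbalance below.)

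Next I would evaluate $I(\T(R),\T(R))=\sum_{q\in\T(R)}|H_{l(q)}(\T(R))|$ from Definition~\ref{dfn:imbalance}. For a node $q$ at level $l_i$ one has $H_{l(q)}(\T(R))=H(\T(R))\cap[l_{\min},l_i]=\{l_0,\dots,l_i\}$, a set of $i+1$ elements. Grouping the sum by level gives $I(\T(R),\T(R))=\sum_{i=0}^{m}n_i\,(i+1)=\sum_{i=0}^{m}(i+1)\,t^{\,m-i}$, and re-indexing by $d=m-i$ turns this into $(m+1)\sum_{d=0}^{m}t^{d}-\sum_{d=0}^{m}d\,t^{d}$. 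The first term equals $(m+1)|R|$; for the second I would invoke the standard identity $\sum_{d=0}^{m}d\,t^{d}=\dfrac{m\,t^{m+2}-(m+1)t^{m+1}+t}{(t-1)^2}$, valid since $t\ge 2$ (the value $t=1$ is vacuous, as $|R|=\tfrac{t^{m+1}-1}{t-1}$ already forces $t\ge 2$).

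Finally I would simplify over the common denominator $(t-1)^2$: the numerator collapses to $t^{m+2}-(m+2)t+(m+1)=t\big(t^{m+1}-1\big)-(m+1)(t-1)$, so that $I(\T(R),\T(R))=\dfrac{t}{t-1}\cdot\dfrac{t^{m+1}-1}{t-1}-\dfrac{m+1}{t-1}=\Big(1+\tfrac{1}{t-1}\Big)|R|-\dfrac{|H(\T(R))|}{t-1}$, using $|R|=\tfrac{t^{m+1}-1}{t-1}$ and $|H(\T(R))|=m+1$. Since $1+\tfrac{1}{t-1}\le 2$ for $t\ge 2$ and the subtracted term is nonnegative, this is $O(|R|)$. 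I do not expect a genuine obstacle: the only points needing care are extracting $n_i=t^{\,m-i}$ from the "balanced" hypothesis, verifying $|H_{l_i}(\T(R))|=i+1$, and the bookkeeping in the finite sum $\sum_{d=0}^{m}d\,t^{d}$.
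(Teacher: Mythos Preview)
Your proposal is correct and follows essentially the same route as the paper: group the defining sum $\sum_{q}|H_{l(q)}(\T(R))|$ by level, obtain $\sum_{i=0}^{m}(m-i+1)\,t^{i}$ (after your re-indexing $d=m-i$ this is exactly the paper's sum), and simplify the closed form to $(1+\tfrac{1}{t-1})|R|-\tfrac{|H(\T(R))|}{t-1}$. You also correctly flag the labeling slip in the statement (with $l_0<\dots<l_m$ the root sits at $l_m$, so the node count at $l_i$ must be $t^{\,m-i}$ rather than $t^{i}$); the paper's own proof implicitly uses the convention $l_0>l_1>\dots>l_m$, which is why it writes $|H_{l_i}(\T(R))|=m-i+1$ and $t^i$ nodes at $l_i$.
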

\begin{proof}
The computations below follow directly from Definition \ref{dfn:imbalance}:
$I(\T(R),\T(R)) = \sum\limits^m_{i = 0}  |H_{l_i}(\T(R))|  \cdot t^{i} 
=  \sum\limits^m_{i = 0}(m-i+1) \cdot t^{i} = \dfrac{t^{m+2} - mt + m - 2t + 1}{(t-1)^2}
= |R| + \dfrac{t^{m+1} -mt + m - t}{(t-1)^2} = |R| + \dfrac{|R|}{t-1} +  \dfrac{ - mt + m - t + 1}{(t-1)^2}  
=  (1+\dfrac{1}{t-1})|R| - \dfrac{ |H(\T(R)) | }{t - 1} = O(|R|).$
\end{proof}

Algorithm \ref{alg:StandardCoverTreeTraversal} adapts \cite[Algorithm~1]{curtin2015plug} to compressed cover trees by replacing the functions BaseCase() and Score() by the new ones. 
In Algorithm~\ref{alg:StandardCoverTreeTraversal} the function FindCandidates() replaces the function BaseCase() to compute a final output for a given query point when we reach the lowest level of $\T(R)$. 
The function UpdateCandidates() replaces the function Score() to move neighbor candidates from $R_{i}$ to $R_{i-1}$. 
In Section \ref{sec:CorrectDualTreeKNN} Algorithm~\ref{alg:StandardCoverTreeTraversal} will be updated to 
Algorithm~\ref{alg:paired_tree_knn} to solve Problem \ref{pro:knn} finding all $k$-nearest neighbors of $Q$ in $R$. 
\medskip

We usually call Algorithm~\ref{alg:StandardCoverTreeTraversal} with the arguments $(l_{\max}(\T(R)), l_{\max}(\T(Q)), q, \{r\}) $, where $q,r$ are root nodes of compressed cover trees on $Q,R$, respectively. 

\begin{algorithm}
\caption{We adapt {\cite[Algorithm~1]{curtin2015plug}} for compressed cover trees, see Complexity Theorem~\ref{thm:paired_cover_tree_traversal_bound}.
}
\label{alg:StandardCoverTreeTraversal}
\begin{algorithmic}[1]
\STATE Function \textbf{PairedTreeTraversal}( a level $i \in \Z$, a level $j \in \Z$, a query node $q$ of $\T(Q)$, a subset $R_{i}$ of the cover set $C_i$ of $\T(R)$)
\IF {$i = l_{\min}(\T(R))$}
\STATE FinalCandidates$(i,j,q,R_i)$
\ENDIF
\IF {$\max(l_{\min}(\T(R)),j)  < i$ } \label{line:adt:beginReferenceExpansion}
\STATE \COMMENT{Reference expansion lines \ref{line:adt:beginReferenceExpansion}-\ref{line:adt:endReferenceExpansion}}
\STATE Set $\C(R_i) = \{a \in \Child(p) \text{ for some }p \in R_i \mid l(a) \geq i-1 \}$ \\ \COMMENT{recall that $\Child(p)$ contains the node $p$ }
\label{line:adt:dfnC}
\STATE $R_{i-1} \leftarrow \text{UpdateCandidates}(i,j,q,\C(R_i)) $\label{line:adt:SetRiDefinition}
\STATE  Set $t \leftarrow 1 + \max_{ a \in R_{i-1}} \nxt(a,i-1,\T(R)) $
\STATE Set $R_t \leftarrow R_{i-1}$
\STATE \textbf{call} \textbf{PairedTreeTraversal}$(q,j,R_{t})$ \COMMENT{recall that $R_t$ is a subset of $C_t$} \label{line:adt:endReferenceExpansion}
\ELSE  \label{line:adt:beginQueryExpansion}
\STATE \COMMENT{query expansion lines \ref{line:adt:beginQueryExpansion}-\ref{line:adt:endQueryExpansion}}
\FOR {$q' \in \text{Children}(q)$ having $l(q') = j-1$ or $q' = q$}
\STATE $j' \leftarrow 1 + \nxt(q', j,\T(Q))$
\STATE \textbf{call} \textbf{PairedTreeTraversal}$(q',j',R_i)$
\ENDFOR
\ENDIF \label{line:adt:endQueryExpansion}
\end{algorithmic}
\end{algorithm}


\begin{lem}[reference expansions in Algorithm \ref{alg:StandardCoverTreeTraversal}]
\label{lem:NumberOfReferenceExpansions}
Let $Q,R$ be finite subsets of a metric space. 
Let $\T(Q)$ and $\T(R)$ be compressed cover trees on $Q$ and $R$, respectively. 
Let us call Algorithm \ref{alg:StandardCoverTreeTraversal} with the input consisting of root nodes of $\T(Q)$, $\T(R)$ and their levels.
The algorithm runs at most $I(\T(Q), \T(R)) + |H(\T(R))|$ the reference expansions (lines \ref{line:adt:beginReferenceExpansion}-\ref{line:adt:endReferenceExpansion}).
\end{lem}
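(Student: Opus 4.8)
The plan is to set up a charging scheme between reference expansions of Algorithm~\ref{alg:StandardCoverTreeTraversal} and the summands of the imbalance $I(\T(Q),\T(R)) = \sum_{q \in \T(Q)} |H_{l(q)}(\T(R))|$, following the bookkeeping idea of \cite[Theorem~1]{curtin2015plug} but adapted to compressed cover trees. First I would trace the recursion tree of the algorithm. Each recursive call carries a pair $(i, q)$ where $i$ is the current reference level and $q$ is the current query node; a reference expansion happens exactly when $\max(l_{\min}(\T(R)), j) < i$, i.e.\ when the reference tree is ``ahead'' of the query tree and gets descended one essential level, replacing $R_i$ by $R_{i-1}$ and then jumping to the next nonempty reference level $t = 1 + \max_{a \in R_{i-1}}\nxt(a, i-1, \T(R))$. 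A query expansion instead keeps $R_i$ fixed and branches over the children $q'$ of $q$ with $l(q') = j-1$ (together with $q$ itself).

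The key observation I would establish is that, along any root-to-leaf branch of the recursion tree, the reference level $i$ only decreases and, crucially, only moves through essential levels of $\T(R)$ — because the jump to $t$ skips exactly the levels at which no cover set gains new points. So for a fixed query node $q$ that is ``active'' in the recursion, the number of reference expansions performed while that particular $q$ is the current query node is at most the number of essential reference levels lying between the level at which $q$ became current and $l_{\min}(\T(R))$; that quantity is bounded by $|H_{i_q}(\T(R))|$, where $i_q$ is the reference level when $q$ first becomes current. The subtlety is that $i_q$ need not equal $l(q)$: the algorithm only forces the reference level down to roughly $l(q)$ before switching to query expansion, so I would argue $i_q \le l(q) + O(1)$, or more precisely that by the time $q$ is reached via query expansions the reference level has been brought to within the regime $\max(l_{\min}, l(q)) \ge i$, giving $i_q \le l(q)$ up to the initial $|H(\T(R))|$ additive slack that corresponds to the very first chain of reference expansions from the root before any query expansion occurs. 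Summing $|H_{l(q)}(\T(R))|$ over all $q \in \T(Q)$ yields the $I(\T(Q),\T(R))$ term, and the leftover initial descent from $l_{\max}(\T(R))$ down to the first query-expansion level contributes the additive $|H(\T(R))|$.

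The main obstacle I anticipate is making the correspondence ``one reference expansion $\leftrightarrow$ one essential reference level for a fixed active query node'' fully rigorous: I must check that each query node $q$ of $\T(Q)$ appears as the ``current query node'' in only a bounded (ideally one) contiguous block of the recursion where reference expansions fire, rather than being revisited many times, and that the reference level at the start of that block is genuinely $\le l(q)$ (using conditions (\ref{dfn:cover_tree_compressed}a)–(\ref{dfn:cover_tree_compressed}c) to relate the levels at which the two trees are synchronized). I would handle this by an induction on the recursion depth, maintaining the invariant that whenever \textbf{PairedTreeTraversal}$(i, j, q, R_i)$ is invoked with a reference expansion about to happen, we have $j \le i \le l(q)$ (with the sole exception of the topmost chain from the root of $\T(R)$, which is charged separately to the $+|H(\T(R))|$ term), and that the total count of reference expansions under this call is at most $|H_i(\T(R))| \le |H_{l(q)}(\T(R))|$. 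Combining the inductive bound over the children of the query root with the initial chain gives the claimed total $I(\T(Q),\T(R)) + |H(\T(R))|$.
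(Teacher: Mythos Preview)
Your approach is essentially the same as the paper's: charge each reference expansion to the query node $q$ currently active, observe that while $q$ is active the reference level $i$ strictly decreases through elements of $H(\T(R))$, bound the count by $|H_{l(q)}(\T(R))|$ for non-root $q$ (since $q$ first becomes active via a query expansion of its parent, forcing the reference level at that moment to be at most $l(q)$), and charge the root's descent separately to the additive $|H(\T(R))|$ term. One minor remark: your worry that $q$ might be revisited in several non-contiguous blocks is unnecessary---even though the self-child branch in query expansion lets $q$ remain current with a smaller $j$, the reference level $i$ is monotone along the whole path, so the total number of reference expansions with $q$ active is still at most $|H_{i_q}(\T(R))|$ regardless of how many blocks there are.
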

\begin{proof}
Let $\xi$ be the total number of reference expansions. 
Let $q$ be any non-root query node and let $p$ be its parent. 
We note that any call for PairedTreeTraversal($i, j, q, R_{i}$) comes  from the query expansion (lines 10-15) in PairedTreeTraversal($i'$, $l(q), p , R_{i'}$), see Algorithm~\ref{alg:StandardCoverTreeTraversal}.
\medskip

Then $l(q) \geq i' \geq i$. 
Since $q$ is the root node of $\T(Q)$, we can visit all levels from $H(\T(R))$, so
$\xi \leq \sum\limits_{q \in Q \setminus \text{root}(\T(Q))} |H_{l(q)}(\T(R))| + |H(\T(R))| \leq I(\T(Q), \T(R)) + |H(\T(R))|$.
\end{proof}

Theorem~\ref{thm:paired_cover_tree_traversal_bound} adapts \cite[Theorem~1]{curtin2015plug} to compressed cover trees and has the following advantages. 
\cite[Theorem~1]{curtin2015plug} 
used the single size $N = \max{|Q|, |R|}$ to estimate the time complexity. 
We replace this maximum size $N$ by $|Q| + |H(\T(R))|$, which is usually smaller than $N$ for online search in a large reference set $R$, especially if $Q$ is a single point. 

\begin{thmm}[time complexity of paired trees traversal]
\label{thm:paired_cover_tree_traversal_bound}
Let $Q, R$ be finite subsets of a metric space. 
Let $\T(Q)$ and $\T(R)$ be compressed cover trees on the sets $Q$ and $R$, respectively.
Let $B,S$ be the time complexities of the functions FinalCandidates() and UpdateCandidates() operation, respectively.
Let $I(\T(Q),\T(R))$ be the imbalance from Definition \ref{dfn:imbalance}. Then Algorithm~\ref{alg:StandardCoverTreeTraversal} has time complexity
$O\Big((B + S)\cdot\big( |H(\T(R))| + |Q| + I(\T(Q), \T(R)) \big)\Big)$. 
\end{thmm}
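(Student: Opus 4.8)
The plan is to bound the total running time of Algorithm~\ref{alg:StandardCoverTreeTraversal} by separately accounting for the work done inside reference-expansion steps (lines \ref{line:adt:beginReferenceExpansion}--\ref{line:adt:endReferenceExpansion}) and inside query-expansion steps (lines \ref{line:adt:beginQueryExpansion}--\ref{line:adt:endQueryExpansion}), since every recursive call of \textbf{PairedTreeTraversal} falls into exactly one of these two categories. First I would observe that each reference-expansion step does work of order $O(B+S)$: one call to UpdateCandidates() (cost $S$), possibly one call to FinalCandidates() (cost $B$), together with the bookkeeping to compute $\C(R_i)$, the level $t$, and the recursive call, all of which are dominated by $B+S$ or can be absorbed into it. By Lemma~\ref{lem:NumberOfReferenceExpansions} the number of reference expansions is at most $I(\T(Q),\T(R)) + |H(\T(R))|$, so the total contribution of all reference expansions is $O\big((B+S)\cdot(I(\T(Q),\T(R)) + |H(\T(R))|)\big)$.

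Next I would count the query-expansion steps. Each such step iterates over the children $q'$ of the current query node $q$ lying on a prescribed level (or $q$ itself), and issues one recursive call per such child. The key point is that each node $q'$ of $\T(Q)$ is "expanded from" at most once per essential level of its parent that lies at or above the relevant level; more simply, one can charge each recursive call made during a query expansion to an essential level of a query node, and by Lemma~\ref{lem:number_of_explicit_levels} the total number of such (node, essential-level) pairs is $O(|Q|)$. Thus there are $O(|Q|)$ recursive invocations arising from query expansions. A query-expansion step that is not itself a reference expansion does only $O(1)$ work outside its recursive calls (computing $j'$ via $\nxt$ and recursing), so the cumulative cost of all query expansions is $O(|Q| + |H(\T(R))|)$, which is absorbed into the claimed bound since $B+S \geq 1$.

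The main obstacle I anticipate is the careful bookkeeping of how the recursion alternates between reference and query expansions, and in particular making precise the claim that the number of recursive calls is controlled by $I(\T(Q),\T(R)) + |H(\T(R))| + |Q|$. One has to argue that each recursive call is "caused" either by a reference expansion in its parent call or by a query expansion in its parent call, and that these two counts are governed respectively by Lemma~\ref{lem:NumberOfReferenceExpansions} and by Lemma~\ref{lem:number_of_explicit_levels}; a subtlety is that a single call can trigger a reference expansion which immediately recurses once, versus a query expansion which branches over several children, so the accounting must be set up so that no call is counted twice and the $O(1)$ per-call overhead is correctly attributed. Once this is in place, summing the per-step costs $O(B+S)$ over the $O\big(I(\T(Q),\T(R)) + |H(\T(R))| + |Q|\big)$ steps yields exactly
$$O\Big((B+S)\cdot\big(|H(\T(R))| + |Q| + I(\T(Q),\T(R))\big)\Big),$$
completing the proof.
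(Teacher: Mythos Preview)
Your proposal is correct and follows essentially the same approach as the paper: split the work into reference expansions (each costing $O(B+S)$, with their total number bounded by Lemma~\ref{lem:NumberOfReferenceExpansions}) and query expansions (each costing $O(1)$, with their total number bounded via Lemma~\ref{lem:number_of_explicit_levels} by $O(|Q|)$), then sum. The paper's write-up is slightly terser and simply states the query-expansion cost as $O(|Q|)$ rather than $O(|Q|+|H(\T(R))|)$, but the structure of the argument is identical.
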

\begin{proof}
First, split the algorithm into two parts: reference expansions (lines \ref{line:adt:beginReferenceExpansion} -
\ref{line:adt:endReferenceExpansion}) and query
expansions (lines \ref{line:adt:beginQueryExpansion}–\ref{line:adt:endQueryExpansion}). 
The time complexity of the algorithm is bounded by the
total number of reference expansions times the maximum time of a reference expansion plus the total number of all query expansions times maximal time of a query expansion. 
\medskip

The time of a reference expansion (lines \ref{line:adt:beginReferenceExpansion} -
\ref{line:adt:endReferenceExpansion}) dominated by the sum $O(B+S)$ of the time complexities of the functions FinalCandidates() and 
UpdateCandidates(). 
Query expansions (lines~\ref{line:adt:beginQueryExpansion}–\ref{line:adt:endQueryExpansion}) are launched for every combiantion $q \in Q$ and $j \in \Es(q,\T(Q))$ of Definition \ref{dfn:essential_levels_node}. By Lemma \ref{lem:number_of_explicit_levels} their total number is $2\cdot|Q|$
Hence the total time complexity of all query expansions (lines \ref{line:adt:beginReferenceExpansion} -\ref{line:adt:endReferenceExpansion}) is $O(|Q|)$.
By Lemma \ref{lem:NumberOfReferenceExpansions} the number of reference expansions is at most $|H(\T(R))|+I(\T(R), \T(Q))$.
The final time complexity is the sum
$O( |Q|)+$ \\ $O((B + S )(|H(\T(R))| + I(\T(Q), \T(R)) ))
=  O\big((B + S)(|Q| + |H(\T(R))| + I(\T(Q), \T(R))) \big).$
\end{proof}

Theorem \ref{thm:paired_cover_tree_traversal_bound} holds even if a tree $\T(Q)$ fails the separation condition (\ref{dfn:cover_tree_compressed}c). 
This fact allows us to find a more balanced tree $\T(Q)$ to minimize the imbalance $I(\T(Q), \T(R))$ and substantially decrease the time complexity of the paired tree traversal in Algorithm~\ref{alg:StandardCoverTreeTraversal}.

\begin{cor}[case $Q = R$]
\label{cor:generalRunTime}
Let $R$ have a compressed cover tree $\T(R)$.
Let $B,S$ be the time complexities of the functions FinalCandidates() and UpdateCandidates(), respectively.
Algorithm \ref{alg:StandardCoverTreeTraversal} for $Q=R$ has time complexity $O\Big((B+S) \cdot I(\T(R), \T(R))\Big) $.
\end{cor}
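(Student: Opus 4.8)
The goal is to specialize Theorem~\ref{thm:paired_cover_tree_traversal_bound} to the diagonal case $Q=R$, so the natural approach is to start from the bound $O\big((B+S)(|H(\T(R))| + |Q| + I(\T(Q),\T(R)))\big)$ already proved there and absorb the first two summands into the third when $\T(Q)=\T(R)$. The whole argument is therefore just an application of the inequalities collected in Lemma~\ref{lem:imbalanceproperties}.

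\begin{proof}
Apply Theorem~\ref{thm:paired_cover_tree_traversal_bound} with $\T(Q) = \T(R)$ and $|Q| = |R|$.
It gives the time complexity
$$O\Big((B+S)\cdot\big(|H(\T(R))| + |R| + I(\T(R),\T(R))\big)\Big).$$
By Lemma~\ref{lem:imbalanceproperties} we have $I(\T(R),\T(R)) \geq |H(\T(R))|$ and $I(\T(R),\T(R)) \geq |R|$, so
$$|H(\T(R))| + |R| + I(\T(R),\T(R)) \leq 3\cdot I(\T(R),\T(R)) = O\big(I(\T(R),\T(R))\big).$$
Substituting this back gives the claimed bound $O\big((B+S)\cdot I(\T(R),\T(R))\big)$.
\end{proof}

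\noindent
There is essentially no obstacle here: the only thing to get right is that Theorem~\ref{thm:paired_cover_tree_traversal_bound} is invoked with the correct instantiation ($Q$ and $R$ coincide as sets and the same compressed cover tree is used on both sides), and that both lower bounds on $I(\T(R),\T(R))$ from Lemma~\ref{lem:imbalanceproperties} are available — the first dominates the height term and the second dominates the $|R|$ term. One should note that Theorem~\ref{thm:paired_cover_tree_traversal_bound} does not require $\T(Q)$ to satisfy the separation condition, so using the identical tree on both copies is legitimate; this is exactly the flexibility remarked on just before the corollary. If one wanted a cleaner statement, Lemma~\ref{lem:perfect_tree_imbalance} could additionally be cited to note that for a balanced $\T(R)$ the bound becomes $O\big((B+S)\cdot|R|\big)$, i.e. near linear, but that refinement is not needed for the corollary as stated.
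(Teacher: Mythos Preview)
Your proof is correct and follows exactly the same approach as the paper: apply Theorem~\ref{thm:paired_cover_tree_traversal_bound} with $\T(Q)=\T(R)$ and then use the lower bounds on $I(\T(R),\T(R))$ from Lemma~\ref{lem:imbalanceproperties} to absorb the $|H(\T(R))|$ and $|R|$ terms. The paper's proof is a one-line version of precisely this argument.
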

\begin{proof}
It follows from Theorem \ref{thm:paired_cover_tree_traversal_bound} and Lemma \ref{lem:imbalanceproperties} as $I(\T(R), \T(R)) \geq |R| \geq H(\T(R))$.
\end{proof}

\section{A new algorithm for all $k$-nearest neighbors based on paired trees}
\label{sec:CorrectDualTreeKNN}

In this section we assume that a query set $Q$ and a reference set $R$ are finite subsets of an ambient metric space $X$ with a metric $d$.
Algorithm~\ref{alg:paired_tree_knn} will use paired trees $\T(R)$ and $\T(Q)$ to solve Problem~\ref{pro:knn} for any $k\geq 1$.
Main Theorem~\ref{thm:paired_tree_knn_time} proves a near linear time complexity for Algorithm~\ref{alg:paired_tree_knn}, which runs Algorithm~\ref{alg:StandardCoverTreeTraversal} by using Algorithm~\ref{alg:FinalCandidatesKNN} for the function FinalCandidates() and Algorithm \ref{alg:UpdateCandidatesKNN} for the function UpdateCandidates(). 
\medskip

Both functions FinalCandidates() and UpdateCandidates() need current levels $i,j$ of compressed cover trees $\T(R),\T(Q)$, respectively.
Other input arguments are a query point $q$ and a subset $R_i$ of a reference set $R$. 
In general, the function FinalCandidates() can use a level $j$ of a tree $\T(Q)$.
However, our implementation of Algorithm~\ref{alg:FinalCandidatesKNN} does not need $j$. 
\medskip

During the tree traversal for any $q \in Q$, we maintain the array $N_q$ (of candidate neighbors) consisting of pairs $(a,t)$, where $a \in R$ and $t = d(q,a)$. 
To minimize the run time of line $2$ in Algorithm~\ref{alg:FinalCandidatesKNN}, we implement the array $N_q$ as a max-heap \cite[section~6.5]{Cormen1990}, which has time complexity $O(\log(|N_q|))$ for adding elements, finding and removing a maximal element. 

\begin{algorithm}
\caption{This function FinalCandidates() is used in Algorithm \ref{alg:StandardCoverTreeTraversal} to get final Algorithm~\ref{alg:paired_tree_knn}. }
\label{alg:FinalCandidatesKNN}
\begin{algorithmic}[1]
\STATE \textbf{Input} : a level $i$ of $\T(R)$, a level $j$ of $\T(Q)$, a query point $q$, a subset $R_i \subset R$
\STATE Select the $k$-nearest neighbors of $q$ in the set $R_i$ and them into the array $N_q$ 
\end{algorithmic}
\end{algorithm}

\begin{algorithm}
\caption{This function UpdateCandidates() is used in Algorithm \ref{alg:StandardCoverTreeTraversal} to get Algorithm~\ref{alg:paired_tree_knn}. }
\label{alg:UpdateCandidatesKNN}
\begin{algorithmic}[1]
\STATE \textbf{Input} : a level $i$ of $\T(R)$, a level $j$ of $\T(Q)$, a query point $q\in Q$, \\ a subset $\mathcal{C}(R_i)$ of $C_{i-1}$ of a compressed cover tree $\T(R)$
\STATE \textbf{Output} : a subset $R_{i-1}\subset C_{i-1}$ consisting of all nodes that were not pruned. 
\STATE Compute a point $\lambda = \lambda_k(q,\C(R_i))$ from Definition~\ref{dfn:lambda-point} by Algorithm \ref{alg:lambda}. \label{line:UpdateCandidatesKNN:lambda}
\STATE \textbf{return} $R_{i-1} = \{a \in \C(R_i) \mid d(q,a) \leq d(q, \lambda) + 2^{i+1} + 2^{j+2}\} $
\label{line:dual_tree_knn_UpdateCandidates:last}
\end{algorithmic}
\end{algorithm}

\begin{algorithm}
\caption{The new algorithm for all $k$-nearest neighbors by using paired trees on $Q,R$}
\label{alg:paired_tree_knn}
\begin{algorithmic}[1]
\STATE \textbf{Input:} a query tree $\T(Q)$, a reference tree $\T(R)$, a number of neighbors $k\geq 1$
\STATE \textbf{Output:} arrays $N_q$ of $k$ nearest neighbors for all points $q \in Q$
\STATE run Algorithm \ref{alg:StandardCoverTreeTraversal} for the arguments
$( l_{\max}(\T(R)), l_{\max}(\T(Q)), \text{root}(\T(Q)), \{\text{root}(\T(R)) \} )$ by using \\
- the function UpdateCandidates() is implemented in Algorithm~\ref{alg:UpdateCandidatesKNN} \\
- the function FinalCandidates() is implemented in Algorithm~\ref{alg:FinalCandidatesKNN}
\STATE \textbf{return} the array $N_q$ for all points $q \in Q$. 
\end{algorithmic}
\end{algorithm}

\begin{lem}[true $k$-nearest neighbors are in a candidate set for all levels $i$]
\label{lem:cover_tree_dual_knn_correct}
Let $Q,R$ be finite subsets of a space $X$ with a metric $d$.
Let let $k\leq |R|$ be an integer. 
Assume that we have already constructed cover trees $\T(Q)$ and $\T(R)$ on $Q$ and $R$, respectively.
Consider any recursion $(i,j,q,R_i)$ of line $3$ in Algorithm \ref{alg:paired_tree_knn}. 
Then the union $\bigcup_{p \in R_i}\Sd_i(p, \T(R))$ contains all $k$-nearest neighbors of any point $q' \in \Desc(q)$.
\end{lem}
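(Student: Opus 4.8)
The statement is an invariant that should be maintained by induction along the recursion tree of Algorithm~\ref{alg:paired_tree_knn}. The plan is to perform induction on the sequence of recursive calls, distinguishing the two kinds of recursion in Algorithm~\ref{alg:StandardCoverTreeTraversal}: the reference expansion (lines~\ref{line:adt:beginReferenceExpansion}--\ref{line:adt:endReferenceExpansion}) and the query expansion (lines~\ref{line:adt:beginQueryExpansion}--\ref{line:adt:endQueryExpansion}). The base case is the initial call with $R_i=\{\mathrm{root}(\T(R))\}$ at level $i=l_{\max}(\T(R))$, where $\Sd_i(\mathrm{root},\T(R))=\Desc(\mathrm{root})=R$, so the union trivially contains every point of $R$, hence all $k$-nearest neighbors of every $q'\in\Desc(q)$.

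\medskip

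First I would handle the query expansion, which is the easy direction: when we pass from $(i,j,q,R_i)$ to $(i,j',q',R_i)$ with $q'\in\Child(q)$, the set $R_i$ is unchanged and $\Desc(q')\subseteq\Desc(q)$, so the inductive hypothesis for $q$ immediately gives the conclusion for $q'$. The substantive case is the reference expansion, which has two sub-steps. The first sub-step is passing from $R_i$ to $\C(R_i)$: by Lemma~\ref{lem:child_set_equivalence} we have $\bigcup_{p\in\C(R_i)}\Sd_{i-1}(p,\T(R))=\bigcup_{p\in R_i}\Sd_i(p,\T(R))$, so this step preserves the union exactly (as a subset of $R$), hence still contains all $k$-nearest neighbors of every $q'\in\Desc(q)$. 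The second, and genuinely delicate, sub-step is the pruning performed by UpdateCandidates() in Algorithm~\ref{alg:UpdateCandidatesKNN}: we must show that discarding every $a\in\C(R_i)$ with $d(q,a)>d(q,\lambda)+2^{i+1}+2^{j+2}$ (where $\lambda=\lambda_k(q,\C(R_i))$) does not remove any point that is needed. Here I would argue: suppose $w$ is a $k$-nearest neighbor of some $q'\in\Desc(q)$. By the previous sub-step, $w\in\Sd_{i-1}(p,\T(R))$ for a unique $p\in\C(R_i)$ (uniqueness from the separation Lemma~\ref{lem:separation}). It suffices to show $p$ survives the pruning, i.e. $d(q,p)\le d(q,\lambda)+2^{i+1}+2^{j+2}$. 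By Lemma~\ref{lem:distinctive_descendant_distance}, $d(w,p)\le 2^{i}\le 2^{i+1}$, and by Lemma~\ref{lem:compressed_cover_tree_descendant_bound} (applied in $\T(Q)$) we have $d(q,q')\le 2^{l(q)+1}\le 2^{j+1}\le 2^{j+2}$ since the recursion maintains $j\le l(q)$; these plus the triangle inequality reduce the goal to comparing $d(q',w)$ with $d(q',\lambda)$, and then to comparing $d(q',\lambda)$ with $d(q,\lambda)$. Finally, the $\beta$-point Lemma~\ref{lem:beta_point} provides a point $\beta$ among the first $k$ nearest neighbors of $q$ with $d(q,\lambda)\le d(q,\beta)+2^{i+1}$, which together with $d(q,q')\le 2^{j+2}$ controls $d(q',\lambda)$ from the side we need, because $w$ being a $k$-nearest neighbor of $q'$ forces $d(q',w)$ to be at most $d(q',\beta)\le d(q,\beta)+2^{j+2}$.

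\medskip

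I would assemble the triangle-inequality chain carefully: $d(q,p)\le d(q,q')+d(q',w)+d(w,p)\le 2^{j+2}+d(q',w)+2^{i+1}$, then bound $d(q',w)\le d(q',\beta)$ (since $w$ is among the closest $k$ to $q'$ and $\beta$ is in a $k$-nearest-neighbor candidate list, so if $d(q',w)>d(q',\beta)$ then $w$ would not be needed — care is required to phrase this using Definition~\ref{dfn:kNearestNeighbor} of $\NN_k$ and the fact that the union already contains $\beta$ at level $i-1$), then $d(q',\beta)\le d(q,q')+d(q,\beta)\le 2^{j+2}+d(q,\beta)$, and finally $d(q,\beta)\ge d(q,\lambda)-2^{i+1}$ from Lemma~\ref{lem:beta_point}. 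Collecting the constants should land at $d(q,p)\le d(q,\lambda)+2^{i+1}+(\text{a }2^{j+2}\text{ term})$; if the naive bookkeeping produces $2^{j+3}$ rather than $2^{j+2}$, one squeezes a factor using $2^{j+1}\le 2^{j+2}$ slack or a tighter use of $j\le l(q)-1$ at the point of the query expansion (note the recursion sets $j'=1+\nxt(q',j,\T(Q))$, so in fact $j\le l(q)$ with equality only at the root, giving room). The main obstacle I anticipate is exactly this last constant-chasing step together with the need to certify that $\beta$ itself is still in the candidate union at level $i-1$ so that Lemma~\ref{lem:beta_point}'s hypothesis ($\bigcup_{p\in\C(R_i)}\Sd_{i-1}(p,\T(R))$ contains all $k$-nearest neighbors of $q$) is legitimately available — this is where one invokes the inductive hypothesis for the node $q$ itself (taking $q'=q\in\Desc(q)$) combined with the query-expansion monotonicity, closing the circle. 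Everything else is a routine propagation of the set identity from Lemma~\ref{lem:child_set_equivalence}.
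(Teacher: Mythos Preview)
Your inductive skeleton is fine and is equivalent to the paper's argument by maximal counterexample; the query-expansion step and the passage $R_i\to\C(R_i)$ via Lemma~\ref{lem:child_set_equivalence} are handled correctly. The genuine gap is in the pruning step, at the inequality $d(q',w)\le d(q',\beta)$. You obtain $\beta$ from Lemma~\ref{lem:beta_point}, so $\beta$ is among the first $k$ nearest neighbors of $q$, not of $q'$. Nothing then prevents $\beta$ from being, say, the very nearest point of $R$ to $q'$, in which case $d(q',\beta)<d(q',w)$ whenever $k\ge 2$ and $w$ is the $k$-th nearest neighbor of $q'$. The justification you sketch (``$w$ is among the closest $k$ to $q'$ and $\beta$ is in a candidate list'') does not deliver this inequality, and Lemma~\ref{lem:beta_point} is pointing the wrong way for your purposes: it upper-bounds $d(q,\lambda)$ by $d(q,\beta)+2^{i}$, whereas your target estimate needs $d(q,\lambda)$ on the right-hand side. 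A separate slip: your chain $d(q,q')\le 2^{l(q)+1}\le 2^{j+1}$ is backwards, since $j\le l(q)$ yields $2^{j+1}\le 2^{l(q)+1}$; the bound $d(q,q')\le 2^{j+1}$ really comes from the fact that only $q'\in\Sd_j(q,\T(Q))$ are processed below this call, together with Lemma~\ref{lem:distinctive_descendant_distance}.

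The paper does not invoke Lemma~\ref{lem:beta_point} in this proof at all. It uses the $\lambda$-point directly: by Definition~\ref{dfn:lambda-point} the union $\bigcup_{\gamma\in N(q;\lambda)}\Sd_{i-1}(\gamma,\T(R))$ contains at least $k$ points, and each such point $u$ satisfies $d(q,u)\le d(q,\lambda)+2^{i}$ via Lemma~\ref{lem:distinctive_descendant_distance}, hence $d(q',u)\le d(q,q')+d(q,\lambda)+2^{i}$. These $k$ witnesses are what force $d(q',w)\le d(q,q')+d(q,\lambda)+2^{i}$ for every $k$-nearest neighbor $w$ of $q'$; plugging this into your triangle chain together with $d(w,p)\le 2^{i}$ and $d(q,q')\le 2^{j+1}$ gives exactly $d(q,p)\le d(q,\lambda)+2^{i+1}+2^{j+2}$, with no slack left over. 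The paper runs the same computation in contrapositive form: assuming a $k$-nearest neighbor $\beta$ of $q'$ has its ancestor $\alpha$ pruned, it shows that all $k$ witnesses above are strictly closer to $q'$ than $\beta$, contradicting the assumed $k$-nearest-neighbor status.
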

\begin{proof}
Assuming the contrary, let a recursion $(i,j,q,R_i)$ fail the required claim for some $i,j,q$. 
Define the order on $\Z \times \Z$ so that $(a,b) \leq (c,d)$ if $b < d$ or if $b = d$ and $a \leq b$. 
\medskip

For every point $q$, let $(i,j)_q$ be the maximal element for which $(i,j,q,R_i)$ doesn't satisfy the claim. 
Set $q = \mathrm{argmax}_{q \in Q}(i,j)_q$ and $(i,j) = \max_{q \in Q}(i,j)_q$.
Fix an arbitrary descendant $q' \in \Desc(q)$. 
Since $j \leq l(q) - 1$, Lemma~\ref{lem:compressed_cover_tree_width_bound} implies that $d(q',q) \leq 2^{j+1}$.
Choose
$$\beta \in \bigcup_{p \in R_i}\Sd_i(p, \T(R)) \setminus \bigcup_{p \in R_{i-1}}\Sd_{i-1}(p, \T(R)).$$ 
By Lemma~\ref{lem:child_set_equivalence} we have 
\begin{ceqn}
\begin{equation}
\label{eqa:neighborsContained}
\bigcup_{p \in \C(R_i)}\Sd_{i-1}(p, \T(R)) = \bigcup_{p \in R_i}\Sd_{i}(p, \T(R)).
\end{equation}
\end{ceqn}

Let $\lambda$ be a point computed in line 3 of Algorithm~\ref{alg:UpdateCandidatesKNN}. 
Equation (\ref{eqa:neighborsContained}) implies that $|\bigcup_{p \in \C(R_i)}\Sd_{i-1}(p, \T(R))| \geq k$.
Then the point $\lambda$ is well-defined by Definition~\ref{dfn:lambda-point}. 
Since $$\beta \in \bigcup_{p \in \C(R_i)}\Sd_{i-1}(p, \T(R)),$$ there exists $\alpha \in \C(R_i)$ satisfying $\beta \in \Sd_{i-1}(\alpha, \T(R))$. 
The initial assumption means that $\alpha \notin R_{i-1}$. 
Due to line~\ref{line:dual_tree_knn_UpdateCandidates:last} of Algorithm~\ref{alg:UpdateCandidatesKNN}, we get
\begin{ceqn}
\begin{equation}
\label{eqa:neighborsContained2}
   d(\alpha, q) > d(p, \lambda) + 2^{i+1} + 2^{j+2}.
\end{equation}
\end{ceqn}
Let $w$ be an arbitrary point in the union $\bigcup_{p \in N(q;\la)}\Sd_{i-1}(p, \T(R))$. 
Then $w \in \Sd_{i-1}(\gamma, \T(R))$ for some $\gamma \in  N(q;\la)$. By Lemma \ref{lem:distinctive_descendant_child_level} either $w = \gamma$ or $w \in \Desc(a)$ for a point $a \in \Child(\gamma) \setminus \{\gamma\}$ at a level $l(a) < i$.
If $w = \gamma$, then trivially $d(\gamma, w) \leq 2^{i}$. Else $w$ is a descendant of $a$, which is a child of node $\gamma$ on level $i-1$ or below.
\medskip

Then by Lemma \ref{lem:compressed_cover_tree_descendant_bound} we have  $d(\gamma, w) \leq 2^{i}$ anyway. 
By Definition \ref{dfn:lambda-point}  since $\gamma \in  N(q;\la)$ we have $d(q,\gamma) \leq d(q,\lambda)$. 
We apply (\ref{eqa:neighborsContained2}) and the triangle inequality below:
\begin{ceqn}
\begin{equation}
\label{eqa:neighborsContained3}
d(q',w) \leq d(q', q) + d(q, p) + d(p,w) \leq 2^{j+1} + d(\lambda, p) + 2^{i}  < d(\alpha,q) - 2^{i} - 2^{j+1}.
\end{equation}
\end{ceqn}
On the other hand, $\beta$ is a descendant of $\alpha$, so we can estimate
\begin{ceqn}
\begin{equation}
\label{eqa:neighborsContained4}
d(\beta,q') \geq d(\alpha,q)  - d(\alpha,\beta) - d(q,q')  \geq d(\alpha,q) - 2^{i} - 2^{j+1}. 
\end{equation}
\end{ceqn}
Inequalities~(\ref{eqa:neighborsContained3}),~(\ref{eqa:neighborsContained4}) imply that $d(q',w) < d(q',\beta)$. 
The point $w$ was arbitrarily chosen in the union $\bigcup_{p \in N(q;\la)}\Sd_{i-1}(p, \T(R))$ containing at least $k$ points.
Hence the point $\beta$ cannot be among the first $k$ nearest neighbors of $q'$.
This final contradiction proves the lemma. 
\end{proof}

\begin{thmm}[correctness of Algorithm~\ref{alg:paired_tree_knn} for all $k$-nearest neighbors]
\label{thm:paired_tree_knn_correctness}
Given any compressed cover trees $\T(Q)$ and $\T(R)$ on a query set $Q$ and a reference set $R$, respectively, Algorithm~\ref{alg:paired_tree_knn} solves Problem \ref{pro:knn} by finding all $k$-nearest neighbors in $R$ for all points $q\in Q$.
\end{thmm}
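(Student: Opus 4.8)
The plan is to reduce the correctness of Algorithm~\ref{alg:paired_tree_knn} to two facts: (1) the invariant of Lemma~\ref{lem:cover_tree_dual_knn_correct}, which guarantees that no true $k$-nearest neighbor of any descendant of the current query node is ever pruned during the reference expansions, and (2) the observation that when the recursion bottoms out at $i = l_{\min}(\T(R))$, the surviving candidate set $R_i$ literally contains all of $R$ restricted to the subtree of the current query node, so that \textsc{FinalCandidates()} (Algorithm~\ref{alg:FinalCandidatesKNN}) can select exact $k$-nearest neighbors by brute force over $R_i$. The first fact is already established; the second needs the elementary remark that $\Sd_{l_{\min}}(p,\T(R)) = \{p\}$ for any node $p$ at the lowest level (since $V_{l_{\min}}(p)$ collects no new descendants below $l_{\min}$), so $\bigcup_{p \in R_i}\Sd_i(p,\T(R)) = R_i$ when $i = l_{\min}(\T(R))$.

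The key steps, in order. First I would note that for every query point $q_0 \in Q$ there is exactly one leaf-level call of \textsc{PairedTreeTraversal} in which $q_0$ itself is the query node $q$ and $i = l_{\min}(\T(R))$; this follows because the query expansions in Algorithm~\ref{alg:StandardCoverTreeTraversal} recurse down the tree $\T(Q)$ through every essential level of every node, and by Definition~\ref{dfn:cover_tree_compressed} each point of $Q$ is a node of $\T(Q)$, so each $q_0$ appears as a query node in some recursion. Second, I would trace the reference-side recursion along the path leading to that call and apply Lemma~\ref{lem:cover_tree_dual_knn_correct}: at the moment $i$ reaches $l_{\min}(\T(R))$ with query node $q_0$, the set $\bigcup_{p \in R_i}\Sd_i(p,\T(R))$ contains all $k$-nearest neighbors of every $q' \in \Desc(q_0)$, in particular of $q_0$ itself (since $q_0 \in \Desc(q_0)$). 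Third, using the leaf-level identity $\bigcup_{p \in R_i}\Sd_i(p,\T(R)) = R_i$, I conclude $R_i$ contains all true $k$-nearest neighbors of $q_0$. Fourth, \textsc{FinalCandidates()} selects the $k$ closest points of $R_i$ to $q_0$ into $N_{q_0}$; since every true neighbor is present, this selection returns a valid choice of distinct points $p_i \in \NN_i(q_0)$ for $i = 1,\dots,k$ (resolving ties arbitrarily, which Problem~\ref{pro:knn} permits). Finally, I would observe that the max-heap bookkeeping of $N_q$ across all recursions with query node $q$ only retains the $k$ globally closest candidates ever seen, so merging across recursions does not discard any true neighbor, and the output at termination is exactly $k$ valid neighbors per query point.

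One subtlety to address carefully is that a given query node $q_0$ may be visited in several leaf-level recursions (with different ancestor histories on the reference side), and also that the array $N_{q_0}$ accumulates candidates from recursions where $q_0$ is a proper descendant of the query node rather than the query node itself. I would argue that this only helps: by Lemma~\ref{lem:cover_tree_dual_knn_correct} every such recursion maintains a superset of the true neighbors of $q_0$ among the distinctive descendants of its $R_i$, and \textsc{FinalCandidates()} is invoked precisely at leaf level where distinctive descendant sets collapse to singletons, so each invocation contributes all of its $R_i$ — hence the union over all leaf-level invocations touching $q_0$ certainly includes the true neighbors, and the $k$-smallest filter maintained by the heap is monotone. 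I expect the main obstacle to be stating cleanly why the heap-based merging of partial candidate lists across the multiple recursions preserves exactness — that is, that taking "the $k$ nearest among several sets each of which already contains the true $k$-nearest set" yields the true $k$-nearest set — together with the care needed around ties in Definition~\ref{dfn:kNearestNeighbor}, where $\NN_i(q_0)$ may be multi-valued; the argument must show the algorithm returns \emph{some} valid system of representatives, not a unique one.
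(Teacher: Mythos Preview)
Your proposal is correct and follows essentially the same approach as the paper's proof: invoke Lemma~\ref{lem:cover_tree_dual_knn_correct} for the invariant, observe that at level $i=l_{\min}(\T(R))$ the distinctive descendant sets collapse so that $\bigcup_{p\in R_{l_{\min}}}\Sd_{l_{\min}}(p,\T(R))=R_{l_{\min}}$, and conclude that \textsc{FinalCandidates} selects the true $k$-nearest neighbors from $R_{l_{\min}}$. The paper's proof is three sentences and leaves implicit the points you spell out (that every $q_0\in Q$ eventually appears as the query node in a leaf-level recursion, and that the heap bookkeeping is monotone); your extra care there is sound but not a different route. One small sharpening: your identity $\Sd_{l_{\min}}(p,\T(R))=\{p\}$ in fact holds for \emph{every} node $p\in\T(R)$, not only those at the lowest level, since $V_{l_{\min}}(p)$ already contains every proper descendant of $p$; this is exactly what the paper means by ``nodes of $\T(R)$ cannot have children below the level $l_{\min}$''.
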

\begin{proof}
The nodes of $\T(R)$ cannot have children below the level
 $l_{\min}$, so $\bigcup_{p \in R_{l_{\min}}}\Sd_i(p, \T(R)) = R_{l_{\min}}$. Then all the $k$-nearest neighbors of any $q\in Q$ are contained in the set $R_{l_{\min}}$ obtained in the last recursion of Algorithm~\ref{alg:StandardCoverTreeTraversal} involving $q$.
Lemma~\ref{lem:cover_tree_dual_knn_correct} finishes the proof. 
\end{proof}

\begin{lem}[time complexity of the function FinalCandidates()]
\label{lem:FinalCandidates_knn_time}
For $k \leq |R|$, Algorithm~\ref{alg:FinalCandidatesKNN} with the input arguments $(i,j,q,R_i)$ has the time complexity $O(\log(k) \cdot |R_{i}|)$.
\end{lem}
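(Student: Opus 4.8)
The plan is to analyze Algorithm~\ref{alg:FinalCandidatesKNN} line by line under the max-heap implementation of $N_q$ described just before the statement. Line~2 selects the $k$ nearest neighbours of $q$ among the $|R_i|$ points of $R_i$ and merges them into $N_q$. I would realize this selection by a single pass over $R_i$: for each $a\in R_i$ compute the distance $d(q,a)$ in $O(1)$, push the pair $(a,d(q,a))$ onto the max-heap $N_q$ with priority $d(q,a)$ and, whenever $|N_q|$ exceeds $k$, pop the element of largest priority. This clearly leaves $N_q$ holding the $k$ closest points seen so far, which after the last recursion touching $q$ are exactly the $k$ nearest neighbours (this last point is really the job of Theorem~\ref{thm:paired_tree_knn_correctness}; here we only need the running time).

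The key quantitative observation is that $|N_q|\le k+1$ is maintained throughout the pass: before the pass $N_q$ holds at most $k$ candidate neighbours (an invariant of Algorithm~\ref{alg:paired_tree_knn}, since every routine that writes to $N_q$ restores $|N_q|\le k$ before returning), each push raises the size by at most one, and each subsequent pop brings it back to $\le k$. Hence every heap operation — insertion, find-max, delete-max — costs $O(\log(k+1))=O(\log k)$ by the standard binary-heap bounds \cite[Section~6.5]{Cormen1990}. Since the pass performs $O(1)$ heap operations and one distance evaluation per point of $R_i$, the total cost of line~2 is $O(|R_i|\log k)$; the hypothesis $k\le|R|$ only ensures the target size is meaningful and does not otherwise enter. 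Adding the $O(1)$ bookkeeping of line~1 gives the claimed bound $O(\log(k)\cdot|R_i|)$.

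The argument is essentially routine; the only place that needs a little care is the size invariant $|N_q|\le k+1$, i.e.\ that $N_q$ is not allowed to grow with the number of recursions. I would therefore state explicitly that each invocation of FinalCandidates() (and, in the companion lemma, each UpdateCandidates() step that writes to $N_q$) prunes $N_q$ back to its $k$ smallest entries, so that the heap never holds more than $k+1$ items at any instant and $\log|N_q|=O(\log k)$ holds uniformly across the entire run of Algorithm~\ref{alg:paired_tree_knn}. With that invariant in hand the bound follows immediately from the per-operation cost of a binary heap.
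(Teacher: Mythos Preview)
Your proof is correct and follows essentially the same approach as the paper: a single pass over $R_i$ with push/pop on the max-heap $N_q$, using the bound $|N_q|\le k+1$ to get $O(\log k)$ per heap operation and hence $O(|R_i|\log k)$ overall. Your extra care about the cross-recursion invariant is harmless (in this paper only FinalCandidates writes to $N_q$, so the invariant is immediate), but the core argument matches the paper's proof.
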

\begin{proof}
The time complexity of Algorithm~\ref{alg:FinalCandidatesKNN} is dominated by line $2$, where we place points of $R_i$ into the array $N_q$ one by one until we find $k$ neighbors. 
At every step we iterate over the remaining points in the subset $R_i$ by adding them to the array $N_q$ one by one and by removing the node having a maximum distance to $q$. 
Since adding a new element and removing a maximal element from $N_q$ takes $O(\log(|N_q|))$ time, the time complexity of the function FinalCandidates() is $O(  |R_i| \cdot \log(|N_q|) )=O( \log(k) \cdot |R_{i}|)$ due to $|N_q| \leq k+1$.
\end{proof}

\begin{lem}[complexity of the function UpdateCandidates()]
\label{lem:UpdateCandidates_knn_time}
For $k \leq |R|$, Algorithm~\ref{alg:UpdateCandidatesKNN} with the input  arguments $(i,j,q,R_i)$ has the time complexity $O( c_m(R)^4 \cdot \log(k) \cdot |R_{i}|)$.
\end{lem}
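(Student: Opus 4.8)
The plan is to bound Algorithm~\ref{alg:UpdateCandidatesKNN} line by line. The input set passed to the function is $\C(R_i)$, a subset of the cover set $C_{i-1}$ of $\T(R)$. The first observation I would record is a bound on $|\C(R_i)|$ in terms of $|R_i|$: by Lemma~\ref{lem:compressed_cover_tree_width_bound} every node $p\in R_i$ contributes at most $(c_m(R))^4$ children at the level $i-1$, so $|\C(R_i)|\le (c_m(R))^4\cdot|R_i| + |R_i| = O((c_m(R))^4\,|R_i|)$, where the extra $|R_i|$ accounts for the self-children (recall $\Child(p)$ contains $p$). This is the only place the width bound enters, and it is where the factor $c_m(R)^4$ in the claimed complexity comes from.

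Next I would charge the cost of line~\ref{line:UpdateCandidatesKNN:lambda}, the computation of $\lambda = \lambda_k(q,\C(R_i))$ by Algorithm~\ref{alg:lambda}. By Lemma~\ref{lem:time_lambdapoint} this costs $O(|\C(R_i)|\log(k)\,\max_i|R_i|)$; here I need to be a little careful about what "$\max_i |R_i|$" denotes — in the setting of the precomputed distinctive descendant sizes it is effectively a constant-time lookup per element, so the dominant term is $O(|\C(R_i)|\log(k))$, and substituting the width bound gives $O((c_m(R))^4\log(k)\,|R_i|)$. Finally, line~\ref{line:dual_tree_knn_UpdateCandidates:last} simply iterates once over $\C(R_i)$, comparing each distance $d(q,a)$ (already available, or computed in $O(1)$ metric evaluations) against the threshold $d(q,\lambda)+2^{i+1}+2^{j+2}$, which is $O(|\C(R_i)|) = O((c_m(R))^4\,|R_i|)$. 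Summing the three contributions, the $\lambda$-point computation dominates, giving the stated bound $O((c_m(R))^4\log(k)\,|R_i|)$.

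The main obstacle I anticipate is reconciling the cost statement of Lemma~\ref{lem:time_lambdapoint} with the quantity $|R_i|$ appearing in this lemma: Lemma~\ref{lem:time_lambdapoint} is phrased with a factor $\max_i|R_i|$, and I must justify that in the actual call within Algorithm~\ref{alg:UpdateCandidatesKNN} this factor is either absorbed (because the distinctive descendant sizes $|\Sd_i(p,\T(R))|$ have been precomputed by Algorithm~\ref{alg:cover_tree_distinctive_descendants}, so each heap operation and prefix-sum step is $O(1)$ or $O(\log k)$) or is subsumed into the $c_m(R)^4$ bookkeeping. A secondary minor point is making explicit that distances $d(q,a)$ for $a\in\C(R_i)$ are each computed with a single metric evaluation, treated as $O(1)$, so no hidden factor enters from re-evaluating the metric. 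Once these are pinned down the arithmetic is routine.
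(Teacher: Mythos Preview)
Your proposal is correct and follows essentially the same route as the paper: bound $|\C(R_i)|$ by $(c_m(R))^4\,|R_i|$ via Lemma~\ref{lem:compressed_cover_tree_width_bound}, invoke Lemma~\ref{lem:time_lambdapoint} for line~\ref{line:UpdateCandidatesKNN:lambda}, and observe that line~\ref{line:dual_tree_knn_UpdateCandidates:last} is a single pass over $\C(R_i)$, so the $\lambda$-computation dominates. You are actually more careful than the paper, which silently drops the $\max_i|R_i|$ factor from Lemma~\ref{lem:time_lambdapoint}; your remark that the precomputed sizes $|\Sd_i(p,\T(R))|$ from Algorithm~\ref{alg:cover_tree_distinctive_descendants} make each step $O(1)$ is exactly the right justification for absorbing that factor.
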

\begin{proof}
Lemma~\ref{lem:compressed_cover_tree_width_bound} says that $|\C(R_i)| \leq (c_m(R))^4 \cdot |R_{i}|$.  
 By Lemma~\ref{lem:time_lambdapoint} line $3$ of Algorithm~\ref{alg:UpdateCandidatesKNN} takes time 
 $O(\log(k) \cdot \C(R_i)) \leq O(c_m(R)^4 \cdot \log(k) \cdot |R_{i}|)$. Line $4$ takes time $c_m(R)^4 \cdot |R_{i}|$.
So the time complexity is dominated by line $3$ and the lemma holds. 
\end{proof}

Theorem~\ref{thm:paired_tree_knn_time} will substantially extend {\cite[Theorem~2]{curtin2015plug}} solving Problem~\ref{pro:knn} only for $k1$ to compressed cover trees to find all $k$-nearest neighbors for any $k \geq 1$. 
We compare the time complexity $c(R)^4c^5_{qr}(N + I_t(\mathcal{T}_q)) + \theta)$ from \cite[Theorem~2]{curtin2015plug} with Theorem~\ref{thm:paired_tree_knn_time} below.
\medskip

The sum of the past parameters $I_t(\mathcal{T}_q)$ and $\theta$ is replaced by the new imbalance $I(\T(Q), \T(R))$. 
The size $N = \max(|Q|, |R|)$ reduces to $O(|Q| + |H(\T(R))|)$. 
The expansion constant $c(R)$ is replaced by the minimized expansion constant $c_m(R)$ from Definition~\ref{dfn:expansion_constant}. 
The factor $c^5_{qr}$ becomes $\max\{ c^3_{qr}k , c_m(R)^7\} \log(k)$.
The appearance of $k$ is natural, because Problem~\ref{pro:knn} is solved for any $k\geq 1$. 
The difference in powers of $c^{5}_{qr}$ and $c_m(R)^7$ is explained by the need to include more nodes into the subset $R_{i-1}$ in line $4$ of Algorithm \ref{alg:UpdateCandidatesKNN}.

\begin{thmm}[Complexity for $k$-nearest neighbors based on paired trees]
\label{thm:paired_tree_knn_time}
In the notations of Theorem \ref{thm:paired_tree_knn_correctness}, let  $c_{qr} = \max_{q \in Q}c(R \cup \{q\})$. Then the time complexity of Algorithm~\ref{alg:paired_tree_knn} which finds $k$-nearest neighbors for all $q \in Q$ in set $R$ as stated in Problem \ref{pro:knn} is
\begin{ceqn}
\begin{equation}
\tag{\ref{thm:paired_tree_knn_time}}
\label{eqa:ComputationTimeKNN}
 O\Big(c_m(R)^4 \cdot \max\{ c_{qr}^3k ,(c_m(R))^7\}\cdot \log(k)\cdot \big( |H(\T(R))| + |Q| + I(\T(R),\T(Q)) \big) \Big ),
\end{equation}
\end{ceqn}
where the imbalance $I(\T(R),\T(Q))$ was introduced in Definition \ref{dfn:imbalance}.
\end{thmm}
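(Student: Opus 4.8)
The plan is to apply the general traversal complexity bound of Theorem~\ref{thm:paired_cover_tree_traversal_bound} with the two concrete subroutines FinalCandidates() and UpdateCandidates() plugged in, and then to carry out the bookkeeping needed to convert the abstract cost factors $B$ and $S$ into the explicit expansion-constant and $k$ factors appearing in $(\ref{eqa:ComputationTimeKNN})$. Since Theorem~\ref{thm:paired_cover_tree_traversal_bound} states that Algorithm~\ref{alg:StandardCoverTreeTraversal} runs in time $O\big((B+S)\cdot(|H(\T(R))| + |Q| + I(\T(Q),\T(R)))\big)$, where $B$ and $S$ are the worst-case costs of one FinalCandidates() call and one UpdateCandidates() call respectively, the heart of the argument is to bound $B$ and $S$ by $O\big(c_m(R)^4 \cdot \max\{c_{qr}^3 k, (c_m(R))^7\}\cdot \log(k)\big)$. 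The correctness half of the statement is already supplied by Theorem~\ref{thm:paired_tree_knn_correctness}, so only the running time needs to be proved here.

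First I would recall from Lemma~\ref{lem:FinalCandidates_knn_time} that a single FinalCandidates() call on input $(i,j,q,R_i)$ costs $O(\log(k)\cdot |R_i|)$, and from Lemma~\ref{lem:UpdateCandidates_knn_time} that a single UpdateCandidates() call costs $O(c_m(R)^4\cdot \log(k)\cdot |R_i|)$. So both $B$ and $S$ are governed by a universal bound on the size $|R_i|$ of any candidate set encountered during the traversal. The key step is therefore a \emph{width bound on the candidate sets}: I claim that every $R_i$ produced in line~\ref{line:adt:SetRiDefinition}–\ref{line:adt:endReferenceExpansion} of Algorithm~\ref{alg:StandardCoverTreeTraversal} has $|R_i| = O\big(\max\{c_{qr}^3 k, (c_m(R))^7\}\big)$. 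To see this, note that $R_{i}$ is a subset of a cover set $C_{i}$, so it is $2^{i}$-sparse by the separation condition (\ref{dfn:cover_tree_compressed}c), and by construction in line~\ref{line:dual_tree_knn_UpdateCandidates:last} of Algorithm~\ref{alg:UpdateCandidatesKNN} every point $a\in R_{i}$ satisfies $d(q,a)\le d(q,\lambda) + 2^{i+1} + 2^{j+2}$ for the query root $q$ that triggered this branch; combined with $j\le i$ (which holds because the reference expansion is only entered when $\max(l_{\min},j) < i$) this confines $R_{i}$ to a ball of radius $d(q,\lambda) + O(2^{i})$ around $q$. To control $d(q,\lambda)$ one invokes the $\beta$-point Lemma~\ref{lem:beta_point}: $d(q,\lambda)\le d(q,\beta) + 2^{i+1}$ where $\beta$ is among the first $k$ nearest neighbors of $q$, so $d(q,\beta)$ is at most the distance from $q$ to its $k$-th nearest neighbor in $R$. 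Then a packing argument (Lemma~\ref{lem:packing}) applied to the $2^{i}$-sparse set $R_i$ inside this ball, together with the expansion-constant estimate $|\bar B(q, 2^{i+c})\cap R| \le c_{qr}^{O(1)}\cdot |\bar B(q, \rho)\cap R|$ for a suitable $\rho$ comparable to the distance to the $k$-th neighbor (where $|\bar B(q,\rho)\cap R| = O(k)$), yields $|R_i| = O(c_{qr}^{3}k)$; and separately the $(c_m(R))^4$ factor from the width bound (Lemma~\ref{lem:compressed_cover_tree_width_bound}) applied to $\C(R_i)$ contributes the $(c_m(R))^7$ alternative term. The maximum of these two accounts for the $\max\{c_{qr}^3 k,(c_m(R))^7\}$ factor.

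Once the uniform bound $|R_i| = O\big(\max\{c_{qr}^3 k, (c_m(R))^7\}\big)$ is established, I substitute it into Lemmas~\ref{lem:FinalCandidates_knn_time} and~\ref{lem:UpdateCandidates_knn_time} to get
\[
B + S = O\Big(c_m(R)^4 \cdot \max\{c_{qr}^3 k, (c_m(R))^7\}\cdot \log(k)\Big),
\]
and plug this into Theorem~\ref{thm:paired_cover_tree_traversal_bound} to obtain exactly the claimed bound $(\ref{eqa:ComputationTimeKNN})$. A small amount of care is needed for one edge case: the precomputation of all essential distinctive descendant sizes $|\Sd_i(p,\T(R))|$ used inside Algorithm~\ref{alg:lambda} — by the remark following Lemma~\ref{lem:distinctive_descendants_precompute} this is an additive $O((c_m(R))^4\cdot|R|\cdot|H(\T(R))|)$ term, which is dominated by $(\ref{eqa:ComputationTimeKNN})$ since $I(\T(R),\T(Q))\ge |Q|$ and $|Q|$ can be as small as one while $|R|$ need not be — so I would either fold it in as part of the standing assumption that $\T(R)$ is preprocessed, or add it explicitly and observe domination when $|Q|$ is large; the theorem as stated assumes the trees are given, so the cleaner route is to treat the precomputation as part of tree construction.

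\textbf{Main obstacle.} The delicate part is the width bound on the candidate sets $R_i$, specifically pinning down which radius and which expansion constant ($c(R)$ versus $c_{qr} = \max_q c(R\cup\{q\})$) governs the count. The subtlety is that the relevant ball is centered at a query point $q\notin R$, so one must pass from $R$ to $R\cup\{q\}$, which is exactly why $c_{qr}$ rather than $c(R)$ appears; and the radius of that ball depends on $d(q,\lambda)$, which in turn must be tied back through Lemma~\ref{lem:beta_point} to a true $k$-nearest-neighbor distance so that the ball contains $O(k)$ reference points rather than an uncontrolled number. Getting the chain of inequalities — $\beta$-point bound, triangle inequality with $d(q,q')\le 2^{j+1}$ and $j\le i$, and the packing/expansion estimate — to compose cleanly with the right constants is where the real work lies; the rest is substitution into results already proved.
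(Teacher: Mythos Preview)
Your overall architecture is exactly the paper's: invoke Theorem~\ref{thm:paired_cover_tree_traversal_bound}, use Lemmas~\ref{lem:FinalCandidates_knn_time} and~\ref{lem:UpdateCandidates_knn_time} to reduce $B+S$ to $O(c_m(R)^4\log(k)\,|R_i|)$, and then prove a uniform width bound $|R_{i-1}|\le \max\{c_{qr}^3 k,\,(c_m(R))^7\}$ on the candidate sets. That is the right plan, and the final substitution is correct.

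Where your sketch goes astray is in attributing the two halves of the $\max$ to the wrong mechanisms. In the paper the bound on $|R_{i-1}|$ comes from a \emph{case split} on $d:=d(q,\lambda_k(q,\C(R_i)))$ versus $2^{i+2}$, and the two cases are handled by different tools:
\begin{itemize}
\item \textbf{Case $d\ge 2^{i+2}$} gives the $c_{qr}^3 k$ term, and it is \emph{not} a packing argument. One first uses Lemma~\ref{lem:beta_point} (after verifying via Lemmas~\ref{lem:cover_tree_dual_knn_correct} and~\ref{lem:child_set_equivalence} that the hypotheses hold) to get $d\le d(q,\beta)+2^{i}$ for some $\beta$ among the first $k$ neighbors, whence $d+2^{i+2}\le 4d(q,\beta)$. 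Then the \emph{expansion constant} of $R\cup\{q\}$ is applied directly three times to pass from $\bar B(q,4d(q,\beta))$ to $c_{qr}^3\,|\bar B(q,d(q,\beta)/2)|$, and the inner ball has fewer than $k$ points because otherwise $\beta$ could not be a $k$-th nearest neighbor. This is why $c_{qr}$ (and not $c(R)$ or $c_m(R)$) appears: the ball is centred at $q\notin R$.
\item \textbf{Case $d\le 2^{i+2}$} gives the $(c_m(R))^7$ term, and this \emph{is} the packing Lemma~\ref{lem:packing}: the candidate set sits in $\bar B(q,2^{i+3})\cap C_{i-1}$, which is $2^{i-1}$-sparse, so $4t/\delta+1=65\le 2^7$ and Lemma~\ref{lem:packing} yields the exponent $7$.
\end{itemize}
Your write-up conflates these: you say packing yields $c_{qr}^3 k$ and that ``the $(c_m(R))^4$ width bound applied to $\C(R_i)$ contributes the $(c_m(R))^7$ alternative'', neither of which is what actually happens. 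The width bound (Lemma~\ref{lem:compressed_cover_tree_width_bound}) is only used inside Lemma~\ref{lem:UpdateCandidates_knn_time} to control $|\C(R_i)|\le (c_m(R))^4|R_i|$; it is not the source of the exponent~$7$.

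One smaller point: you do not need to invoke Lemma~\ref{lem:beta_point} for the small-$d$ case at all, so your single chain ``$\beta$-point bound $+$ packing'' does not capture the logic. Make the case split explicit and route each case to the correct lemma; then the constants $3$ and $7$ fall out exactly as stated. The precomputation digression is unnecessary for this theorem and can be dropped.
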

\begin{proof}
By Theorem~\ref{thm:paired_cover_tree_traversal_bound} the complexity of Algorithm \ref{alg:StandardCoverTreeTraversal} is bounded by
\begin{equation}
\label{eqa:paired_tree_knn_time_B+S}
O\Big((B + S)\cdot(I(\T(Q), \T(R)) + |H(\T(R))| + |Q|)\Big).
\end{equation}
Therefore it suffices to bound the complexity $B$ of the function  FinalCandidates() and the complexity $S$ of the function UpdateCandidates(). 
By Lemma~\ref{lem:FinalCandidates_knn_time} and Lemma~\ref{lem:UpdateCandidates_knn_time} we have $$B + S =  O(\log(k) \cdot |R_{i}|) +  O( c_m(R)^4 \cdot \log(k) \cdot |R_{i}|)  \leq O(c_m(R)^4 \cdot \log(k) \cdot |R_{i}|).$$

Let us now bound the maximal size of the subset $R_{i}$. 
If $i = l_{\max}(\T(R))$ then $|R_i| = 1$. 
If $i < l_{\max}(\T(R))$, it suffices to estimate the size $|R_{i-1}|$.
Consider $R_{i-1} = \{a \in \C(R_i) \mid d(q, a) \leq d + 2^{j+2} + 2^{i+1}\},$ where $d = d(q,\lambda_k(q,\C(R_i)))$. 
Since $|R_i|$ can only change in the reference expansion lines \ref{line:adt:beginReferenceExpansion}-\ref{line:adt:endReferenceExpansion} of Algorithm \ref{alg:StandardCoverTreeTraversal}, we get $j + 1 \leq i$.  
Then $\C(R_{i}) \subseteq C_{i-1}$ gives
\begin{equation}
\label{eqa:QBound}
R_{i-1} = B(q, d+2^{i+1} + 2^{j+2}) \cap \C(R_i) \subseteq B(q, d+2^{i+2}) \cap C_{i-1}.
\end{equation}

\noindent
\textbf{Case} $d \geq 2^{i+2}$. 
By Lemma \ref{lem:cover_tree_dual_knn_correct} the set $\cup_{p \in R_i}\Sd_i(p, \T(R))$ contains all $k$-nearest neighbors of $q$.
By Lemma \ref{lem:child_set_equivalence} we have $$\cup_{p \in \C(R_i)}\Sd_{i-1}(p, \T(R)) = \cup_{p \in R_i}\Sd_{i}(p, \T(R)).$$ Therefore 
 $\cup_{p \in \C(R_i)}\Sd_{i-1}(p, \T(R))$ contains all $k$-nearest neighbors of $q$. 
\medskip
 
Lemma~\ref{lem:beta_point} applied for the index $i-1$ says that the reference set $R$ has a point $\beta$ among all $k$-nearest neighbors of $q$ such that $2^{i+2} \leq d \leq d(q,\beta) + 2^{i}$. 
Then $3\cdot2^{i} \leq d(q,\beta)$ and $$d + 2^{i+2} \leq d(q,\beta) + 2^{i} + 2^{i+2} \leq (1+\frac{5}{3})d(q,\beta) \leq 4d(q,\beta).$$
Combining the above estimate with inclusion~(\ref{eqa:QBound}) and Definition~\ref{dfn:expansion_constant}, we get
$$|R_{i-1}|  \leq |B(q,d(q,\beta) + 2^{i} + 2^{i+2})| \leq |B(q,4d(q,\beta))| \leq c_{qr}^3|B(q,\frac{d(q,\beta)}{2})|.$$
If $|B(q,\frac{d(q,\beta)}{2})| \geq k$, then $B(q,\frac{d(q,\beta)}{2})$ contains $k$ points $\alpha$ with $d(q,\alpha) \leq \frac{d(q,\beta)}{2}$. 
Then the point $\beta$ cannot be among the first $k$ nearest neighbors of $q$, which contradicts a choice of $\be$ above. 
Then $|B(q,\frac{d(q,\beta)}{2})| < k$, so $|R_{i-1}| \leq c_{qr}^3 \cdot k$. 
\medskip

\noindent
\textbf{Case} $d \leq 2^{i+2}$. 
Set $ t = 2^{i+3}$ . 
Inclusion~(\ref{eqa:QBound}) implies that
$$R_{i-1} \subseteq B(q,d + 2^{i+2}) \subseteq B(q,t) \cap C_{i-1}.$$
By separation condition~(\ref{dfn:cover_tree_compressed}c) for $\T(R)$, all points in $C_{i-1}$ are separated by $\delta = 2^{i-1}$. Then $\frac{4t}{\delta} + 1 = 2^{6} + 1 \leq 2^{7}$.
Lemma~\ref{lem:packing} implies that $|R_{i-1}| \leq |B(p,t) \cap C_{i-1}| \leq (c_m(R))^7$.
\medskip

The two cases above provide the upper bound $|R_{i-1}| \leq \max\{c_{qr}^3 k, (c_{m}(R))^7 \}$ for any $i$. 
\begin{equation*}
\text{Then }O\Big(B + S \Big) =O\Big( c_m(R)^4 \cdot \log(k) \cdot |R_{i}| \Big) = O\Big(c_m(R)^4 \cdot \max\{ c_{qr}^3k ,(c_m(R))^7\}\cdot \log(k)\Big)
\end{equation*}
is obtained from the time complexity in (\ref{eqa:paired_tree_knn_time_B+S}) after replacing $|R_{i-1}|$ by its upper bound.
\end{proof}

\begin{cor}[simplified time complexity for all $k$-nearest neighbors based on paired trees]
\label{cor:paired_tree_knn_time}
The time complexity of Algorithm~\ref{alg:paired_tree_knn} from  Theorem \ref{cor:paired_tree_knn_time} has the simpler upper bound
\begin{ceqn}
\begin{equation}
\tag{\ref{cor:paired_tree_knn_time}}
\label{eqa:cor:ComputationTimeKNN}
 O\Big(c^{11}_{qr} \cdot k\log(k)\cdot \big( |H(\T(R))| + |Q| + I(\T(R),\T(Q)) \big) \Big ),
\end{equation}
\end{ceqn}
see the height $|H(\T(R))|$ and the imbalance $I(\T(R),\T(Q))$ in Definitions~\ref{dfn:depth} and~\ref{dfn:imbalance}.
\end{cor}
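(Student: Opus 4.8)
The plan is to derive the simplified bound directly from the statement of Theorem~\ref{thm:paired_tree_knn_time} by absorbing all the separate constant factors into a single power of $c_{qr}$. First I would recall the inequality $c_m(R) \le c(R) \le c_{qr}$: the first inequality is Lemma~\ref{lem:expansion_constant_property}, and the second holds because $R \subseteq R\cup\{q\}$ for any $q\in Q$, so $c_m(R)\le c_m(R\cup\{q\})\le c(R\cup\{q\})\le c_{qr}$ again by Lemma~\ref{lem:expansion_constant_property} together with the definition $c_{qr}=\max_{q\in Q}c(R\cup\{q\})$. Hence every occurrence of $c_m(R)$ in (\ref{eqa:ComputationTimeKNN}) can be replaced by $c_{qr}$, only enlarging the bound.

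Next I would simplify the factor $c_m(R)^4\cdot\max\{c_{qr}^3 k, (c_m(R))^7\}$. Replacing $c_m(R)$ by $c_{qr}$ gives the upper bound $c_{qr}^4\cdot\max\{c_{qr}^3 k, c_{qr}^7\}$. Since $c_{qr}\ge 2$ by Definition~\ref{dfn:expansion_constant}, we have $\max\{c_{qr}^3 k, c_{qr}^7\}\le c_{qr}^7 k$ (using $k\ge 1$ for the first term and $k\ge 1$ again, i.e. $c_{qr}^7 \le c_{qr}^7 k$, for the second). Therefore $c_m(R)^4\cdot\max\{c_{qr}^3 k,(c_m(R))^7\}\le c_{qr}^{4}\cdot c_{qr}^{7} k = c_{qr}^{11} k$. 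Substituting this into (\ref{eqa:ComputationTimeKNN}) yields exactly
\begin{equation*}
O\Big(c^{11}_{qr} \cdot k\log(k)\cdot \big( |H(\T(R))| + |Q| + I(\T(R),\T(Q)) \big)\Big),
\end{equation*}
which is (\ref{eqa:cor:ComputationTimeKNN}).

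This is essentially a bookkeeping argument, so there is no real obstacle; the only point requiring a moment's care is justifying $c_m(R)\le c_{qr}$ through the correct chain of inclusions and the two parts of Lemma~\ref{lem:expansion_constant_property}, and keeping track of the hypothesis $k\ge 1$ so that the $\max$ can be bounded by a single term $c_{qr}^7 k$. Everything else is monotonicity of the $O(\cdot)$ expression in each of its constant factors.
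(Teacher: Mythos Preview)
Your proposal is correct and is exactly the intended derivation: the paper states the corollary without proof, since it follows immediately from Theorem~\ref{thm:paired_tree_knn_time} by bounding $c_m(R)\le c_{qr}$ and $\max\{c_{qr}^3k,\,c_{qr}^7\}\le c_{qr}^7k$, giving $c_m(R)^4\cdot\max\{c_{qr}^3k,(c_m(R))^7\}\le c_{qr}^{11}k$. One cosmetic remark: the displayed chain ``$c_m(R)\le c(R)\le c_{qr}$'' is not what you actually use (and $c(R)\le c_{qr}$ is not obvious), but your subsequent argument correctly establishes $c_m(R)\le c_m(R\cup\{q\})\le c(R\cup\{q\})\le c_{qr}$ via Lemma~\ref{lem:expansion_constant_property}, which is all that is needed.
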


\section{Conclusions, discussion and further applications of the new results}
\label{sec:Conclusions}

This paper was motivated by the remaining challenges in the neighbor search based on paired trees, see Problem~\ref{pro:knn} in section~\ref{sec:intro}. 
Section~\ref{sec:review} described key steps in the past proofs that require more justifications, see detailed Counterexamples~\ref{cexa:dualtreecode} and~\ref{cexa:dualtreeproof} in appendix~\ref{sec:challenges}.
\medskip

Main Theorem~\ref{thm:paired_tree_knn_time} has completely solved Problem~\ref{pro:knn} for any $k\geq 1$ in any metric space and also improved the recent time complexity based on a single tree in \cite[Theorem~6.5]{elkin2021new}.
\medskip

In comparison with the 2015 estimate in \cite[Theorem~2]{curtin2015plug}, Theorem~\ref{thm:paired_tree_knn_time} extended the time complexity to any number of neighbors $k\geq 1$.
Even for $k=1$, the complexity was improved due to the new height and imbalance.
These parameters capture the hardness of the query and reference sets for neighbor search.
If all parameters have constant bounds, the final complexity $O(k\log k\max\{|Q|,|R|\})$ is near linear with respect to the key variables.
\medskip

As a practical application, the time complexities in \cite[Theorem~6.5]{elkin2021new} and Theorem~\ref{thm:paired_tree_knn_time} helped prove a near linear time complexity for new invariants of crystals in \cite[Theorem~14]{widdowson2022average}.
These invariants are formed by distances from atoms to their nearest neighbors in an infinite periodic crystal.
A modest desktop over a couple of days completed 400M+ pairwise comparisons of the above neighbor-based invariants for all 660K+ periodic crystals in the world's largest Cambridge Structural Database.
The resulting five pairs of impossible duplicates are now investigated by the journals that published the underlying papers.
\medskip

One can extend the developed methods in several directions.
Similarly to \cite[Theorem~7.4]{elkin2021new}, we conjecture that the $(1+\epsilon)$-approximate version of Problem \ref{pro:knn} can be solved in time $O\Big(c_m(R)^{\log_2(O(1/\ep))} \log(k) \cdot ( I(\T(R), \T(Q)) + \log_2(\Delta) + |Q|) + k \Big)$, where $c_m(R)$ is the minimized expansion constant from Definition~\ref{dfn:expansion_constant}, $\Delta$ is the aspect ratio of $R$, see Definition~\ref{dfn:radius+d_min}. 
\medskip

\cite[Section~2.3.1]{beygelzimer2006coverExtend} hints at a possibility to improve \cite[Algorithm~4.3]{elkin2021new} for a compressed cover tree to reduce the complexity $O(c_m(R)^{O(1)} \cdot |R| \cdot |H(\T(R))|)$ to $O(c_m(R)^{O(1)}\cdot I( \T(R) , \T(R))$.
The time complexity for a Minimum Spanning Tree \cite{EMST} estimates the number of recursions with the same issues as in the proof of \cite[Theorem~5]{beygelzimer2006cover} discussed in section~\ref{sec:review}.
The challenging datasets from Counterexamples~\ref{cexa:dualtreecode} and~\ref{cexa:dualtreeproof} in appendix~\ref{sec:challenges}
 imply that the proof of \cite[Theorem~5.1]{EMST} can be improved by using the novel concept of the imbalance of paired trees. 
\medskip

This research was supported by the £3.5M EPSRC grant ‘Application-driven Topological Data Analysis’ (2018-2023, EP/R018472/1), the £10M Leverhulme Research Centre for Functional Materials Design and the last author’s Royal Academy of Engineering Fellowship ‘Data Science for Next Generation Engineering of Solid Crystalline Materials’ (IF2122/186).
\medskip

The authors are grateful to all reviewers for their valuable time and comments in advance.

\bibliographystyle{plainurl}
\bibliography{neighbor_search}

\begin{thebibliography}{10}

\bibitem{andoni2018approximate}
Alexandr Andoni, Piotr Indyk, and Ilya Razenshteyn.
\newblock Approximate nearest neighbor search in high dimensions.
\newblock In {\em Proceedings of the International Congress of Mathematicians:
  Rio de Janeiro 2018}, pages 3287--3318. World Scientific, 2018.

\bibitem{arya1993approximate}
Sunil Arya and David~M Mount.
\newblock Approximate nearest neighbor queries in fixed dimensions.
\newblock In {\em SODA}, volume~93, pages 271--280. Citeseer, 1993.

\bibitem{bentley1975multidimensional}
Jon~Louis Bentley.
\newblock Multidimensional binary search trees used for associative searching.
\newblock {\em Communications of the ACM}, 18(9):509--517, 1975.

\bibitem{beygelzimer2006coverExtend}
Alina Beygelzimer, Sham Kakade, and John Langford.
\newblock Extended version of "cover trees for nearest neighbor".
\newblock URL: \url{https://hunch.net/~jl/projects/cover_tree/paper/paper.pdf}.

\bibitem{beygelzimer2006cover}
Alina Beygelzimer, Sham Kakade, and John Langford.
\newblock Cover trees for nearest neighbor.
\newblock In {\em Proceedings of ICML}, pages 97--104, 2006.

\bibitem{Cormen1990}
Thomas Cormen.
\newblock {\em Introduction to algorithms}.
\newblock MIT Press McGraw-Hill, Cambridge, Mass. New York, 1990.

\bibitem{cover1967nearest}
Thomas Cover and Peter Hart.
\newblock Nearest neighbor pattern classification.
\newblock {\em IEEE transactions on information theory}, 13(1):21--27, 1967.

\bibitem{curtin2015plug}
Ryan~R Curtin, Dongryeol Lee, William~B March, and Parikshit Ram.
\newblock Plug-and-play dual-tree algorithm runtime analysis.
\newblock {\em J. Mach. Learn. Res.}, 16:3269--3297, 2015.

\bibitem{curtin2015improving}
Ryan~Ross Curtin.
\newblock {\em Improving dual-tree algorithms}.
\newblock PhD thesis, Georgia Institute of Technology, 2015.

\bibitem{elkin2021new}
Yury Elkin and Vitaliy Kurlin.
\newblock A new compressed cover tree guarantees a near linear parameterized
  complexity for all $k$-nearest neighbors search in metric spaces, 2021.
\newblock \href {http://arxiv.org/abs/arXiv:2111.15478}
  {\path{arXiv:arXiv:2111.15478}}.

\bibitem{finkel1974quad}
Raphael~A Finkel and Jon~Louis Bentley.
\newblock Quad trees a data structure for retrieval on composite keys.
\newblock {\em Acta informatica}, 4(1):1--9, 1974.

\bibitem{gray2001n}
Alexander~G Gray and Andrew~W Moore.
\newblock N-body'problems in statistical learning.
\newblock {\em Advances in neural information processing systems}, pages
  521--527, 2001.

\bibitem{krauthgamer2004navigating}
Robert Krauthgamer and James~R Lee.
\newblock Navigating nets: Simple algorithms for proximity search.
\newblock In {\em Proceedings of the fifteenth annual ACM-SIAM symposium on
  Discrete algorithms}, pages 798--807. Citeseer, 2004.

\bibitem{manocha2007empirical}
S~Manocha and Mark~A Girolami.
\newblock An empirical analysis of the probabilistic k-nearest neighbour
  classifier.
\newblock {\em Pattern Recognition Letters}, 28(13):1818--1824, 2007.

\bibitem{EMST}
W.~March, P.~Ram, and A.~Gray.
\newblock Fast euclidean minimum spanning tree: algorithm, analysis, and
  applications.
\newblock In {\em Proceedings of SIGKDD}, pages 603--612, 2010.

\bibitem{ram2009linear}
Parikshit Ram, Dongryeol Lee, William March, and Alexander Gray.
\newblock Linear-time algorithms for pairwise statistical problems.
\newblock {\em Advances in Neural Information Processing Systems},
  22:1527--1535, 2009.

\bibitem{wang2021comprehensive}
Mengzhao Wang, Xiaoliang Xu, Qiang Yue, and Yuxiang Wang.
\newblock A comprehensive survey and experimental comparison of graph-based
  approximate nearest neighbor search.
\newblock {\em arXiv preprint arXiv:2101.12631}, 2021.

\bibitem{widdowson2022average}
Daniel Widdowson, Marco Mosca, Angeles Pulido, Vitaliy Kurlin, and Andrew
  Cooper.
\newblock Average minimum distances of periodic point sets.
\newblock {\em MATCH Communications in Mathematical and in Computer Chemistry,
  to appear}.
\newblock URL: \url{https://arxiv.org/abs/2009.02488}.

\end{thebibliography}


\appendix

\section{Challenging data for a nearest neighbor search based on paired trees}
\label{sec:challenges}

This section discusses $k$-nearest neighbor Algorithm~\ref{alg:cover_tree_k-nearest_dt_original} based on paired trees from \cite{ram2009linear}. 
\medskip

\begin{algorithm}
\caption{
Original \cite[Algorithm~1]{ram2009linear} is analyzed in Counterexamples~\ref{cexa:dualtreecode} and~\ref{cexa:dualtreeproof}.
}
\label{alg:cover_tree_k-nearest_dt_original}
\begin{algorithmic}[1]
\STATE \textbf{Function} FindAllNN(a node $q_j\in T(Q)$, a subset $R_i$ of a cover set $C_i$ of $T(R)$).
\IF {$i = -\infty$}
\STATE for each $q_j \in L(q_j)$ \textbf{return} $\text{argmin}_{r \in R_{-\infty}} d(q,r)$
\STATE \COMMENT{here $L(q_j)$ is the set of all descendants of the node $q_j$}
\ELSIF{$j < i$}
\STATE $\C(R_i) = \{\text{Children}(r) \mid r \in R_i\} $ \COMMENT{in original pseudo-code the notation is $R = \C(R_i)$}
\STATE $R_{i-1} = \{r \in R \mid d(q_j,r) \leq d(q_j, R) + 2^{i} + 2^{j+2} \}$
\STATE FindAllNN($q_{j-1}, R_i$) \COMMENT{ $q_{j-1}$ is the same point as $q_j$ on one level below}
\ELSE{}
\STATE for each $p_{j-1} \in \text{Children}(q_j)$ FindAllNN($p_{j-1},R_{i}$)
\ENDIF
\end{algorithmic}
\end{algorithm}

Counterexample \ref{cexa:dualtreecode} shows how Algorithm \ref{alg:cover_tree_k-nearest_dt_original} cannot be used for finding non-trivial nearest neighbors in the case $Q = R$. 
Counterexample \ref{cexa:dualtreeproof} provides a cover tree that should, according to the proof of \cite[Theorem~3.1]{ram2009linear}, run at most $O(\max\{|Q|,|R|\} \cdot \sqrt{\max\{|Q|,|R|\}}$ reference expansions (lines 5-8) during the whole execution of Algorithm~\ref{alg:cover_tree_k-nearest_dt_original}. 
\medskip

The step-by-step execution of Algorithm~\ref{alg:cover_tree_k-nearest_dt_original} will show that the number of reference expansions has a lower bound $O(\max\{|Q|,|R|\}^2)$. 
These issues are resolved in sections \ref{sec:generaldualtreetheory} and \ref{sec:CorrectDualTreeKNN} by rewriting the algorithm for compressed cover trees and by using the new parameters: the height of a compressed cover tree in Definition \ref{dfn:depth} and the imbalance in Definition~\ref{dfn:imbalance}. 
\medskip
 
Recall that \cite[End of Section~1]{ram2009linear} defined the all-nearest-neighbor problem as follows. 
"\textbf{All Nearest-neighbors:} For all queries $q \in Q$ find $r^{*}(q) \in R$ such that $r^{*}(q) = \mathrm{argmin}_{r \in R}d(q,r)"$. 
For $Q=R$, the last formula produces trivial self-neighbors.
\medskip
 
In original Algorithm~\ref{alg:cover_tree_k-nearest_dt_original}, the node $q_j$ has a level $j$, a reference subset $R_i\subset R$ is a subset of $C_i$ for an implicit cover tree $T(R)$. 
The algorithm is called for a pair $q_j, R_{i} = \{r\}$, where $q_j$ is the root of the query tree at the maximal level $j = l_{\max}(T(Q))$,
and $r$ is the root of the reference tree at the maximal level $i = l_{\max}(T(R))$. 
Split Algorithm~\ref{alg:cover_tree_k-nearest_dt_original} into these blocks:
\smallskip

\noindent
lines 2-4 : FinalCandidates,
\smallskip

\noindent
lines 5-9 : reference expansion,
\smallskip

\noindent
lines 9-11 : query expansion. 

\begin{exa}[tall imbalanced tree {\cite[Example~3.1]{elkin2021new}}]
\label{exa:tall_imbalanced_tree}
For any integer $m > 10$, let $G$ be a metric graph pictured in Figure \ref{fig:GraphConstructionOfExample} that has two vertices $r,q$ and $m+1$ edges $(e_i)$ for $i \in \{0, ..., m\}$, and the length of each edge $e_i$ is $|e_i| = 2^{m \cdot i +2}$ for $i \geq 1$.
Finally, set $|e_0| = 1$. 
\medskip

For every $i \in \{1, ..., m^2\}$ if $i$ is divisible by $m$ we set $p_{i}$ be the middle point of $e_{i / m}$ and for every other $i$ we define $p_i$ to be the middle point of segment $(p_{i+1}, q)$. 
\medskip

Let $d$ be the induced shortest path metric on the continuous graph $G$.
Then $d(q,r) = 1$, $d(r, p_{i}) = 2^{i+1} + 1$, $d(q,p_{i}) = 2^{i} $ and if $i > j $ and $\ceil{\frac{i}{m}} = \ceil{\frac{j}{m}}$ we have $ d(p_{j}, p_{i})  = \sum\limits_{t = j+1}^{i} 2^{t}$. 
We consider the reference set $R = \{r\} \cup \{p_{i} \mid i \in \{1,2,3,...,m^2\} \}$ with the metric $d$.
\medskip

Let us define a compressed cover tree $\T(R)$ by setting $r$ to be the root node and $l(p_{i}) = i$ for all $i$. 
If $i$ is divisible by $m$, we set  $r$ to be the parent of $p_{i}$. 
If $i$ is not divisible by $m$, we set $p_{i+1}$ to be the parent of $p_{i}$. 
For every $i$ divisible by $m$, the point $p_i$ is in the middle of edge $e_{i / m}$, hence $d(p_{i}, r) \leq 2^{i + 1}$.
\medskip
 
For every $i$ not divisible by $m$, by the definition $p_i$ is middle point of $(p_{i+1},q)$ and therefore we have $d(p_i, p_{i+1}) \leq 2^{i+1}$. Since for any point $p_i$ distance to its parent is at most $2^{i+1}$, it follows that $\T(R)$ satisfies the covering condition of Definition \ref{dfn:cover_tree_compressed}. 
\medskip

For any integer $t$, the cover set is $C_t = \{r\} \cup \{ p_i \mid i \geq t\}$. 
We will prove that $C_t$ satisfies the separation condition. 
Let $p_{i} \in C_t$.
If $i$ is divisible by $m$, then 
$d(r, p_i) = 2^{i+1} \geq 2^{t + 1} > 2^t.$ 
If $i$ is not divisible by $m$, then
$d(r, p_{i}) = d(r,q) + d(q,p_{i})  = 1 + 2^{i+1}  > 2^{t}$. 
\medskip

Then the root $r$ is separated from the other points by the distance $2^t$. 
Consider arbitrary points $p_{i}$ and $p_{j}$ with indices $i > j \geq t $ and $\ceil{\frac{i}{m}} = \ceil{\frac{j}{m}}$.
Then
$$d(p_{i}, p_{j}) = \sum^{i}_{s = j+1} 2^{s}  \geq 2^{j+1} \geq 2^{t+1} > 2^{t}.$$
On the other hand, if $i > j \geq t $ and $\ceil{\frac{i}{m}} \neq \ceil{\frac{j}{m}}$, then
$$d(p_{i} , p_{j}) = d(p_{i},q) + d(p_{j} ,q)  \geq  2^{i} + 2^{j} \geq 2^{j+1} \geq 2^{t+1} > 2^t.$$
For any $t$, we have shown that all pairwise combinations of points of $C_{t}$ satisfy the separation condition in Definition~\ref{dfn:cover_tree_compressed}.
Hence this condition holds for the whole tree $\T(R)$. 
\bs
\end{exa}

Recall that in \cite[Section~2]{beygelzimer2006cover} the explicit representation of cover tree was defined as 
"the explicit representation of the tree coalesces all nodes in which the only child is a self-child". Simplest way to interpret this is to consider cover sets $C_i$ and define $p \in C_i$ to be an explicit node, if $p$ has child at level $i-1$.  
By \cite[Lemma~4.3]{beygelzimer2006cover} the depth of any node $p$ is "defined as the number of explicit grandparent nodes on the path from the root to $p$ in the lowest level in which $p$ is explicit". 
The explicit depth of a node $p$ in any compressed tree $\T$ will be defined in Definition \ref{dfn:explicit_depth_for_compressed_cover_tree} using the simplest interpretation of the quotes above.

\begin{lem}[explicit depth for a compressed cover tree {\cite[Lemma~3.2]{elkin2021new}}]
\label{dfn:explicit_depth_for_compressed_cover_tree}
Let $R$ be a finite subset of a metric space with a metric $d$.
Let $\T(R)$ be a compressed cover tree on $R$. 
For any $p \in \T(R)$, let $s = (w_0, ... , w_m)$ be a node-to-root path of $p$. 
Then the explicit depth $D(p)$ of node $p$ belonging to compressed cover tree can be interpreted as the sum 
$$D(p) = \sum^{m-1}_{i = 0}| \{q \in \Child(w_{i+1}) \mid l(q) \in [l(w_i), l(w_{i+1}) - 1] \} |. $$
\bs
\end{lem}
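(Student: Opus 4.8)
The plan is to translate the notion of \emph{explicit depth} of \cite{beygelzimer2006cover} into the language of the compressed cover tree $\T(R)$, by first matching the implicit and explicit cover trees to $\T(R)$, then tracing the relevant root-to-$p$ path in the explicit tree, and finally counting its nodes segment by segment. The statement is essentially a faithful reinterpretation, so the work is in pinning down the correspondence and the boundary bookkeeping.

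First I would recall how $T(R)$ and its explicit version relate to $\T(R)$. In the implicit cover tree the cover sets coincide with the sets $C_i$ of Definition~\ref{dfn:cover_tree_compressed}; a point $a$ occurs at every level $i\le l(a)$; for $i<l(a)$ the level-$i$ copy of $a$ is the self-parent of its level-$(i-1)$ copy; and for a non-root $a$ the level-$l(a)$ copy has the level-$(l(a)+1)$ copy of the $\T(R)$-parent of $a$ as its parent. Passing to the explicit representation coalesces every maximal chain of copies of one point whose only child is the self-child. Hence the surviving (explicit) copies of $a$ sit exactly at the levels witnessed by a non-self child of $a$, together with the top and bottom of the range of $a$; this is precisely the set $\Es(a,\T(R))$ of Definition~\ref{dfn:essential_levels_node}, as noted in the remark following it. So there is a level-preserving correspondence between the explicit-tree copies of each point and its essential levels in $\T(R)$.

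Next I would trace the path in the explicit tree from the root down to the lowest explicit copy of $p$. The ``current point'' along this path changes exactly at the nodes $w_m,w_{m-1},\dots,w_0=p$ of the $\T(R)$ node-to-root path of $p$: while the current point is $w_{i+1}$ the path runs through the explicit copies of $w_{i+1}$ from level $l(w_{i+1})$ down to level $l(w_i)+1$, because the level-$(l(w_i)+1)$ copy of $w_{i+1}$ is the parent of the level-$l(w_i)$ copy of $w_i$, which begins the next segment. By the coalescing rule, the explicit copies of $w_{i+1}$ lying in the window $[l(w_i)+1,\,l(w_{i+1})]$ are exactly those witnessed by children of $w_{i+1}$ whose levels lie in $[l(w_i),\,l(w_{i+1})-1]$, i.e.\ by the set $\{q\in\Child(w_{i+1})\mid l(q)\in[l(w_i),\,l(w_{i+1})-1]\}$. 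The explicit copies of $p=w_0$ itself are not among the ``explicit ancestor'' nodes counted by the depth, so the $i=0$ term already starts at $\Child(w_1)$. Summing these contributions over $i=0,\dots,m-1$ yields the number of explicit ancestor nodes that \cite{beygelzimer2006cover} calls the explicit depth, which is the claimed formula.

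The main obstacle, and where the care has to go, is the boundary bookkeeping: ensuring the shared endpoint level $l(w_i)+1$ (top of $w_i$'s window versus bottom of $w_{i+1}$'s window) is attributed to exactly one segment, handling the topmost copy of the root with respect to $l_{\max}$ versus $l(\mathrm{root})$, treating the lowest explicit copy of $p$ correctly when $p$ is a leaf of $\T(R)$, and above all reconciling ``counting explicit copies of $w_{i+1}$ on the path'' (a count over levels) with the children-count $|\{q\in\Child(w_{i+1})\mid l(q)\in[l(w_i),l(w_{i+1})-1]\}|$ — which is exactly the place where the ``simplest interpretation'' of the phrasing in \cite{beygelzimer2006cover} is being fixed, and which I would justify via the coalescing rule above.
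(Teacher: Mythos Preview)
Your proposal is correct and follows essentially the same route as the paper: segment the implicit-tree root-to-$p$ path by the compressed-tree ancestors $w_{i+1}$, identify which copies of $w_{i+1}$ along each segment are ``explicit'' via the presence of non-self children at the corresponding levels, and sum. The paper's proof is a terse two-line version of exactly this argument (it simply notes the $l(w_{j+1})-l(w_j)-1$ intermediate copies and equates explicit ones with the children count), whereas you are more careful about the endpoint bookkeeping and the levels-versus-children subtlety; but the underlying idea is identical.
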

\begin{proof}
Note that the node-to-root path of an implicit cover tree on $R$ has $l(w_{j+1}) - l(w_{j}) - 1$ extra copies of $w_{j+1}$ between every $w_{j}$ and $w_{j+1}$ for any index $j \in [0,m-1]$. Recall that a node is called explicit, if it has non-trivial children. 
Therefore there will be exactly $$| \{q \in \Child(w_{i+1}) \mid l(q) \in [l(w_i), l(w_{i+1}) - 1] \} | $$ 
explicit nodes between $w_{j}$ and $w_{j+1}$.
It remains to take the total sum.
\end{proof}

\begin{lem}[{\cite[Lemma~3.3]{elkin2021new}}]
\label{lem:tall_imbalanced_tree_explicit_depth}
Let $\T(R)$ be a compressed cover tree on the set $R$ from Example \ref{exa:tall_imbalanced_tree} for some $m \in \Z$. 
For any $p \in R$, the explicit depth $D(p)$ of Definition~\ref{dfn:explicit_depth_for_compressed_cover_tree} has the upper bound $2m+1$. 
\bs
\end{lem}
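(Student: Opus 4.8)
The plan is to evaluate $D(p)$ exactly from the formula in Lemma~\ref{dfn:explicit_depth_for_compressed_cover_tree}, reading the node-to-root path and the relevant children sets straight off the construction of $\T(R)$ in Example~\ref{exa:tall_imbalanced_tree}. If $p=r$ is the root, the node-to-root path is the single node $(r)$, the sum defining $D(r)$ is empty, and $D(r)=0\le 2m+1$, so there is nothing to prove. Hence I would assume $p=p_i$ for some $i\in\{1,\dots,m^2\}$ and set $b=\ceil{i/m}$, the index of the block containing $i$, so that $(b-1)m< i\le bm$. Using the parent rule of Example~\ref{exa:tall_imbalanced_tree} (the parent of $p_j$ is $p_{j+1}$ whenever $m\nmid j$, and the parent of $p_{bm}$ is $r$), and noting that none of $i,i+1,\dots,bm-1$ is divisible by $m$, the node-to-root path of $p_i$ is $w_0=p_i,\ w_1=p_{i+1},\ \dots,\ w_{bm-i}=p_{bm},\ w_{bm-i+1}=r$, which degenerates to $(p_{bm},r)$ when $i=bm$.

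Next I would evaluate the summands of $D(p_i)$ term by term. For each $t$ with $0\le t\le bm-i-1$ we have $w_t=p_{i+t}$ and $w_{t+1}=p_{i+t+1}$, the level interval $[l(w_t),l(w_{t+1})-1]$ collapses to the single level $i+t$, and the children of $p_{i+t+1}$ are exactly $p_{i+t+1}$ itself together with $p_{i+t}$ (a genuine child, because $i+t$ is not a multiple of $m$ since $(b-1)m< i+t< bm$); so precisely one child sits at level $i+t$ and this summand equals $1$. For the last index $t=bm-i$ we have $w_t=p_{bm}$, $w_{t+1}=r$, the interval is $[bm,l(r)-1]$, the children of $r$ are $r$ together with $p_m,p_{2m},\dots,p_{m^2}$, and since $l(r)\ge 1+m^2$ those whose level lies in $[bm,l(r)-1]$ are exactly $p_{bm},p_{(b+1)m},\dots,p_{m^2}$, i.e.\ $m-b+1$ of them. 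Summing over all $t$ gives $D(p_i)=(bm-i)+(m-b+1)$.

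Finally I would bound this: $bm-i\le m-1$ because $i\ge(b-1)m+1$, and $m-b+1\le m$ because $b\ge 1$, hence $D(p_i)\le(m-1)+m=2m-1\le 2m+1$, which is the claim (in fact the stated bound is not even tight). I do not expect a genuine obstacle here, since the whole argument is bookkeeping on an explicitly described tree. The only place that demands a little care is verifying that for every $t$ in the chain range the index $i+t$ stays strictly between $(b-1)m$ and $bm$ — so that $p_{i+t}$ really is a child of $p_{i+t+1}$ and contributes exactly one explicit node — together with keeping the block boundary $bm$ and the single off-by-one in the length of the path straight.
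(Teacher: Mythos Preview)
Your proof is correct and follows essentially the same approach as the paper: identify the node-to-root path of $p_i$ as the chain $p_i,p_{i+1},\dots,p_{bm},r$ with $b=\ceil{i/m}$, count one explicit node per chain step and $m-b+1$ at the final step to $r$, and bound the resulting sum $(bm-i)+(m-b+1)$. Your bookkeeping is in fact slightly tighter than the paper's (yielding $2m-1$ rather than $2m+1$), and you additionally handle the trivial root case $p=r$ explicitly.
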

\begin{proof}
For any $p_i$, if $i$ is divisible by $m$, then $r$ is the parent of $p_i$. By definition, the explicit depth is $D(p_i) = |\{p \in \Child(r) \mid l(p) \in [l(p_i), m^2] \} |$. 
Since $r$ contains children on every level $j$, where $j$ is divisible by $m$, we have $D(p_i) =  m - \frac{i}{m} + 1$. 
\medskip

Let us now consider an index $i$ that is not divisible by $m$. 
Note that  $p_{j+1}$ is the parent of $p_j$ for all $j \in [i, m \cdot \ceil{i / m} - 1]$.  
Then the path consisting of all ancestors of $p_i$ from $p_i$ to the root node $r$ has the form $(p_{i}, p_{i+1}, ..., p_{m \cdot \ceil{i / m}} , r)$. It follows that
$$D(p_i) = \sum^{m \cdot \ceil{i / m} -1}_{j = i}| \{p \in \Child(p_j) \mid l(p) \in [l(p_{j}), l(p_{j+1}) - 1] \}| + D(p_{m \cdot \ceil{i / m}}).
 $$
Since $i \geq m \cdot (\ceil{i / m} -1) + 1 $ and $\ceil{i / m} \geq 1$, we get the required upper bound: $$D(p_i) =  (m \cdot  \ceil{i / m} - i ) + (m - \ceil{i / m} + 1) \leq m + (m+1) = 2m + 1.$$ 
\end{proof}

\begin{figure}
    \centering
    \tikzset{markpos/.style args={#1 at #2}{decoration={
  markings,
  mark=at position #2 with {\coordinate(#1);}},postaction={decorate}}}
  
\begin{tikzpicture}
\node[circle,fill=black,inner sep=2pt,draw,label = below:{$r$}] (a) at (180:6cm) {};
\node[circle,fill=black,inner sep=2pt,draw,label = below:{$q$}] (b) at (0:6cm) {};
\draw[thick] (a) edge[bend left=90, markpos=mymark1 at 0.5] (b);
\draw[thick] (a) edge[bend left=90, markpos=edgemark1 at 0.30] (b);
\draw[thick] (a) edge[bend left=90, markpos=anothermark1 at 0.75] (b);

\draw[thick] (a) edge[bend left=50, markpos=mymark2 at 0.5] (b);
\draw[thick] (a) edge[bend left=50, markpos=edgemark2 at 0.25] (b);
\draw[thick] (a) edge[bend left=50, markpos=anothermark2 at 0.75] (b);

\draw[thick] (a) edge[bend left=13, markpos=mymark3 at 0.5] (b);
\draw[thick] (a) edge[bend left=13, markpos=edgemark3 at 0.33] (b);
\draw[thick] (a) edge[bend left=13, markpos=anothermark3 at 0.75] (b);

\draw[thick] (a) edge[markpos=edgemark4 at 0.33] (b);

\node[circle,fill=black,inner sep=2pt,draw,label = {$p_{m^2}$}] (pm1) at (mymark1) {};
\node[circle,fill=black,inner sep=2pt,draw,label = {$p_{m^2-m}$}] (pm2) at (mymark2) {};
\node[circle,fill=black,inner sep=2pt,draw,label = below:{$p_{m}$}] (p2) at (mymark3) {};

\node[circle,fill=black,inner sep=2pt,draw,label = {$p_{m^2-1}$}] (pm1z) at (anothermark1) {};
\node[circle,fill=black,inner sep=2pt,draw,label = {$p_{m^2-m-1}$}] (pm2z) at (anothermark2) {};
\node[circle,fill=black,inner sep=2pt,draw,label = below:{$p_{m-1}$}] (p2z) at (anothermark3) {};

\node[label = below:{$|e_{0}| = 1$}] (l1) at (edgemark4){};

\node[label = below:{$|e_{1}| = 2^{m + 2} $}] (l2) at (edgemark3){};

\node[label = {[rotate=20]above:$|e_{m-2}| = 2^{m^2 -m + 2} $}] (l2) at (edgemark2){};
\node[label = {[rotate=15]above:$|e_{m-1}| = 2^{m^2+2} $}] (l3) at (edgemark1){};

  \draw[loosely dotted, very thick] (mymark2) -- (mymark3);

\end{tikzpicture}
    \caption{Illustration of a graph $G$ and a point cloud $R$ defined in Example \ref{exa:tall_imbalanced_tree}}
    \label{fig:GraphConstructionOfExample}
\end{figure}

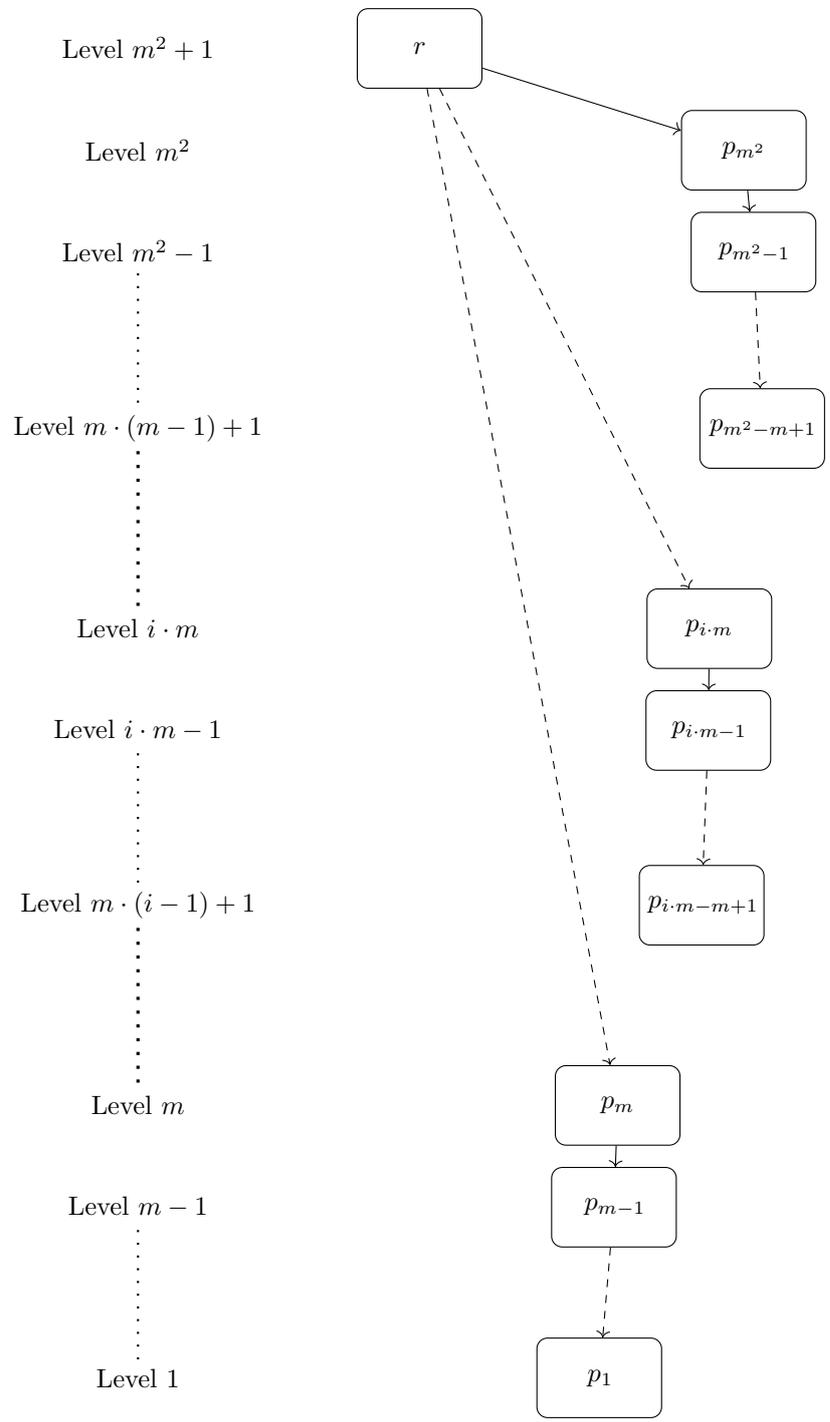
\begin{figure}
    \centering
    \begin{tikzpicture}[align=center, node distance = 1.35cm, scale = 0.5]

	\node (scalem2text) {Level $m^2 + 1$};

	\node[below of =scalem2text] (scalem1text) {Level $m^2 $};
	\node[below of =scalem1text] (scalemh01text) {Level $m^2 - 1$};
	\node[below = 50 pt of scalemh01text] (scalem11text) {Level $m \cdot (m-1) + 1$};

		\node[below = 60 pt of  scalem11text] (scalem1text2) {Level $i \cdot m $};
	\node[below of =scalem1text2] (scalemh01text2) {Level $i \cdot m - 1$};
	\node[below = 50 pt of scalemh01text2] (scalem11text22) {Level $m \cdot (i-1) + 1$};

		\node [blockzL,  right=50pt of scalem2text ] (noderoot) {$r$};

	\node [blockzL,  right=180pt of scalem1text] (node1) {$p_{m^2}$};
	\node [blockzL, right=175pt of scalemh01text] (node2) {$p_{m^2-1}$};
			\node [blockzL, right=160pt of scalem11text] (node4) {$p_{m^2 - m +1}$};

				\node [blockzL,  right=164pt of scalem1text2] (node1d) {$p_{i \cdot m}$};
	\node [blockzL, right=155pt of scalemh01text2] (node2d) {$p_{i \cdot m-1}$};
			\node [blockzL, right=140pt of scalem11text22] (node4d) {$p_{i \cdot m - m +1}$};

		\node[below = 60 pt of scalem11text22] (scalem11text2) {Level  $m $};

	\node[below of =scalem11text2] (scalemm1text) {Level $m-1$};
	\node[below = 50 pt of scalemm1text] (level1) {Level $1$};

			\node [blockzL, right=135pt of scalem11text2] (node5) {$p_{m}$};
				
			\node [blockzL, right=125pt of scalemm1text] (node6) {$p_{m-1}$};
			
			\node [blockzL, right=130pt of level1] (node7) {$p_{1}$};


	  \draw[->] (noderoot) -> (node1);
	  \draw[->,dashed] (noderoot) -> (node1d);
	  \draw[->] (node1) -> (node2);
	  \draw[->, dashed] (node2) -> (node4);
	  
	   \draw[->] (node1d) -> (node2d);
	  \draw[->, dashed] (node2d) -> (node4d);
	  \draw[->, dashed] (noderoot) -> (node5);
	  \draw[->] (node5) -> (node6);
	  \draw[->, dashed] (node6) -> (node7);
	 
	 \draw[loosely dotted, thick] (scalem11text22) -- (scalemh01text2);
	 \draw[loosely dotted, very thick] (scalem11text) -- (scalem1text2);
	  \draw[loosely dotted, thick] (scalemh01text) -- (scalem11text);
	  \draw[loosely dotted, very thick] (scalem11text22) -- (scalem11text2);
	  \draw[loosely dotted, thick] (level1) -- (scalemm1text);

\end{tikzpicture}
    \caption{Illustration of the a compressed cover tree $\T(R)$ defined in Example \ref{exa:tall_imbalanced_tree}}
    \label{fig:bad_cover_tree}
\end{figure}

\begin{cexa}
\label{cexa:dualtreecode}
In the notations of  Example\ref{exa:tall_imbalanced_tree}, $m$ is a parameter of $R$.
Build a compressed cover tree $\T(R)$ as in Figure \ref{fig:bad_cover_tree}. 
Set $Q = R$. 
First we show that Algorithm \ref{alg:cover_tree_k-nearest_dt_original} returns the trivial neighbor for every point for $\T(Q)=\T(R)$. 
\medskip

We start the simulation with the query node $r$ on the level $m^2+1$, which has the reference subset $R_{m^2+1} = \{r\}$. 
The query node and the reference set are at the same levels, so we run the query expansions (lines 9-11). 
The node $r$ has $p_{m^2}$ and $r$ as its children.
Hence the algorithm goes into the branches that have $p_{m^2}$ as the query node and into the branches that have $r$ as the query node.
\medskip

 Let us focus on all recursions having $p_{m^2}$ as the query node. In the first recursion involving the node $p_{m^2}$, we have $i = m^2+1, j = m^2$. Thus $j < i$ and we run reference expansions (lines 5-9). 
The node $r$ has two children at the level $m^2$, so $\C(R_i) = \{p_{m^2}, r\}$ . Since
$d(p_{m^2},p_{m^2}) = 0$ and $d(p_{m^2}, r) = 2^{m^2+1}$ on line 7, we have:
$$R_{m^2}  = \{r \in \C(R_i) \mid d(q_j,r) \leq 2^{m^2 + 1} + 2^{m^2+2} \} = \{p_{m^2}, r\}.$$
Similarly, for $i = m^2, j = m^2-1, q_j = p_{m^2}$, we have $\C(R_i) =  \{p_{m^2}, p_{m^2-1}, r\}$ and since $d(p_{m^2 - 1}, p_{m^2}) = 2^{m^2}$ and $d(r,p_{m^2}) = 2^{m^2 + 1}$ we have:
$$R_{m^2-1} =  \{r \in \C(R_i) \mid d(q_j,r) \leq 2^{m^2} + 2^{m^{2} + 1} \} = \{p_{m^2}, p_{m^2-1}\}. $$
For $i = m^2-1, j = m^2-2, q_j = p_{m^2}$, we have $\C(R_i) =  \{p_{m^2}, p_{m^2-1}, p_{m^2-2}\}$.
Since $d(p_{m^2 - 1}, p_{m^2}) = 2^{m^2}$ and $d(p_{m^2-2},p_{m^2}) = 2^{m^2} + 2^{m^2-1}$, we have:
$$R_{m^2-2} =  \{r \in \C(R_i) \mid d(q_j,r) \leq 2^{m^2 - 1} + 2^{m^{2}} \} = \{p_{m^2}, p_{m^2-1}, p_{m^2-2}\}.$$
Finally, for $i = m^2-2, j = m^2-3, q_j = p_{m^2}$, we have $\C(R_i) =  \{p_{m^2}, p_{m^2-1},p_{m^2-2}, p_{m^3-3}\}$ and 
$d(p_{m^2}, p_{m^3-3}) = 2^{m^2} + 2^{m^2-1} + 2^{m^2-2}$. The previous inequalities imply that
$$R_{m^2-3} =  \{r \in \C(R_i) \mid d(q_j,r) \leq 2^{m^2-2} + 2^{m^{2}-1} \} = \{p_{m^2}\}.$$
Since  $R_{m^2 -3} = \{p_{m^2}\}$, the nearest neighbor of $p_{m^2}$ will be chosen to be $p_{m^2}$. 
The same argument can be repeated for all $p_t \in R$. 
It follows that Algorithm \ref{alg:cover_tree_k-nearest_dt_original} finds trivial nearest neighbor for every point $p_t \in R$. 
\bs
\end{cexa}

\begin{exa}
\label{exa:dualtreeprooffaultyconst}
To avoid the issue of finding trivial nearest neighbors in Counterexample~\ref{cexa:dualtreecode}, we will modify Example~\ref{exa:tall_imbalanced_tree}. 
For any integer $m > 100$, let $G$ be a metric graph that has $2$ vertices $r$ and $q$ and $2m-1$ edges $\{e_{0}\} \cup \{e_{1}, ..., e_{m-1},h_{1},..., h_{m-1}\}$.
The edge-lengths are $|e_i| = 2^{i \cdot m+2}$ and $|h_i| = 2^{i \cdot m+2}$  for all $i \in [1,m]$, finally $|e_{0}| = 1$.
\medskip

For every $i \in \{1, ..., m^2\}$, if $i$ is divisible by $m$, we set $q_{i}$ to be the middle point of $e_{i / m}$ and $r_{i}$ to be the middle point of $h_{i / m}$. 
For every other $i$ not divisible by $m$, we define $q_i$ to be the middle point of segment $(q_{i+1}, q)$ and $r_i$ to be the middle point of segment $(r_{i+1}, q)$.
\medskip

Let $d$ be the shortest path metric on the graph $G$. 
Then $d(q_i,r) = d(q_i, q) + 1 = 2^{i+1} + 1$ , $d(q_i, r_j) = 2^{i+1} + 2^{j+1}$ and $d(q,r) = 1$. 
Let $R = \{r, r_{m^2}, r_{m^2-1}, ..., r_1\}$ and let $Q = \{r,  q_{m^2}, q_{m^2-1}, ..., q_1 \}$. 
Let compressed cover trees $\T(Q),\T(R)$ have the same structure as the compressed cover tree $\T(R)$ in Example~\ref{exa:tall_imbalanced_tree}.
\bs
\end{exa}

\begin{rem}
\label{rem:kappa}
\cite[Definition~3.1]{ram2009linear} introduced the degree of bichromaticity $\kappa$ as follows. 
\medskip

"\textbf{Definition 3.1} Let $S$ and $T$ be cover trees built on query set $Q$ and reference set $R$ respectively.
Consider a dual-tree algorithm with the property that the scales of $S$ and $T$ are kept as close as
possible – i.e. the tree with the larger scale is always descended. Then, the degree of bichromaticity
$\kappa$ of the query-reference pair $(Q, R)$ is the maximum number of descends in $S$ between any two
descends in $T$".
\medskip

There are at least two different interpretation of this definition. Our best interpretation is that $\kappa$ the maximal number of levels in $T$ containing at least one node between any two consecutive levels of $S$. However, if $q$ is a leaf node of $S$, but there are still many levels between level of $q$ and $l_{\min}(T)$, it is not clear from the definition if $\kappa$ includes these levels.
\medskip

\cite[page~3284]{curtin2015plug} pointed out that "
Our results are similar to that of Ram et al. (2009a), but those results depend on a
quantity called the constant of bichromaticity, denoted $\kappa$, which has unclear relation to
cover tree imbalance. The dependence on $\kappa$ is given as $c_q^{4\kappa}$ , which is not a good bound,
especially because $\kappa$ may be much greater than 1 in the bichromatic case (where $S_q = S_r$)".
\end{rem}

To keep track of the indices $i,j$ the function call FindAllNN($q_j$, $R_i$) will be expressed as FindAllNN($i,j,q_j, R_i$) in Counterexample \ref{cexa:dualtreecode}.

\begin{cexa}
\label{cexa:dualtreeproof}
We will now show that in addition to the problems in the pseudocode the proof of \cite[Theorem~3.1]{ram2009linear} is incorrect.  Let us consider the following quote from its proof. 
\medskip

"\textbf{Theorem 3.1} Given a reference set $R$ of size $N$ and expansion constant $c_R$, a query set $Q$ of size
$O(N)$ and expansion constant $c_Q$, and bounded degree of bichromaticity $\kappa$ of the $(Q,R)$ pair, the
FindAllNN subroutine of Algorithm 1 computes the nearest neighbor in $R$ of each point in $Q$ in
$O(c^{12}_Rc^{4\kappa}_QN)$ time
\medskip

[\emph{Partial proof:}]
Since at any level of recursion, the size of $R$ [Corresponding to $\C(R_i)$ in Algorithm \ref{alg:cover_tree_k-nearest_dt_original} ] is bounded by $c_R^4\max_i{R_i}$ (width bound), and the
maximum depth of any point in the explicit tree is $O(c^2_R \log(N))$ (depth bound), the number of nodes
encountered in Line 6 is $O(c_R^{6} \max_i |R_i|\log(N))$. Since the traversal down the query tree causes
duplication, and the duplication of any reference node is upper bounded by $c_Q^{4\kappa}$ , Line 6 [corresponds to line 8 in Algorithm \ref{alg:cover_tree_k-nearest_dt_original}] takes at most
$c^{4\kappa}_Qc^6_R\max_i|R_i|\log(N)$ in the whole algorithm. 
"
\medskip

The above arguments claimed the algorithm runs Line 8 at most this number of times:
\begin{equation}
\label{eqa:AuthorClaimingDualTreeKNN}
\#(\text{Line 8}) \leq \max_{p \in R}D(p) \cdot \max_i \C(R_i)  \cdot (\text{number of duplications}).
\end{equation}
Let $X, \T(R), \T(Q), R, Q$ be as in Example \ref{exa:dualtreeprooffaultyconst} for some parameter $m$. We will consider the simulation of Algorithm \ref{alg:cover_tree_k-nearest_dt_original} on pair $(\T(Q),\T(R))$. We note first Lemma \ref{lem:tall_imbalanced_tree_explicit_depth} applied on $\T(R)$ provides $\max_{p \in R}D(p) \leq 2m+1$
Similarly as in \cite[Counterexample~3.7]{elkin2021new}, a contradiction will be achieved by showing that $R_i$ and a set of its children $\C(R_i)$ will have constant size bound on any recursion $(i,j)$ of Algorithm \ref{alg:cover_tree_k-nearest_dt_original}.
\medskip

Since $\T(R)$ contains at most one children on every level $i$ we have $|\C(R_i)| \leq |R_i| + 1$ for any recursion of FindAllNN algorithm. For any $i > m^2$ denote $r_i$ and $q_i$ to be $r$.
 Note first that since $l(q_t) = t$ for any $t \in [1,m^2]$, then $q_t$ is recursed into from FindAllNN($t+1,t+1,p,R_i$), where $p$ is parent node of $q_t$. Therefore it follows that $t \geq i + 1 $ in any stage of the recursion. 
 Let us prove that for any $i \in [1,m^2+1]$ following two claims hold: (1) Function FindAllNN($i$ , $j = i-1$, $q_t$, $R_i$) is called for all $t \geq i + 1$ and (2) We have  $R_i = \{r_{i+1}, r_{i}, r\}$ in this stage of the algorithm. The claim will be proved by induction on $i$. Let us first prove case $i = 2m+1$.  Note that  Algorithm \ref{alg:cover_tree_k-nearest_dt_original} is originally launched from FindAllNN($2m+1,2m+1,r, \{r\}$), 
therefore the first claim holds. Second claim holds trivially since $r_{2m+2} = r$ and $r_{2m+1} = r$.
\medskip
 
 Assume now that the claim holds for some $i$, let us show that the claim will always hold for $i-1$. Assume that FindAllNN($i , j = i-1, q_{t}, R_i)$ was called for some $t \geq i+1$. Since $j = i-1$, we perform a reference expansion (lines 5-9). By line $6$ and induction assumption we have $\C(R_i) = \{r, r_{i+1}, r_{i}, r_{i-1}\}$. Assume first that $q_t = r$.
Recall that for any $u \in [1,m^2]$ we have $d(r,r_{u}) \geq 2^{u+1} $. It follows that
$$R_{i-1} = \{ r' \in \C(R_i) \mid d(r, r') \leq  2^{i} + 2^{i+1}\} = \{r,r_{i},r_{i-1}\}.$$
Let us now consider case $q_t \neq r$. We have $d(r,q_t) = 2^{t+1}$ and $d(q_t, r_{u+1}) = 2^{t+1} + 2^{u+2}$ for any $u \in [1,m^2+1]$. Therefore
$$R_{i-1} = \{ r' \in C_{i-1} \mid d(q_t, r') \leq d(q_t,r) + 2^i + 2^{i+1} \leq 2^{t+1} + 2^{i} + 2^{i+1}\}.$$
It follows that $R_{i-1} = \{r, r_{i}, r_{i-1}\}$. In both cases we proceed to line $8$ where we launch FindAllNN$(i-1 , i-1, q_{t}, R_{i-1})$. After proceeding into the recursion we have $j = i$ and therefore query-expansion (lines 9-11) will be performed.  Note that $q_{t}$ was chosen so that $t \geq i+1$. Since every $q_{t-1}$ is either a child of $r$ or $q_{t}$ it follows that FindAllNN$(i-1 , i-2, q_{t'}, R_{i-1})$ will be called for all $t' \geq t-1 \geq i$. Then condition (2) of the induction claim holds as well. 
\medskip

It remains to show that Algorithm \ref{alg:cover_tree_k-nearest_dt_original} $(q, R_i = \{r\})$ has $O(m^4)$ low bound on the number of times reference expansions (lines 5-9) are performed.
Let $\xi$ be the number of times Algorithm \ref{alg:cover_tree_k-nearest_dt_original} performs 
reference expansions.  For every $q' \in Q$ denote $\xi(q')$ to be the total number of reference expansions performed for $q'$. Recall that any query node $q'$ is introduced in the query expansion (lines 9-11) for parameters $(i = u+1, j = u+1, p, R_i)$, where $p$ is the parent node of $q'$. Since $R_i$ is non empty for all levels $[1,u]$ we have $\xi(q_u) \geq u - 1$ for all $u$. Thus 
$$\xi = \sum_{q' \in Q} \xi(q') \geq \sum^{m^2+1}_{u = 2} u-2 = O(m^4).$$
There are different interpretations for the number of duplications. Note that the query tree $\T(Q)$ has exactly one new child on every level and that trees $\T(Q)$ and $\T(R)$ contain exactly the same levels. By using the definitions the number of duplications should be $2$. However, since there can be other interpretations for the number of duplications, we make a rough estimate that the number of duplication is upper bounded by the number of nodes in query tree $O(m^2)$. By using Inequality (\ref{eqa:AuthorClaimingDualTreeKNN}), we obtain the following contradiction: 
$$O(m^4) = \xi \leq  \max_{p \in R}D(p) \cdot (\max_i \C(R_i)) \cdot (\text{number of duplications}) \leq (2m+1) \cdot 4 \cdot m^2 \leq O(m^3).$$
\end{cexa}

\section{Detailed proofs of all auxiliary lemmas from sections~\ref{sec:cover_tree} and~\ref{sec:distinctive_descendant_set}}
\label{sec:proofs}

All lemmas from sections~\ref{sec:cover_tree} and~\ref{sec:distinctive_descendant_set} are re-stated below (with the same numbers) for convenience.
 
\lemexpansionconstantproperty*
\begin{proof}
The proof follows trivially from Definition~\ref{dfn:expansion_constant}.
\end{proof}

\lemcompressedcovertreedescendantbound*
\begin{proof}
Let $(w_0, ..., w_m)$ be a subpath of the node-to-root path for $w_0 = q$ , $w_{m-1} = u$, $w_m = p$. 
Then $d(w_{i}, w_{i+1}) \leq 2^{l(w_i) + 1}$ for any $i$. 
The first required inequality follows from the triangle inequality below: 
$$    d(p,q) \leq \sum^{m-1}_{j = 0}d(w_j, w_{j+1})  \leq \sum^{m-1}_{j = 0}2^{l(w_j) + 1} \leq   \sum_{t = l_{\min}}^{l(u) + 1}2^{t}\leq  2^{l(u) + 2} $$
Finally, $l(u) \leq l(p) - 1$ implies that $d(p,q) \leq 2^{l(p)+1}$.
\end{proof}

\lempacking*
\begin{proof}
Assume that $d(p,q) > t$ for any point $q \in S$. 
Then $\bar{B}(p, t) \cap S = \emptyset$ and the lemma holds trivially. Otherwise $\bar{B}(p, t) \cap S$ is non-empty. 
By Definition~\ref{dfn:expansion_constant} of a minimized expansion constant, for any $\epsilon > 0$, we can always find a set $A$ such that $S \subseteq A \subseteq X$ and
\begin{ceqn}
\begin{equation}
\label{eqa:dfn_of_exp_constant}
|B(q,2s) \cap A| \leq (c_m(S) + \epsilon) \cdot | B(q,s) \cap A|
\end{equation}
\end{ceqn}
for any $q \in A$ and $s \in \R$. Note that for any $u \in \bar{B}(p,t) \cap S$ we have $\bar{B}(u, \frac{\delta}{2}) \subseteq  \bar{B}(u, t + \frac{\delta}{2})$. 
Therefore, for any point $q \in \bar{B}(p,t) \cap S$, we get
$$\bigcup_{u \in \bar{B}(p, t) \cap S}\bar{B}(u, \frac{\delta}{2}) \subseteq  \bar{B}(p,t + \frac{\delta}{2}) \subseteq \bar{B}(q, 2t +  \frac{\delta}{2})$$
Since all the points of $S$ were separated by $\delta$, we have
\begin{equation*}
\label{eqa:packing_zero}
| \bar{B}(p, t) \cap S| \cdot \min_{u \in \bar{B}(p, t) \cap S}| \bar{B}(u, \frac{\delta}{2}) \cap A|  \leq \sum_{u \in \bar{B}(p, t) \cap S} | \bar{B}(u, \frac{\delta}{2}) \cap A |\leq | \bar{B}(q, 2t + \frac{\delta}{2}) \cap A |
\end{equation*}
In particular, by setting $q = \mathrm{argmin}_{a \in S \cap \bar{B}(p,t)}| \bar{B}(a, \frac{\delta}{2})|$, we get:
\begin{ceqn}
\begin{equation}
\label{eqa:packing_one}
| \bar{B}(p, t) \cap S| \cdot |  \bar{B}(q, \frac{\delta}{2}) \cap A| \leq | \bar{B}(q, 2t + \frac{\delta}{2}) \cap A |
\end{equation}
\end{ceqn}
Inequality~(\ref{eqa:dfn_of_exp_constant}) applied $\mu$ times 
for the radii $s_i = \dfrac{2t + \frac{\delta}{2}}{2^{i}} $, $i = 1,...,\mu$, implies that:
\begin{ceqn}
\begin{equation} 
\label{eqa:packing_two}
 |\bar{B}(q,2t + \frac{\delta}{2}) \cap A| \leq (c_m(S) + \epsilon)^{\mu}|\bar{B}(q, \dfrac{2t + \frac{\delta}{2}}{2^{ \mu}}) \cap A |  \leq  (c_m(S) + \epsilon)^{ \mu}|\bar{B}(q, \frac{\delta}{2}) \cap A|. 
\end{equation}
\end{ceqn}
By combining inequalities (\ref{eqa:packing_one}) and (\ref{eqa:packing_two}), we get
$$| \bar{B}(p,t) \cap S  |\leq \dfrac{|\bar{B}(q, 2t + \frac{\delta}{2})  \cap A |}{|\bar{B}(q, \frac{\delta}{2})  \cap A|} \leq (c_m(S)+\epsilon)^{\mu}.$$
The required inequality is obtained by letting $\epsilon \rightarrow 0$.\end{proof}

\lemcompressedcovertreewidthbound*
\begin{proof}
By the covering condition in Definition~\ref{dfn:cover_tree_compressed} for a compressed cover tree $\T(R)$, any child $q$ of a node $p$ located on the level $i$ has $d(q,p) \leq 2^{i+1}$. Thus the number of children of the node $p$ at the level $i$ at most $|\bar{B}(p,2^{i+1})|$. 
\medskip

The separation condition in Definition~\ref{dfn:cover_tree_compressed} implies that the cover set $C_i$ is $2^{i}$-sparse in $X$. 
We can now apply Lemma \ref{lem:packing}
for $t = 2^{i+1}$ and $\delta = 2^{i}$. Since $4 \cdot \frac{t}{\delta} + 1 \leq 4 \cdot 2 + 1 \leq 2^4$, 
we get $|\bar{B}(q,2^{i+1}) \cap C_i|  \leq (c_m(C_{i}))^4$. 
\medskip

The lemma follows by noting that Lemma \ref{lem:expansion_constant_property} implies $(c_m(C_{i}))^4  \leq (c_m(R))^4 $.
\end{proof}

\lemdepthbound*
\begin{proof}
$|H(\T(R))|\leq l_{\max} - l_{\min}+1$ by Definition~\ref{dfn:depth}. 
We estimate $l_{\max} - l_{\min}$ as follows.
\medskip
 
Let $p \in R$ be a point with $\rad(R) = \max\limits_{q \in R}d(p,q)$. 
Then $R$ is covered by the closed ball $\bar B(p; \rad(R))$.
Hence the cover set $C_i$ at the level $i=\log_2(\rad(R))$ consists of a single point $p$. 
The separation condition in Definition~\ref{dfn:cover_tree_compressed} implies that 
 $l_{\max}\leq \log_2(d_{\max}(R))$.
\medskip

Since any distinct points $p,q \in R$ have a distance $d(p,q)\geq d_{\min}(R)$, the covering condition implies that no new points can enter the cover set $C_i$ at the level $i=[\log_2(d_{\min}(R))]$, so $l_{\min}\geq\log_2(d_{\min}(R))$.
So
$|H(\T(R))| \leq 1+l_{\max} - l_{\min} \leq 
1+\log_2\dfrac{\rad(R)}{d_{\min}(R)}$.
\end{proof}


\lemdistinctivedescendantsprecompute*
\begin{proof}
By Lemma \ref{lem:number_of_explicit_levels} we have $\sum_{p \in R}|\mathcal{E}(p,\T(R))| \leq 2 \cdot |R|.$ Since CountDistinctiveDescendants is called once for every any combination $p \in R$ and $i \in \mathcal{E}(p,\T(R))$ it follows that the time complexity of Algorithm~\ref{alg:cover_tree_distinctive_descendants} is $O(R)$.
\end{proof}

\lemtimelambdapoint*
\begin{proof}
To compute $\lambda = \lambda_k(q,R)$ in Algorithm \ref{alg:lambda}, we need to select $k$ elements by using an ordering from the set $C$.
This selection takes takes at most $|C| \cdot \log(k)$ time by using a binary heap data structure \cite[section~6.5]{Cormen1990}.
\end{proof}

\lemseparation*
\begin{proof}
Without loss of generality assume that $l(p) \geq l(q)$. If $q$ is not a descendant of $p$, the lemma holds trivially due to $\Desc(q) \cap \Desc(p) = \emptyset$. 
\medskip

If $q$ is a descendant of $p$, then $l(q) \leq l(p) - 1$ and therefore $q \in V_i(p)$. 
It follows that
$\Sd_i(p , \T(R)) \cap \Desc(q) = \emptyset$ and therefore
$$\Sd_{i}(p, \T(R)) \cap \Sd_{i}(q, \T(R))   \subseteq \Sd_{i}(p, \T(R)) \cap \Desc(q) = \emptyset.$$ 
\end{proof}

\lemsum*
\begin{proof}
The proof follows from Lemma \ref{lem:separation} by noting that any $p \in V \subseteq C$ has $l(p) \geq i$.
\end{proof}

\lemdistinctivedescendantchildlevel*
\begin{proof}
Let $w \in \Sd_i(p)$ be an arbitrary node satisfying $w \neq p$. 
Let $s$ be the node-to-root path of $w$. 
The inclusion $\Sd_i(p) \subseteq \Desc(p)$ implies that $w \in \Desc(p)$. 
\medskip

Let $a \in \Child(p) \setminus \{p\}$ be a child on the path $s$. 
If $l(a) \geq i$ then $a \in V_i(p)$. 
Note that $w \in \Desc(a)$.
Therefore $w \notin \Sd_i(p)$, which is a contradiction.
Hence $l(a) < i$. 
\end{proof}

\lemchildsetequivalence*
\begin{proof}
Let $a \in \bigcup_{p \in C}\Sd_{i-1}(p, \T(R))$ be an arbitrary node. Then there is $v \in C$ having $ a \in \Sd_{i-1}(v, \T(R))$. 
Let $w \in R_i$ be an ancestor of $v$ that has the lowest level among all ancestors from in $R_i$. 
The node $v$ has always an ancestor in $R_i$.
Indeed, by the choice of $v \in C$, either $v \in R_i$ or $v$ has a parent in $R_i$. Hence $a \in \Desc(w)$. 
Since $w$ has a minimal level among all ancestors of $v$, we conclude that $a \notin  \bigcup_{u \in V_i(w)}\Desc(u)$.
Then
$$a \in \Sd_i(w, \T(R)) \subseteq \bigcup_{p \in R_i}\Sd_i(p , \T(R)).$$
Assume now that $a \in \bigcup\limits_{p \in R_i}\Sd_{i}(p, \T(R))$. 
Then $a \in \Sd_{i}(v, \T(R))$ for some $w \in R_i$.
Let $w$ have no children at the level $i-1$. 
Then $V_i(w) = V_{i-1}(w)$ and 
$$a \in \Sd_{i-1}(w, \T(R)) \subseteq \bigcup_{p \in \C(R_i)}\Sd_{i-1}(p ,\T(R)).$$ 
Assume now that $w$ has children at the level $i-1$.
If there exists $b \in \Child(w)$ for which $a \in \Desc(b)$. 
Since $V_{i-1}(b) = \emptyset$, we conclude that 
$$a \in \Sd_{i-1}(b, \T(R)) \subseteq \bigcup_{p \in C}\Sd_{i-1}(p ,\T(R)).$$ 
To prove the converse inclusion assume that $a \notin \Desc(b)$ for all $b \in \Child(w)$ with $l(b) = i-1$.
Then $a \in \Desc(w)$ and $a \notin \Desc(b')$ for any $b' \in V_i(w)$. Then $a \in \Sd_{i-1}(w, \T(R))$ and the proof finishes:
$$\bigcup_{p \in R_i}\Sd_{i}(p, \T(R))   \subseteq   \bigcup_{p \in \C(R_i)}\Sd_{i-1}(p, \T(R)).  $$
\end{proof}

\lembetapoint*
\begin{proof}
We show that $R$ has a point $\beta$ among the first $k$ nearest neighbors of $q$ such that
$$\beta \in \bigcup_{p \in C}\Sd_i(p, \T(R)) \setminus \bigcup_{p \in N(q, \lambda) \setminus \{\lambda\} }\Sd_i(p, \T(R)).$$
Lemma~\ref{lem:sum} and Definition~\ref{dfn:lambda-point} imply that
$$ | \bigcup_{p \in N(q, \lambda) \setminus \{\lambda\} }\Sd_i(p, \T(R)) |= \sum_{p \in N(q, \lambda) \setminus \{\lambda\} }| \Sd_i(p, \T(R)) | < k.$$ 
Since $\cup_{p \in C}s_i(p, \T(R))$ contains all $k$-nearest neighbors of $q$, a required point $\beta\in R$ exists. 
\medskip
 
Let us now show that $\beta$ satisfies $d(q,\lambda) \leq  d(q,\beta) + 2^{i+1}$.
Let $\gamma \in C \setminus N(q,\lambda) \cup \{\lambda \}$ be an ancestor of $\beta$. 
Since $\gamma \notin N(q,\lambda) \setminus \{\lambda\}$, we get $d(\gamma, q) \geq d(q, \lambda)$.  
The triangle inequality says that $ d(q, \gamma) \leq d(q,\beta) + d(\gamma ,\beta) $.
Finally, Lemma~\ref{lem:compressed_cover_tree_descendant_bound} implies that $d(\gamma, \beta) \leq 2^{i+1}$. 
Then
$$d(q,\lambda) \leq d(q, \gamma) \leq d(q,\beta) + d(\gamma ,\beta) \leq d(q,\beta) +2^{i+1}$$
So $\beta$ is a desired $k$-nearest neighbor satisfying the condition $d(q,\lambda) \leq  d(q,\beta) + 2^{i+1}$.
\end{proof}

\end{document}